\newcommand{\pushright}[1]{\ifmeasuring@#1\else\omit\hfill$\displaystyle#1$\fi\ignorespaces}
\newcommand{\pushleft}[1]{\ifmeasuring@#1\else\omit$\displaystyle#1$\hfill\fi\ignorespaces}
\renewcommand\onecolumngrid{
\do@columngrid{one}{\@ne}%
\def\set@footnotewidth{\onecolumngrid}
\def\footnoterule{\kern-6pt\hrule width 1.5in\kern6pt}%
}
\renewcommand\twocolumngrid{
        \def\footnoterule{
        \dimen@\skip\footins\divide\dimen@\thr@@
        \kern-\dimen@\hrule width.5in\kern\dimen@}
        \do@columngrid{mlt}{\tw@}
}%
\newcolumntype{M}[1]{>{\centering\arraybackslash}m{#1}}
\newcolumntype{N}{@{}m{0pt}@{}}
\newtheorem{definition}{Definition}
\newtheorem{theorem}{Theorem}
\newtheorem{theorem-main}{Theorem}
\newtheorem{corollary}{Corollary}
\newtheorem{lemma}{Lemma}
\newcommand{\ketbra}[2]{|#1\rangle\langle #2|}
\newcommand{\projP}{\mathsf{P}}
\newcommand{\htheta}{\hat{\theta}}
\newcommand{\hrho}{\hat{\rho}}
\newcommand{\tr}{{\mathrm{Tr}}}
\DeclareMathOperator{\Tr}{Tr}
\renewcommand{\v}[1]{\ensuremath{\mathbf{#1}}} 
\newcommand{\gv}[1]{\ensuremath{\text{\boldmath$ #1 $}}}
\newcommand{\abs}[1]{\left| #1 \right|}
\newcommand{\norm}[1]{\left\| #1 \right\|} 
\newcommand{\trace}{\mathrm{Tr}}
\newcommand{\tq}{{\tilde{q}}}
\newcommand{\tj}{{\tilde{j}}}
\newcommand{\tk}{{\tilde{k}}}
\newcommand{\mS}{{\mathcal{S}}}
\newcommand{\mH}{{\mathcal{H}}}
\newcommand{\mU}{{\mathcal{U}}}
\newcommand{\mM}{{\mathcal{M}}}
\newcommand{\tJ}{{\widetilde{J}}}
\newcommand{\tM}{{\tilde{M}}}
\newcommand{\tL}{{\widetilde L}}
\newcommand{\tK}{{\widetilde K}}
\newcommand{\mE}{{\mathcal{E}}}
\newcommand{\id}{{\mathbbm{1}}}
\newcommand{\vv}{{\gv{v}}}
\newcommand{\vX}{{\v{X}}}
\newcommand{\bR}{{\mathbb{R}}}
\newcommand{\bZ}{{\mathbb{Z}}}
\newcommand{\bE}{{\mathbb{E}}}
\renewcommand{\Re}{{\mathrm{Re}}}
\renewcommand{\Im}{{\mathrm{Im}}}
\newcommand{\opt}{{\mathrm{opt}}}
\renewcommand{\epsilon}{\varepsilon}
\newcommand{\appropto}{\mathrel{\vcenter{
  \offinterlineskip\halign{\hfil$##$\cr
    \propto\cr\noalign{\kern2pt}\sim\cr\noalign{\kern-2pt}}}}}
\definecolor{fluorescentpink}{rgb}{1.0, 0.08, 0.58}
\let\baraccent=\= 
\renewcommand{\=}[1]{\stackrel{#1}{=}} 
\newcommand{\thmref}[1]{\hyperref[#1]{Theorem~\ref{#1}}}
\newcommand{\lemmaref}[1]{\hyperref[#1]{Lemma~\ref{#1}}}
\newcommand{\propref}[1]{\hyperref[#1]{Proposition~\ref{#1}}}
\newcommand{\corollaryref}[1]{\hyperref[#1]{Corollary~\ref{#1}}}
\newcommand{\defref}[1]{\hyperref[#1]{Definition~\ref{#1}}}
\newcommand{\figref}[1]{\hyperref[#1]{Fig.~\ref{#1}}}
\newcommand{\tabref}[1]{\hyperref[#1]{Table~\ref{#1}}}
\newcommand{\figaref}[1]{\hyperref[#1]{Fig.~\ref{#1}(a)}}
\newcommand{\figbref}[1]{\hyperref[#1]{Fig.~\ref{#1}(b)}}
\newcommand{\figcref}[1]{\hyperref[#1]{Fig.~\ref{#1}(c)}}
\newcommand{\figdref}[1]{\hyperref[#1]{Fig.~\ref{#1}(d)}}
\newcommand{\figeref}[1]{\hyperref[#1]{Fig.~\ref{#1}(e)}}
\newcommand{\figfref}[1]{\hyperref[#1]{Fig.~\ref{#1}(f)}}
\renewcommand{\eqref}[1]{\hyperref[#1]{Eq.~(\ref{#1})}}
\newcommand{\secref}[1]{\hyperref[#1]{Sec.~\ref{#1}}}
\newcommand{\eqsref}[2]{\hyperref[#1]{Eqs.~(\ref{#1})-(\ref{#2})}}
\newcommand{\appref}[1]{\hyperref[#1]{Appx.~\ref{#1}}}
\begin{document}

\title{Randomized measurements for multi-parameter quantum metrology}

\author{Sisi Zhou}\email{sisi.zhou26@gmail.com}
\affiliation{Perimeter Institute for Theoretical Physics, Waterloo, Ontario N2L 2Y5, Canada}
\affiliation{Department of Physics and Astronomy, Department of Applied Mathematics and Institute for Quantum Computing, University of Waterloo, Ontario N2L 2Y5, Canada}

\author{Senrui Chen}\email{csenrui@gmail.com}
\affiliation{Pritzker School of Molecular Engineering, The University of Chicago, Chicago, Illinois 60637, USA}
\affiliation{Institute for
Quantum Information and Matter, California Institute of Technology, Pasadena, CA 91125, USA}

\date{\today}

\begin{abstract}
The optimal quantum measurements for estimating different unknown parameters in a parameterized quantum state are usually incompatible with each other. Traditional approaches to addressing the measurement incompatibility issue, such as the Holevo Cram\'{e}r--Rao bound, 
suffer from multiple difficulties towards practical applicability, as the optimal measurement strategies are usually state-dependent, difficult to implement and also take complex analyses to determine.
Here we study randomized measurements as a new approach for multi-parameter quantum metrology. We show quantum measurements on single copies of quantum states given by $3$-designs perform near-optimally when estimating an arbitrary number of parameters in pure states and more generally, {approximately low-rank well-conditioned states}, whose metrological information is largely concentrated in a low-dimensional subspace. The near-optimality is also shown in estimating the maximal number of parameters for three types of mixed states that are well-conditioned on their supports. Examples of fidelity estimation and Hamiltonian estimation are explicitly provided to demonstrate the power and limitation of randomized measurements in multi-parameter quantum metrology. 
\end{abstract}

\maketitle

\begin{figure*}[t]
    \centering
    \includegraphics[width=1.4\columnwidth]{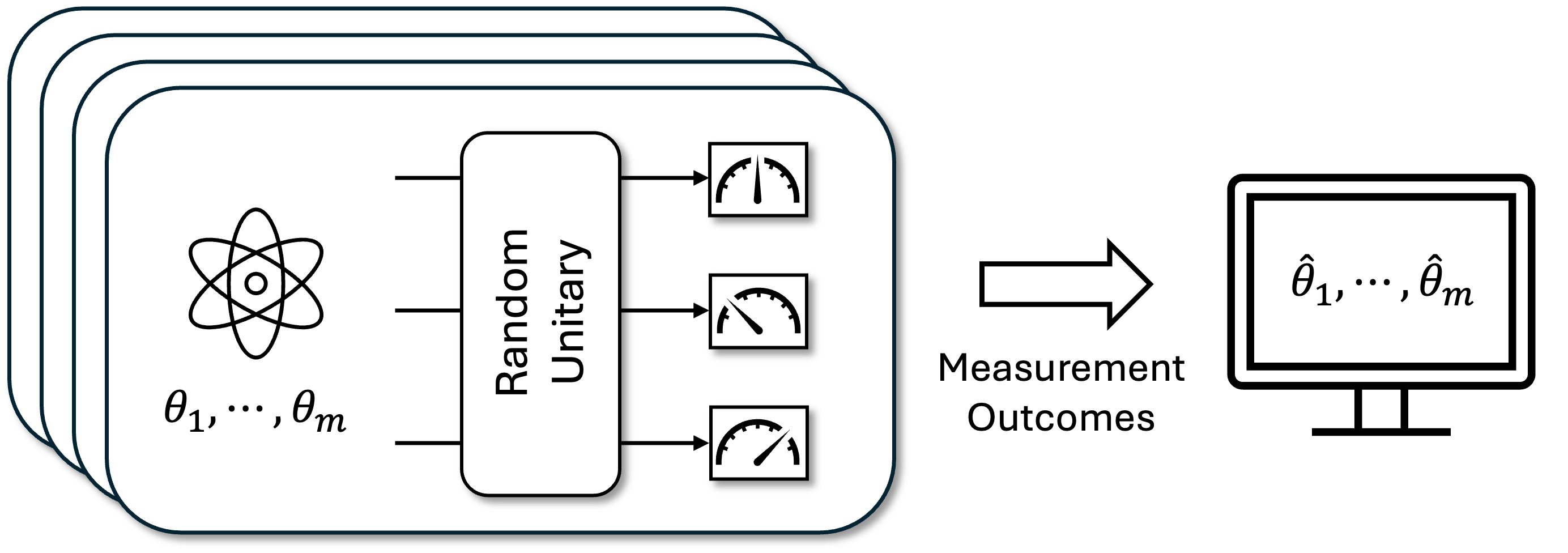}
    \caption{Schematics of randomized measurements for multi-parameter quantum metrology. 
    The measurement is taken on \emph{individual} copies of the same parameterized state $\rho_\theta:=\rho_{\theta_1,\cdots,\theta_m}$. The goal is to obtain estimates $\hat\theta=(\hat\theta_1,\cdots,\hat\theta_m)$ with the highest precision as the number of copies approaches infinity.
    In all of our theorems showing the near-optimality of randomized measurements, it suffices to take the measurement to be a $3$-design. }
\end{figure*}
\begin{table*}[t]
    \centering
    \begin{tabular}{l l l}
            \toprule
              & Traditional approaches & $3$-design measurements \\
             \midrule
             \makecell[l]{Computational complexity \\ (for finding measurements)} & Usually at least polynomial in system dimension & None (state-independent) \\
             Implementation complexity & Usually unknown (polynomial in system dimension) & Polynomial in qubit number 
             \\
             \bottomrule
        \end{tabular}   
    \caption{Comparison of traditional (individual) measurement protocols and our randomized measurement protocol for multi-parameter quantum metrology. See \appref{app:review} for detailed explanations and extended discussions on collective measurements. }
    \label{tab:comparison}
\end{table*}
\begin{table*}[t]
    \centering
    \begin{tabular}{l l l}
            \toprule
             Families of Parameterized States & Randomized Measurements & Examples\\
             \midrule
             Pure states (\secref{sec:pure})& Near-optimal (\thmref{thm:pure}) & \secref{sec:fid_pure}, \secref{sec:ham_noiseless}\\
             Approximately low-rank well-conditioned states (\secref{sec:low-rank}) & Near-optimal (\thmref{thm:unitary}, \thmref{thm:low-rank}) & \secref{sec:ham_noisy}\\
             Full-parameter rank-$r$ well-conditioned states (\secref{sec:full}) & Weakly near-optimal (\thmref{thm:full}) & - \\
             General mixed states & Not always (weakly) near-optimal & \secref{sec:informal}, \secref{sec:no-go}\\
             \bottomrule
        \end{tabular}   
    \caption{Summary of main results.}
    \label{tab:summary}
\end{table*}

\section{Introduction}

Quantum metrology is the study of parameter estimation in quantum systems~\cite{giovannetti2011advances,degen2017quantum,pezze2018quantum,pirandola2018advances}. It has extensive applications in optical interferometry~\cite{caves1981quantum,yurke19862,ligo2011gravitational,ligo2013enhanced}, frequency detection~\cite{wineland1992spin,bollinger1996optimal,leibfried2004toward,taylor2008high,zhou2020quantum,rosenband2008frequency,appel2009mesoscopic,ludlow2015optical,kaubruegger2021quantum,marciniak2022optimal} and quantum imaging~\cite{le2013optical,lemos2014quantum,tsang2016quantum,abobeih2019atomic}, etc. In single-parameter estimation, the quantum Cram\'{e}r--Rao bound (QCRB) characterizes the ultimate precision limit of estimating one parameter in quantum states through an information-theoretic quantity called the quantum Fisher information (QFI)~\cite{helstrom1967minimum,helstrom1968minimum,helstrom1976quantum,holevo2011probabilistic,braunstein1994statistical,barndorff2000fisher,paris2009quantum}, which is the classical Fisher information (CFI) maximized over all possible quantum measurements on quantum states. In multi-parameter estimation, however, the quantum Fisher information matrix (QFIM) and the corresponding quantum Cram\'{e}r--Rao bound is not always attainable because the optimal quantum measurements that correspond to different parameters may not commute and thus cannot be simultaneously implemented. Trade-offs among the estimation precisions for different parameters must be taken into consideration~\cite{gill2000state,matsumoto2002new,nagaoka2005new,conlon2021efficient,ragy2016compatibility,demkowicz2020multi,szczykulska2016multi,liu2020quantum,sidhu2020geometric,kahn2009local,yamagata2013quantum,yang2019attaining,humphreys2013quantum,vidrighin2014joint,suzuki2016explicit,chen2017maximal,pezze2017optimal,li2016fisher,zhu2018universally,vargas2024near,yang2019optimal,carollo2019quantumness,albarelli2020perspective,suzuki2020quantum,hou2020minimal,belliardo2021incompatibility,candeloro2021properties,lu2021incorporating,chen2022information,chen2022incompatibility,yung2024comparison}. 

One popular solution of the measurement incompatibility issue in multi-parameter metrology is to minimize the weighted sum of estimation variances for a given cost matrix, instead of aiming at optimizing the estimation variances for all parameters. Holevo's version of the quantum Cram\'{e}r--Rao bound (HCRB)~\cite{holevo2011probabilistic,kahn2009local,yamagata2013quantum,yang2019attaining}, which provides a tight lower bound for the weighted sum, was extensively researched and applied~\cite{ragy2016compatibility,demkowicz2020multi,albarelli2019evaluating,gorecki2020optimal,tsang2020quantum,sidhu2021tight,hayashi2023tight,gardner2024achieving}. It is numerically solvable through semidefinite programs~\cite{albarelli2019evaluating} and analytically solvable when the number of parameters is two~\cite{sidhu2021tight}. 
The HCRB is attainable for generic quantum states using collective measurements on infinitely many copies of quantum states~\cite{kahn2009local,yamagata2013quantum,yang2019attaining}. When restricted to individual measurements on single copies of states, the HCRB is attainable for pure states~\cite{matsumoto2002new}. Individual measurements also have known solutions for some specific families of states, such as single-qubit states, via other numerically solvable bounds (e.g., the Nagaoka--Hayashi bound~\cite{nagaoka2005new,conlon2021efficient} and the Gill--Massar bound~\cite{gill2000state}). 
Another line of research in multi-parameter estimation focuses on local state tomography, where optimal measurements for uniformly estimating all parameters in pure states called Fisher-symmetric measurements were found and studied~\cite{li2016fisher,zhu2018universally,vargas2024near}.

Despite the success of previous efforts in addressing multi-parameter estimation problems in quantum metrology, several obstacles exist against its practical usefulness, in particular when the \emph{system dimension} is large. The first obstacle is the \emph{implementation complexity}. Collective measurements on a large number of states are required in general, which are almost experimentally infeasible. In situations where individual measurements are sufficient~\cite{matsumoto2002new,nagaoka2005new,conlon2021efficient,li2016fisher,zhu2018universally,vargas2024near}, the implementation can still be challenging when the measurement basis is complex. The second obstacle is the \emph{computational resource} required for identifying the optimal measurements. The numerical computation of the HCRB scales polynomially with respect to the system dimension, which can become quickly intractable e.g., for multi-qubit systems as the number of qubits increases. Note that searching for optimal individual measurements that minimize the weighted sum of estimation variance is an NP-hard problem that involves minimization over a separable cone~\cite{hayashi2023tight} which is even more challenging. Finally, the optimal quantum measurements obtained from traditional approaches usually depend on the quantum state, meaning they need to be updated adaptively as the prior knowledge of the state is refined during the experiment. The optimal measurements can also depend on the cost matrix, limiting its application scenarios, e.g., in situations where the cost matrix is revealed only in the post-processing phase after the measurement~\cite{elben2023randomized}.

Here we propose randomized measurement protocols for multi-parameter metrology, where the above mentioned obstacles are alleviated (see \tabref{tab:comparison}). Random unitaries and measurements~\cite{watrous2018theory} are widely applied in quantum device benchmarking~\cite{knill2008randomized,dankert2009exact,emerson2005scalable}, learning and tomography~\cite{elben2023randomized,huang2020predicting,brydges2019probing}, etc., but were rarely explored in multi-parameter quantum metrology. Below we mostly focus on randomized measurements on individual copies of quantum states whose measurement basis forms a (complex projective) $3$-design~\cite{renes2004symmetric,ambainis2007quantum}, which can be realized through applying a unitary $3$-design~\cite{dankert2009exact,roy2009unitary,gross2007evenly} prior to a projection onto the computational basis. The implementation of $3$-designs was thoroughly investigated in previous literature.  In multi-qubit systems, random Clifford circuits form a unitary $3$-design~\cite{kueng2015qubit,webb2015clifford,zhu2017multiqubit} and can be efficiently implemented through linear-depth quantum circuits in 1D architecture~\cite{aaronson2004improved,bravyi2021hadamard} and log-depth circuits in all-to-all connected architecture with ancilla~\cite{moore2001parallel,jiang2020optimal}. For general qudit systems, approximate implementations~\cite{brandao2016local,haferkamp2023efficient,harrow2023approximate,schuster2024random} and explicit constructions~\cite{bajnok1998constructions,bannai2019explicit,bannai2022explicit} were also proposed. 

In this work, we demonstrate the optimality of randomized measurements for different families of quantum states (see \tabref{tab:summary}). We first prove the near-optimality of randomized measurements for multi-parameter pure state estimation. We show the classical Fisher information matrix (CFIM) obtained from randomized measurements attains the QFIM up to a constant factor no larger than 4 for any parameterized pure states in any finite-size systems. Next, we consider {approximately low-rank well-conditioned states}, which are states whose metrological information is largely concentrated in a low-dimensional subspace, and further demonstrate the saturation of the QFIM up to a constant factor, extending the result of pure states. For both pure states and {approximately low-rank well-conditioned states}, the near-optimality is independent of the choice of cost matrices, aligning with the ``measure first, ask questions later'' framework~\cite{elben2023randomized}. We also investigate mixed states in general, and show the near-optimality of randomized measurements for three families of quantum mixed states which have a maximal number of unknown parameters. Near-optimality is proved when the cost matrix is the QFIM, a metric that was commonly used to guarantee invariance under reparameterization and equal importance of all parameters~\cite{gill2000state,li2016fisher,zhu2018universally}. Finally, we provide examples, including fidelity estimation and Hamiltonian estimation (with and without quantum noise), demonstrating the applicability and limitation of randomized measurements for different quantum metrological tasks. 
In particular, we provide a mixed state fidelity estimation example where randomized measurements do not perform nearly optimally. 
Apart from theoretical results, we numerically implement a pure state fidelity estimation example using randomized measurements and a newly-designed near-optimal estimator, validating our theory and demonstrating a clear advantage over existing protocols (\figref{fig:numerics}).
These results highlight randomized measurements as a powerful tool for multi-parameter quantum metrology with the potential for broad applications.


\section{Informal Overview}
\label{sec:informal}

Below we first briefly introduce the setting of multi-parameter quantum metrology (a more formal introduction is deferred to \secref{sec:prelim}) and then we provide an informal overview of our results, with a focus on explaining the underlying intuition behind them. 

Consider a $d$-dimensional parameterized quantum state $\rho_{\theta}$ in Hilbert space $\mH$, where parameters are denoted by $\theta = (\theta_1,\theta_2,\ldots,\theta_m)$. $m$ is the number of parameters and $\Theta \subseteq \bR^m$ is the domain of $\theta$. An estimator $\htheta(x)$ and the corresponding positive operator-valued measurement (POVM) $M = \{M_x\}_{x}$ is a function that maps the measurement outcomes $x$ to $\Theta$. Here $M_x$ are positive semidefinite measurement operators satisfying $\sum_x M_x = \id$.  The performance of an estimator at $\theta$ is characterized by the \emph{mean squared error matrix} (MSEM)
\begin{equation}
    V(M,\htheta)_{ij} = \sum_x (\htheta(x)_i - \theta_i)(\htheta(x)_j - \theta_j) \trace(\rho_\theta M_x). 
\end{equation}
For single-parameter estimation (i.e., $m = 1$), $V$ is a scalar called the mean squared error (MSE). 

In the paradigm of local parameter estimation, we focus on analyzing the performance of \emph{locally unbiased estimators} that are unbiased at the true value of $\theta$ and in its vicinity up to the first order. It is relevant in metrological situations where the prior knowledge of $\theta$ is already given in the vicinity of its true value, or roughly obtained from a pre-estimation phase (which usually consumes a number of experiments that are negligible compared to $N$, the total number of copies of states available)~\cite{rao1973linear,kay1993fundamentals,lehmann2006theory,cox2017inference,casella2002statistical,gill2000state,yang2019attaining}. 

The Cram\'{e}r--Rao bound (CRB)~\cite{rao1973linear,kay1993fundamentals,lehmann2006theory,cox2017inference,casella2002statistical} states that for any locally unbiased estimator at $\theta$, 
\begin{equation}
\label{eq:CRB}
    V(M,\htheta) \succeq I(M)^{-1},
\end{equation}
where $A \succeq B$ denotes a relationship between two positive semidefinite matrices $A$ and $B$ where $A - B$ is positive semidefinite. Note that here we use ``$\succeq$'' instead of ``$\geq$'' to emphasize the objects under comparison are matrices, while later on when the discussion is restricted to single-parameter estimation and $I(M)$ and $J$ become scalars, we will use $\geq$ instead. 
$I(M)$ is the classical Fisher information matrix (CFIM), defined by 
\begin{equation}
    I(M)_{ij} = \sum_{x, p_x:=\trace(\rho_\theta M_x) \neq 0} \frac{1}{p_x} \frac{\partial p_x}{\partial \theta_i} \frac{\partial p_x}{\partial \theta_j}. 
\end{equation}
$I(M)$ is a function of the POVM and $\rho_\theta$, and is independent of the estimators. 
The quantum Cram\'{e}r--Rao bound (QCRB)~\cite{helstrom1967minimum,helstrom1968minimum,helstrom1976quantum,holevo2011probabilistic,braunstein1994statistical,barndorff2000fisher,paris2009quantum} further states that, for any locally unbiased estimator at $\theta$, 
\begin{equation}
\label{eq:QCRB}
    V(M,\htheta) \succeq I(M)^{-1} \succeq J^{-1},
\end{equation}
where $J$ is the quantum Fisher information matrix (QFIM) as a function of $\rho_\theta$. 
For single-parameter estimation, the quantum Fisher information (QFI) $J$ is equal to the classical Fisher information (CFI) $I(M)$ maximized over all POVMs~\cite{braunstein1994statistical}. For multi-parameter estimation, however, this does not always hold. The reason is the measurements that are optimal with respect to different parameters may not be compatible with each other, i.e., cannot be implemented simultaneously. 

Since the CFIM in general cannot attain the QFIM, we introduce the concept of \emph{near-optimality} (see details in \secref{sec:near-opt}) and investigate situations where near-optimal measurements exist. It is a property of a family of POVMs, with a corresponding state family, stating that the ratio between the CFIM and the QFIM is bounded below by a constant. 

We find in \thmref{thm:pure} that $3$-design measurements are near-optimal for pure states. 
A simple example below provides some intuition behind this result. Consider a one-qubit pure state $\ket{\psi_\theta} = \frac{\ket{0}+e^{i\theta}\ket{1}}{\sqrt{2}}$ with one unknown parameter $\theta$. The optimal choice of measurement basis will be in the $x$-$y$ plane of the Bloch sphere (see \figref{fig:bloch}) because any component in the $z$ axis is not useful in discriminating $\ket{\psi_\theta}$ and $\ket{\psi_{\theta+\mathrm{ d}\theta}}$. Consider a projective measurement onto basis $\{\frac{\ket{0}\pm e^{i\beta}\ket{1}}{\sqrt{2}}\}$. The corresponding probability distribution is $\{\cos^2(\frac{\beta - \theta}{2}),\sin^2(\frac{\beta - \theta}{2})\}$, and the CFI
\begin{align}
    I(M) &= \frac{\big( \partial_\theta(\sin^2(\frac{\beta - \theta}{2})) \big)^2}{\cos^2(\frac{\beta - \theta}{2}) } + \frac{\big( \partial_\theta(\cos^2(\frac{\beta - \theta}{2}))\big)^2}{\sin^2(\frac{\beta - \theta}{2})} \nonumber \\
    &= \sin^2\bigg(\frac{\beta - \theta}{2}\bigg) + \cos^2\bigg(\frac{\beta - \theta}{2}\bigg) = 1. \nonumber
\end{align}
In fact, the CFI is $I(M) = J = 1$ for any $\beta - \theta \notin \frac{\pi}{2}\bZ$, which means (almost) all projective measurements on the $x$-$y$ plane of the Bloch sphere can achieve the QFI. To intuitively understand why all measurements on the $x$-$y$ plane can yield the same performance, we consider the situation when $\theta$ is a small positive angle (see \figref{fig:bloch}) and compare projective measurements onto $\ket{\pm}$ (eigenstates of Pauli-$X$, $\beta = 0$) and $\ket{\pm i}$ (eigenstates of Pauli-$Y$, $\beta = \frac{\pi}{2}$). The probabilities of obtaining $\ket{+}$ and $\ket{-}$ as outcomes are approximately $1 - \theta^2/4$ and $\theta^2/4$, respectively, with corresponding derivatives approximately $\mp\theta/2$. On the other hand, for a measurement in the ${\ket{+i}, \ket{-i}}$ basis, the outcome probabilities are approximately $1/2 \mp \theta/2$, with derivatives around $\mp 1/2$. In the first case, the derivatives of the probabilities are small, but the distribution is highly sensitive to changes in their values. In the second case, the derivatives are large, but the distribution is less sensitive to changes in the values of probabilities. 
These two factors affect the overall estimation precision in opposing ways, ultimately balancing each other out. As a result, the CFI remains the same across all measurements in the $x$–$y$ plane. This shows for pure states the optimal measurement to estimate one parameter is not unique and can spread extensively in the entire space of POVMs, which provides evidence that randomized measurements may be able to achieve a near-optimal performance.

\begin{figure}[tb]
    \centering
    \includegraphics[width=0.75\linewidth]{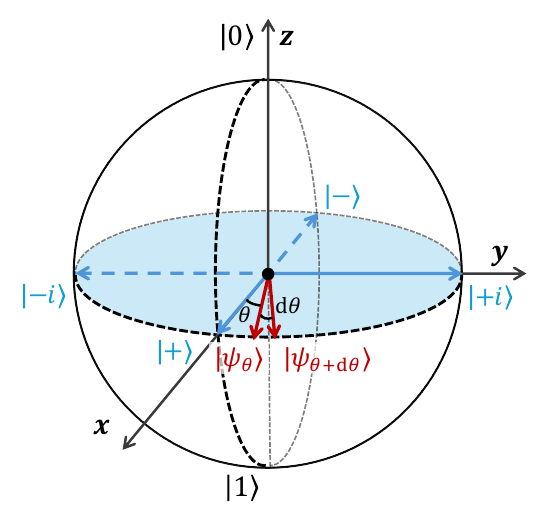}
    \caption{Bloch sphere. Any projective measurement onto the basis in the $x$-$y$ plane is optimal in the task of estimating parameter $\theta$ in $\ket{\psi_\theta} = \frac{\ket{0}+e^{i\theta}\ket{1}}{\sqrt{2}}$, including the $x$-basis $\{\ket{\pm} = \frac{\ket{0}\pm\ket{1}}{\sqrt{2}}\}$ and the $y$-basis $\{\ket{\pm i} = \frac{\ket{0}\pm i\ket{1}}{\sqrt{2}}\}$.}
    \label{fig:bloch}
\end{figure}

To see why the above discussion is still (surprisingly) meaningful in a general $d$-dimensional Hilbert space, we perform the following hand-wavy analysis on the scaling of the CFI with respect to $d$. Consider a $d$-dimensional pure state $\ket{\psi_\theta}$ and a projective measurement $M = \{U\ket{x}\!\bra{x}U^\dagger\}_{x=1}^d$ where $\ket{x}$ is the computational basis in the Hilbert space $\mH$. The CFI
\begin{equation}
    I(M) = \sum_x \frac{\bra{x}U^\dagger (\ket{\partial_\theta\psi_\theta}\!\bra{\psi_\theta} + \ket{\psi_\theta}\!\bra{\partial_\theta\psi_\theta}) U\ket{x}^2}{\braket{x|U^\dagger|\psi_\theta}\! \braket{\psi_\theta|U|x}}.\nonumber
\end{equation}
We can naively compute the expectation values of the denominators and the numerators when $U$ is sampled from the Haar measure. We have 
\begin{gather}
\bE_U[\braket{x|U^\dagger|\psi_\theta} \!\braket{\psi_\theta|U|x}] = \frac{1}{d},\nonumber \\
\bE_U[\bra{x}\!U^\dagger \!(\ket{\partial_\theta\psi_\theta}\!\bra{\psi_\theta}\! +\! \ket{\psi_\theta}\!\bra{\partial_\theta\psi_\theta}) U\!\ket{x}^2] = \frac{J}{2d(d+1)},  \nonumber  
\end{gather}
where we use the property of $2$-design (see \eqref{eq:prop-1}) and the fact that the pure-state QFI is $J = 4(\braket{\partial_\theta\psi|\partial_\theta\psi} - \braket{\psi|\partial_\theta\psi}\braket{\partial_\theta\psi|\psi})$. If naively, in the expression of $I(M)$, we replace all numerators and denominators with their expectation values, we can see the ratio between $I(M)$ and $J$ is a constant independent of $d$. Roughly speaking, as $d$ increases, the derivative of each probability decreases, but the distribution also becomes more sensitive to changes in the values of probabilities---because the number of possible outcomes also increases. For example, compare a Bernoulli distribution $\{1-\theta,\theta\}$ to a $(d+1)$-outcome probability distribution $\{1-\theta,\theta/d,\ldots,\theta/d\}$. Their CFIs with respect to $\theta$ are the same, equal to $1/(\theta(1-\theta))$, and the latter has more numbers of outcomes but smaller derivatives of each probability. In our case, these two factors also balance each other out, keeping the ratio between $I(M)$ and $J$ a constant. As the structure of the randomized measurements does not depend on the specific choice of parameter, the result should extend naturally to multi-parameter estimation. 

The above analysis, though not rigorous, indicates when a measurement is sufficiently random, it may potentially achieve near-optimality for pure state estimation. We prove this rigorously in \thmref{thm:pure} for $3$-design measurements where we explicitly construct a locally unbiased estimator that achieves the QFIM up to a constant. We further generalize this result to \thmref{thm:unitary} and \thmref{thm:low-rank}, where we show near-optimality for {approximately low-rank well-conditioned states}. Here, \emph{{approximately low-rank well-conditioned states}} means states which have a low-rank support on which eigenvalues are above a positive constant and the QFIM is close to the QFIM of the entire state. It is practically relevant, e.g., in situations when we estimate parameters in pure states or low-rank states subject to weak noise. 

We can perform a similar dimension analysis to understand the intuition behind this result. Consider a rank-$r$ state $\rho_\theta = \frac{1}{r}\Pi$ where $\Pi$ is a projector and $\partial_\theta \rho_\theta = - i [H,\rho_\theta]$. Then calculation shows the QFI 
\begin{equation}
    J = \frac{4}{r} \trace(H\Pi H \Pi^\perp),
\end{equation}
where $\Pi^\perp = \id - \Pi$. For $M = \{U\ket{x}\!\bra{x}U^\dagger\}_{x=1}^d$, 
\begin{equation}
    I(M) = \sum_x \frac{\bra{x}U^\dagger (\frac{1}{r} (H\Pi-\Pi H)) U\ket{x}^2}{\braket{x| \frac{1}{r}\Pi |x}}.\nonumber
\end{equation}
For randomized $U$ that is at least $2$-design, the average values of the denominators and the numerators are 
\begin{gather}
\bE_U[\bra{x} U^\dagger \bigg( \frac{1}{r}\Pi \bigg) U\ket{x}] = \frac{1}{d},\nonumber \\
\bE_U[\bra{x}\!U^\dagger \bigg(\! \frac{1}{r} (H \Pi - \Pi H) \!\bigg)  U\!\ket{x}^2] = \frac{J}{2r^2d(d+1)}.  \nonumber  
\end{gather}
When $r$ is a constant, i.e., $\rho_\theta$ is low-rank, we can see the ratio between $I(M)$ and $J$ is also a constant independent of $d$---if we naively replace all numerators and denominators with their expectation values in the expression of $I(M)$. Although the replacement is not mathematically valid, it provides some intuition behind \thmref{thm:unitary} and \thmref{thm:low-rank}. 

One key assumption in the above definition of {approximately low-rank well-conditioned states} is all eigenvalues in its low-rank support must be above a positive constant, the following shows a \emph{counter example} where this assumption is violated and randomized measurements are not near-optimal. Rigorous discussions can be found in \secref{sec:no-go}. Consider a single-qubit state $\rho_f = f \ket{0}\!\bra{0} + (1-f) \ket{1}\!\bra{1}$ where $f$ is to be estimated. The optimal measurement is $M = \{\ket{0}\!\bra{0},\ket{1}\!\bra{1}\}$ that achieves $I(M) = J = 1/(f(1-f))$. When $f$ or $1-f$ is sufficiently small, $J$ can be arbitrarily large. On the other hand, if we consider measurement $M = \{U\ket{0}\!\bra{0}U^\dagger,U\ket{1}\!\bra{1}U^\dagger\}$, we have 
\begin{gather}
    I(M) = \sum_{x=0,1} \frac{\bra{x}U^\dagger (\ket{0}\!\bra{0} - \ket{1}\!\bra{1}) U\ket{x}^2}{\braket{x|(f \ket{0}\!\bra{0} + (1-f) \ket{1}\!\bra{1})|x}},\nonumber
    \end{gather}
and for randomized $U$ in the $2$-dimensional Hilbert space (that is at least $2$-design), 
\begin{gather}
\bE_U[\braket{x|(f \ket{0}\!\bra{0} + (1-f) \ket{1}\!\bra{1})|x}] = \frac{1}{2},\nonumber \\
\bE_U[\bra{x}U^\dagger (\ket{0}\!\bra{0} - \ket{1}\!\bra{1}) U\ket{x}^2] = \frac{2}{3},  \nonumber  
\end{gather}
both of which are independent of $f$. It suggests that as $f \rightarrow 0$, the CFI of randomized measurements will not approach infinity---thus an infinitely large gap exists between the CFI and the QFI. In particular, the counter example is a qubit estimation example, which means even for states constrained in a finite-dimensional space, randomized measurements cannot be always near-optimal.

For general mixed states, not only randomized measurements cannot be near-optimal, near-optimal measurements (that only acts on individual copies of quantum states) might not exist at all. A typical example is the full tomography of mixed states, where the measurement incompatibility issue cannot be avoided unless measurements on many copies of states are allowed---which is beyond the discussion of our work. Thus, we introduce the concept of \emph{weak near-optimality} to describe measurements that perform almost the best among all other individual measurements using a natural figure of merit as detailed in \secref{sec:near-opt}. We found in \thmref{thm:full} that $3$-design measurements are weakly near-optimal when estimating all parameters describing a rank-$r$ state that is well-conditioned on its support---we call this the estimation of \emph{full-parameter} rank-$r$ well-conditioned states. Our result is consistent with other existing results in the field of state tomography---e.g., Clifford measurements are optimal for learning all Pauli observables of general mixed states~\cite{huang2020predicting,chen2022exponential}. 

When $m$, the number of parameters to be estimated in general mixed states, is not maximal, randomized measurements are usually not weakly near-optimal. One typical situation is when the QFIM is attainable because the optimal POVMs with respect to different parameters commute. For example, we can consider an $n$-qubit state 
\begin{equation}
    \rho_\theta = \frac{\id + \sum_{i\in \mS^\circ} \theta_i P_i}{2^n}, 
\end{equation}
where $\mS^\circ$ is a stabilizer group~\cite{nielsen2002quantum} excluding $\id$ and we consider parameter estimation at $\theta_i = 0$. The number of parameters $m = 2^n - 1$ (which is far from the maximal number of parameters $4^n-1$), 
the QFIM $J_{ij} = \delta_{ij}$ is achievable by projective measurement onto the common eigenstates of the stabilizer group. On the other hand, consider a randomized measurement $M = \{p_U U\ket{x}\!\bra{x}U^\dagger\}_{x,U}$ where $\ket{x}$ is the computational basis and $\{p_U,U\}$ is a unitary $2$-design, we have
\begin{equation}
    I(M)_{ij} = \sum_{U,x} p_U \braket{x|U^\dagger P_i U|x}\!\braket{x|U^\dagger P_j U|x} = \frac{\delta_{ij}}{2^n+1},
\end{equation}
which is a factor $1/(2^n+1)$ away from optimal. In general, it is open for which types of mixed states randomized measurements are weakly near-optimal.

Below, we will review concepts in multi-parameter quantum metrology and some previous results in \secref{sec:prelim-0}. We provide the exact statements and proofs of our theorems in \secref{sec:pure}--\secref{sec:full}. In \secref{sec:example}, we consider specific examples where we show how to apply our result in estimating fidelity of pure states and in Hamiltonian estimation, and also prove rigorously randomized measurements are not near-optimal when estimating fidelity of some types of mixed states.  



\section{Preliminaries}
\label{sec:prelim-0}

\subsection{Multi-parameter estimation}
\label{sec:prelim}


Here we provide a detailed overview of multi-parameter estimation. 
First, an estimator $(\htheta,M)$ is called \emph{locally unbiased} at $\theta = \theta^0$ if and only if 
\begin{gather}
\label{eq:l-u-condition-1}
    \theta^0 = \sum_{x} \htheta(x) \trace(\rho_\theta M_x) \Big|_{\theta = \theta^0},\\
\label{eq:l-u-condition-2}
    \delta_{ij} = \frac{\partial}{\partial \theta_i}\sum_{x} \htheta_j(x) \trace(\rho_\theta M_x) \Big|_{\theta = \theta^0},\;\forall i,j. 
\end{gather}  
Throughout this paper, when there is no ambiguity, we will usually not distinguish between $\theta^0$ and $\theta$, and assume all functions are evaluated at $\theta$ for notation simplicity. We will only explicitly use the notation $\theta^0$ when we need to distinguish between the local point $\theta^0$ and the true unknown parameter, which will be denoted by $\theta$. 

In classical statistics, the MSEM of any locally unbiased estimator at $\theta$ is bounded below by the inverse of the CFIM (see \eqref{eq:CRB}). $I(M)$ is a positive semidefinite matrix in $\bR^{m\times m}$ and we further assume in this work that $I(M)$ is strictly positive, which means each parameter is identifiable (unless stated otherwise). 
This forbids redundant parameterization and puts an upper bound on the number of allowed parameters.
Note that $I(M)$ depends on the state $\rho_\theta$ and the value of $\theta$ but they are omitted for simplicity, as is the case with many other concepts defined below. 
The CRB at $\theta^0$ is saturable via the following locally unbiased estimator~\cite{rao1973linear,kay1993fundamentals,lehmann2006theory,cox2017inference,casella2002statistical} defined by 
\begin{equation}
\label{eq:opt}
    \htheta^{\opt}_i(y;\theta^0) = \theta_i^0 + \sum_x (\alpha_{i,x}|_{\theta = \theta^0}) \hat{p}_x(y),
\end{equation}
where $(~; \star)$ indicates dependence on $\star$, $\alpha_{i,x} = \sum_{i} (I(M)^{-1})_{ij} \frac{\partial_j p_x}{p_x} $ and $\partial_i$ is used as a shorthand for $\frac{\partial}{\partial \theta_i}$. $\hat{p}_x(y) = \delta_{xy}$ where $y$ is the measurement outcome. $\bE[\hat{p}_x] = p_x = \trace(\rho_\theta M_x)$ and $\bE[\hat{p}_x \hat{p}_y] =\delta_{xy} p_x$. It achieves the CRB, i.e., 
\begin{equation}
    V(M,\hat\theta^\opt) = I(M)^{-1}. 
\end{equation}
Given $N$ copies of $\rho_\theta$, the CRB is asymptotically saturable for large $N$ using the (asymptotically unbiased) maximum likelihood estimator $\htheta^{(N)}_{\textsc{mle}}$~\cite{rao1973linear,kay1993fundamentals,lehmann2006theory,cox2017inference,casella2002statistical}---$\sqrt{N}(\hat{\theta}^{(N)}_{\text{MLE}} - \theta)$ converges in distribution to a normal distribution centered around $0$ with variance equal to $I(M)^{-1}$ as $N \rightarrow \infty$. Note that $I(M)$ here is additive---$I(M^{\otimes N}) = N I(M)$, where we use $I(M^{\otimes N})$ to denote the CFIM when measuring $\rho_\theta^{\otimes N}$ using POVM $M^{\otimes N} = \{\bigotimes_{i=1}^N M_{x_i}\}_{(x_1,x_2,\ldots,x_N)}$. The factor of $N$ naturally occurs from the classical central limit theorem, where the estimation variance is improved by a factor of $N$ when $N$ i.i.d. samples are taken. 

The CRB is a purely classical result. It provides a lower bound on $V(M,\htheta)$ that depends on the choice of POVM. The QCRB~\cite{helstrom1967minimum,helstrom1968minimum,helstrom1976quantum,holevo2011probabilistic,braunstein1994statistical,barndorff2000fisher,paris2009quantum} (see \eqref{eq:QCRB}) provides a more general lower bound when the POVM can be chosen freely. 
$J$ is the QFIM as a function of $\rho_\theta$ and is additive with respect to $\rho_\theta$. It is defined by 
\begin{equation}
    J_{ij} = \frac{1}{2}\trace(\rho_\theta \{L_i,L_j\}),
\end{equation}
where $\{\cdot,\cdot\}$ is the anti-commutator and $L_i$ are Hermitian operators called symmetric logarithmic derivative (SLD) operators, defined through 
\begin{equation}
    \frac{\partial \rho_\theta}{\partial \theta_i} = \frac{1}{2}(L_i \rho_\theta + \rho_\theta L_i). 
\end{equation}
$L_i$ is not uniquely defined by the equation above, but $J$ is invariant under different choices of $L_i$. In this work, we fix the definition of $L_i$ as below for future use. 
\begin{equation}
    L_i := 2 \sum_{jk,\lambda_j+\lambda_k>0} \frac{\braket{\psi_j|\partial_i\rho_\theta|\psi_k}}{\lambda_j+\lambda_k} \ket{\psi_j}\bra{\psi_k},
\end{equation}
where $\sum_j \lambda_j \ket{\psi_j}\bra{\psi_j} = \rho_\theta$ is the spectral decomposition of $\rho_\theta$. The QFIM $J$ can be calculated as 
\begin{equation}
    J_{ij} = 2 \sum_{k\ell,\lambda_k+\lambda_\ell>0} \frac{\Re[\braket{\psi_k|\partial_i\rho_\theta|\psi_\ell}\braket{\psi_\ell|\partial_j\rho_\theta|\psi_k}]}{\lambda_k+\lambda_\ell}.
\end{equation}

For single-parameter estimation, i.e., when $m = 1$, $J$ and $I(M)$ are scalars, and $J$ is attainable~\cite{braunstein1994statistical}:
\begin{equation}
\label{eq:single}
    J = \max_{M:\mathrm{POVM}}I(M).
\end{equation}
Therefore, the QCRB (\eqref{eq:QCRB}) is asymptotically attainable, making the QFI $J$ a perfect figure of merit to characterize the performance of local parameter estimation on state $\rho_\theta$. For multi-parameter estimation, however, $J$ may not be attainable by any POVM due to the measurement incompatibility issue.

\subsection{Previous Results}

\subsubsection{Holevo Cram\'{e}r--Rao bound}

To address the measurement incompatibility issue, it helps to consider minimization of the \emph{weighted mean squared error} (WMSE), $\trace(W V(M,\htheta))$, where $W$ is a cost matrix that is positive semidefinite in $\bR^{m \times m}$. The QCRB automatically provides a lower bound on the WMSE, 
\begin{equation}
    \trace(W V(M,\htheta)) \geq \trace(W J^{-1}) =: C_{\rm F}(W). 
\end{equation}
The HCRB~\cite{holevo2011probabilistic,kahn2009local,yamagata2013quantum,yang2019attaining} provides a tighter lower bound which states 
\begin{multline}
    \trace(W V(M,\htheta)) \geq C_{\rm H}(W):= \\ \! \min_{\vX:=\{X_i\}_i} \!\trace(W \Re[Z(\vX)]) + \norm{\sqrt{W} \! \Im[Z(\vX)] \! \sqrt{W}}_1, 
\end{multline}
where $X_i$ are Hermitian operators acting on $\mH$ satisfying $\trace\big(X_i \partial_j \rho_\theta \big) = \delta_{ij}$ and $Z(\vX)_{ij} = \trace(\rho_\theta X_i X_j)$. 
Both $C_{\rm F}(W)$ and $C_{\rm H}(W)$ are additive with respect to $\rho_\theta$.  The HCRB is asymptotically approachable through a sequence of \emph{collective measurements} $M^{(N)}$ acting on $\rho_\theta^{\otimes N}$ when the state and estimators satisfy certain regularity conditions~\cite{kahn2009local,yamagata2013quantum,yang2019attaining}. The HCRB is computable through semidefinite programming~\cite{albarelli2019evaluating} and can be approximated by the QCRB up to a factor of two as~\cite{tsang2020quantum,carollo2019quantumness}
\begin{equation}
\label{eq:qcrb-hcrb}
    C_{\rm F}(W) \leq C_{\rm H}(W) \leq 2 C_{\rm F}(W). 
\end{equation}

The HCRB is achievable only through collective measurements in general. When only \emph{individual measurements} on single copies are available, we denote the minimum WMSE over locally unbiased measurements by 
\begin{align}
    C(W) &:= \min_{(M,\htheta):\text{ locally unbiased}} \trace(W V(M,\htheta))\\
    &= \min_{M}\, \trace(W I(M)^{-1}). 
\end{align}
Note that $C(W)$ cannot be further improved even when separable, adaptive measurements across multiple copies of $\rho_\theta$ are available~\cite{nagaoka2005parameter,gill2000state}. 
While $C(W)$ is strictly larger than $C_{\rm H}(W)$ in general, for pure states, $C(W) = C_{\rm H}(W)$, and the HCRB is achievable through individual measurements~\cite{matsumoto2002new}.

\subsubsection{Gill--Massar bound}
\label{sec:GM}

Collective measurements (as experimentally demonstrated in~\cite{hou2018deterministic}) can in general be much more powerful than individual measurements. It can be seen from the Gill--Massar (GM) bound~\cite{gill2000state} (see also \appref{app:GM}) that applies to all individual measurements. It states that for any $d$-dimensional quantum state $\rho_\theta$ and any (individual) POVM $\tM$,
\begin{equation}
\label{eq:GM}
    \trace(J^{-1}I(\tM)) \leq d-1. 
\end{equation}
It sets a lower bound on $\trace(J I(\tM)^{-1})$ through Cauchy--Schwarz inequality, 
\begin{equation}
\label{eq:lower}
    \trace(J I(\tM)^{-1}) \geq \frac{\trace(\id)^2}{\trace(J^{-1}I(\tM))} \geq \frac{m^2}{d-1}. 
\end{equation}

For example, if we take $W = J$, $C_{\rm H}(J) \leq 2m$ (easily seen from \eqref{eq:qcrb-hcrb}), while $C(J) \geq m^2/(d-1)$~\cite{gill2000state}. When the number of unknown parameters is $\Theta(d^2)$, e.g., in tomography, we have $C_{\rm H}(J) = O(d^2)$ while $C(J) = \Omega(d^3)$ which is significantly larger than $C_{\rm H}(J)$.

\subsubsection{Fisher-symmetric measurements for pure states}

Given a parameterized pure state $\rho_\theta$, it was known that there exists a rank-one measurement $M$ (with $2d-1$ measurement outcomes) such that the Fisher-symmetry condition is satisfied~\cite{li2016fisher}, i.e., 
\begin{equation}
\label{eq:f-s-pure}
    I(M) = \frac{1}{2} J,
\end{equation}
where $1/2$ is the best possible constant for state tomography, i.e., when the number of parameters $m = 2(d-1)$. One can see this from the GM bound---if $I(M) = c J$ from some constant $c$, then $
\trace(J^{-1} I(M)) = c m = 2c(d-1) \leq d-1$. 
(Note that although \eqref{eq:f-s-pure} was proven only for state tomography, it automatically holds for arbitrary $m$.) 

The above measurement~\cite{li2016fisher} is a function of the state $\rho_\theta$ and in practical implementation, must be adjusted adaptively based on the accumulated knowledge of the states~\cite{vargas2024near}. In addition, both the computational complexity and implementation complexity of this approach can be costly. 
To overcome the limitation on state-dependence, researchers studied universally Fisher-symmetric measurements that are independent of states. It was known the infinite-outcome Haar random measurements~\cite{hayashi1998asymptotic,Zhu_HJthesis,zhu2014quantum,lu2025quantum} are universally Fisher-symmetric and can achieve \eqref{eq:f-s-pure} for all pure states universally. However, they are experimentally infeasible because they require infinite amount of classical randomness and exponential gate complexity. 
In fact, \cite{zhu2018universally} showed any measurement $M$ satisfying \eqref{eq:f-s-pure} for all pure states must contains infinite number of measurement outcomes (and thus are not feasible). They proposed the $2$-design collective measurements on two identical copies of the state as a workaround (as also experimentally demonstrated for the single-qubit case~\cite{hou2018deterministic}). They showed that 
\begin{equation}
    I[\rho_\theta^{\otimes 2},M] = J[\rho_\theta^{\otimes 2}],
\end{equation}
when $M$ is a $2$-design measurement on $2$ copies of any pure state $\rho_\theta$. Here, we use $I[\star,M]$ ($J[\star]$) to denote the CFIM (QFIM) of the state $\star$, which is usually omitted in the expressions. In practice, implementing a two-copy measurement requires simultaneous access to and precise cross-device control of two identical quantum sensors. Consequently, a more experimentally accessible single-copy (individual) measurement protocol remains highly desirable. Our result partially resolves this challenge, providing feasible finite-outcome individual measurement protocols by tolerating a constant-factor suboptimality.


\subsection{Definition of Near-Optimality}
\label{sec:near-opt}

In this work, we will analyze the performance of randomized individual measurements for parameter estimation at a local point of $\theta$. We define optimality conditions of measurements for families of quantum states where a constant-factor suboptimality is tolerated, in which case we use the term ``near-optimal'' to represent a potential constant-factor difference from the exact optimality. The near-optimality property is particularly important in situations where the optimal estimation precision can scale up or down with respect to system variables, e.g., the system dimension, the noise strength, the number of unknown parameters, etc. and near-optimality guarantees these scalings to be preserved across the entire range of system variables. 

We first define quantum measurements that nearly saturate the QCRB by the following. 
\begin{definition}[Near-optimality]
A family of POVMs is near-optimal for a family of parameterized quantum states, if there exists a positive constant $c$ (independent of the state or POVM) such that for any state in the family and its corresponding POVM $M$, 
\begin{equation}
\label{eq:def-near-optimal}
    I(M) \succeq c J.
\end{equation} 
\end{definition}

Given a family of general mixed states, near-optimal individual measurements may not exist. For example, for maximally mixed states $\rho = \frac{\id}{d}$ with maximal number of unknown parameters, $C_{\rm F}(J) = O(d^2)$ while $C(J) = \Omega(d^3)$, implying the $\Omega(d)$ gap between $I(M)$ and $J$. Therefore, we need a less stringent definition. 
\begin{definition}[Weak near-optimality]
\label{def:weak}
A family of POVMs is weakly near-optimal for a family of parameterized quantum states, if there exists positive constants $c_1,c_2$ (independent of the state or POVM) such that for any state in the family and its corresponding POVM $M$, 
\begin{gather}
\label{eq:def-weak-near-optimal-1}
    \trace(J \cdot I(M)^{-1}) \leq c_1 \inf_{\tM:\mathrm{POVM}}\trace(J \cdot I(\tM)^{-1}),\\
\label{eq:def-weak-near-optimal-2}
    I(M) \succeq c_2 \frac{\trace(J^{-1} I(M))}{m} J. 
\end{gather} 
\end{definition}
\noindent Recall $m$ is the number of parameters.
\eqref{eq:def-weak-near-optimal-1} implies the weak near-optimality measurement minimizes the WMSE up to a constant when the cost matrix is taken as the QFIM $J$. It guarantees the average near-optimal performance of $M$. \eqref{eq:def-weak-near-optimal-2} further guarantees the measurement provides an overall near-optimal estimation for all parameters, excluding the situation where the precision of some parameter is extremely low. We further explain and justify this definition in \appref{app:weak-optimal}. We prove whenever a near-optimal measurement exists, weak near-optimality is equivalent to near-optimality. We also explain how it functions as a proper generalization of the Fisher-symmetry condition~\cite{li2016fisher,zhu2018universally,vargas2024near}. 

Below we will see randomized measurements are (weakly) near-optimal in multiple interesting scenarios. 

\section{Pure states}
\label{sec:pure}

In this section, we consider randomized measurements for pure state metrology and obtain the following result. 
\begin{theorem}
\label{thm:pure}
When $M$ is a POVM given by a $3$-design, for any pure state $\rho_\theta$ and any number of parameters $m$,  
\begin{equation}
\label{eq:pure-bound}
    I(M) \succeq \frac{d+2}{4(d+1)} J \succeq \frac{1}{4} J. 
\end{equation}
\end{theorem}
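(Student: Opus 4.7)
The plan is to establish the matrix inequality indirectly: construct an explicit locally unbiased estimator whose mean squared error matrix (MSEM) $V$ equals $J^{-1}$ up to a constant prefactor, then invoke the classical Cram\'er--Rao inequality $V \succeq I(M)^{-1}$ to deduce $I(M) \succeq V^{-1}$. The prefactor will be forced to $(d+2)/(4(d+1))$ by the $3$-design moment normalization.

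By a local reparameterization of the global phase, I may assume $\braket{\psi|\partial_i\psi}=0$ for all $i$, so that the SLDs read $L_i = 2(\ketbra{\partial_i\psi}{\psi} + \ketbra{\psi}{\partial_i\psi})$ and $J_{ij} = 4\Re\braket{\partial_i\psi|\partial_j\psi}$. The Holevo-type operators $X_i := \sum_k (J^{-1})_{ik} L_k$ are then Hermitian, satisfy $\Tr(X_i \partial_j\rho_\theta) = \delta_{ij}$, and obey the gauge identities $\Tr(X_i) = \Tr(\rho_\theta X_i) = 0$. Writing the $3$-design POVM as $M = \{d\mu_x \dm{\phi_x}\}_x$ with $\sum_x\mu_x = 1$ and $\{(\mu_x,\ket{\phi_x})\}$ a (complex projective) $3$-design, I would propose the estimator
\[
\hat\theta_i(x) := \theta_i^0 + (d+1)\bra{\phi_x}X_i\ket{\phi_x}.
\]
Local unbiasedness at $\theta^0$ then follows from the $2$-design identity $\sum_x\mu_x\dm{\phi_x}^{\otimes 2} = 2\Pi_{\mathrm{sym}}^{(2)}/(d(d+1))$: both conditions $\sum_x p_x\hat\theta_i(x) = \theta_i^0$ and $\sum_x(\partial_j p_x)\hat\theta_i(x) = \delta_{ij}$ collapse to linear combinations of $\Tr(X_i)$, $\Tr(\rho_\theta X_i)$, and $\Tr(\partial_j\rho_\theta X_i)$, which the gauge conditions together with the prefactor $(d+1)$ pin to $0$ and $\delta_{ij}$, respectively.

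The crux is the MSEM computation. The $3$-design identity $\sum_x\mu_x\dm{\phi_x}^{\otimes 3} = 6\Pi_{\mathrm{sym}}^{(3)}/(d(d+1)(d+2))$ together with $\Pi_{\mathrm{sym}}^{(3)} = \tfrac{1}{6}\sum_{\sigma\in S_3}P_\sigma$ produces six trace terms; the gauge conditions annihilate the four that involve $\Tr(X_i)$ or $\Tr(\rho_\theta X_i)$, leaving
\[
V_{ij} = \frac{d+1}{d+2}\bigl[\Tr(X_iX_j) + \Tr(\rho_\theta\{X_i,X_j\})\bigr].
\]
A direct calculation using the explicit form of $L_k$, the gauge $\braket{\psi|\partial_k\psi}=0$, and the identity $\sum_{k\ell}(J^{-1})_{ik}(J^{-1})_{j\ell}J_{k\ell} = (J^{-1})_{ij}$ then yields $\Tr(X_iX_j) = \Tr(\rho_\theta\{X_i,X_j\}) = 2(J^{-1})_{ij}$, so $V = \frac{4(d+1)}{d+2}J^{-1}$. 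The classical CRB $V\succeq I(M)^{-1}$ immediately gives $I(M)\succeq \frac{d+2}{4(d+1)}J\succeq \frac{1}{4}J$.

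The main technical bookkeeping will be the $S_3$ expansion, isolating the three nonzero contributions (the transposition swapping the two $X$-slots plus the two $3$-cycles) from the four gauge-annihilated terms, together with the elementary verification of $\Tr(X_iX_j) = \Tr(\rho_\theta\{X_i,X_j\}) = 2(J^{-1})_{ij}$ from the explicit SLD form. Beyond this, the argument is tight: the $3$-design property enters exactly once, and the constant $(d+2)/(4(d+1))$ is the precise value dictated by the symmetric-subspace moment normalization rather than slack in any inequality.
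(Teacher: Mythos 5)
Your proposal is correct and follows essentially the same route as the paper: the estimator $\theta_i^0+(d+1)\braket{\phi_x|X_i|\phi_x}$ with $X_i=\sum_k (J^{-1})_{ik}L_k$ coincides with the paper's local shadow estimator (since $\trace(X_i)=0$ for pure states), your $S_3$ expansion of the third moment is the paper's $3$-design identity, and the resulting $V=\tfrac{4(d+1)}{d+2}J^{-1}$ plus the classical CRB is exactly the paper's argument. The phase gauge $\braket{\psi|\partial_i\psi}=0$ is a harmless convenience the paper does not need, because $L_i=2\,\partial_i\rho_\theta$ and $\trace(L_i)=0$ hold for pure states in any gauge.
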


Here we use randomized measurements whose measurement basis forms a $3$-design. Specifically, we say a randomized measurement is given by a $3$-design if $M = \{ M_s = q_s d  \ket{s}\bra{s} \}_s$, where $q_s > 0$ and $\sum_s q_s \ket{s}\bra{s} = \id/d$. $\{q_s,\ket{s}\}$ is called a (complex projective) $3$-design~\cite{renes2004symmetric,ambainis2007quantum} if and only if 
\begin{equation}
\label{eq:$3$-design}
    \sum_s q_s (\ket{s}\bra{s})^{\otimes 3} = \int_{\psi} (\ket{\psi}\bra{\psi})^{\otimes 3} \mathrm{d}\psi,
\end{equation} 
where $\mathrm{d}\psi$ is the (complex projective) Haar measure. One natural way to implement a $3$-design measurement is to implement a random unitary $U$ from a unitary $3$-design~\cite{dankert2009exact,roy2009unitary,gross2007evenly} and then implement projection onto the computational basis. A unitary ensemble $\{q_i, U_i\}$ is a unitary $3$-design if $\sum_i q_i U_i^{\otimes 3} \otimes U_i^{\dagger \otimes 3} = \int \mathrm{d}U U^{\otimes 3} \otimes U^{\dagger \otimes 3}$ where $\mathrm{d}U$ is the Haar measure. For multi-qubit systems, random Clifford circuits form a unitary $3$-design~\cite{webb2015clifford,zhu2017multiqubit,kueng2015qubit}. 

\thmref{thm:pure} implies for pure states, the QFIM is always saturable by $3$-design randomized measurements up to a constant factor, i.e., they are near-optimal for pure states. 
This implies the \emph{universal} optimality of $3$-design randomized measurements for estimating any number of parameters in any pure states. Even in the context of single-parameter estimation, our method is still useful in providing an explicit measurement protocol that is always near-optimal for any parameterized pure states. In particular, the choice of randomized measurements do not rely on the knowledge of states or cost matrices (like in~\cite{matsumoto2002new}), thus fit into the ``measure first, ask questions later'' framework~\cite{elben2023randomized} and substantially lower the complexity of experiments.




Before we prove \thmref{thm:pure}. We first provide two useful properties~\cite{dankert2009exact,roberts2017chaos} of $3$-designs that we will use later, 
\begin{gather}
\sum_s q_s \ket{s}\bra{s} \braket{s|A|s} = \frac{\trace(A)\id + A}{(d+1)d}, \label{eq:prop-1}\\
\begin{split}
    &\sum_s q_s \ket{s}\bra{s} \braket{s|B|s}\braket{s|C|s} = \\
&\;\frac{1}{(d+2)(d+1)d} \bigg( \big(\trace(BC)+\trace(B)\trace(C)\big)\id + \\
&\qquad \qquad \qquad \qquad \;\trace(B)C+\trace(C)B+ \{B,C\} \bigg).
\end{split}
\label{eq:prop-2}
\end{gather}

\begin{proof}[Proof~of~{\thmref{thm:pure}}]
    We prove \thmref{thm:pure} by explicitly constructing a set of locally unbiased estimators $\{\htheta_i\}_i$ that satisfy 
    \begin{equation}
    \label{eq:V-J}
        V = \frac{4(d+1)}{d+2} J^{-1},
    \end{equation}
    where $V:= V(M,\htheta)$. 
    This then implies \eqref{eq:pure-bound} from the CRB (\eqref{eq:CRB}) and the fact that $A \succeq B$ is equivalent to $A^{-1} \preceq B^{-1}$ for any positive matrices $A,B$. 
    To construct locally unbiased estimators, we first introduce an unbiased estimator of $\rho_\theta$ using classical shadows~\cite{huang2020predicting}, 
    \begin{equation}
    \label{eq:rho-estimator}
        \hrho(s) = \mM^{-1}(\ket{s}\bra{s}) = (d+1)\ket{s}\bra{s} - \id, 
    \end{equation}
    where $\mM(\cdot) = \frac{(\cdot) + \trace(\cdot) \id }{d+1}$, and $\mM$ is the inverse map of $\mM$ satisfying $\mM^{-1}(\cdot) = (d+1)(\cdot) - \id$. $\hrho$ is an unbiased estimator of $\rho_\theta$ because 
    \begin{gather}
    \bE[\ket{s}\bra{s}] := \sum_s \trace(\rho_\theta M_s) \ket{s}\bra{s} = \mM(\rho_\theta),
    \end{gather}
    from \eqref{eq:prop-1}, and then 
    \begin{align}
    \bE[\hrho(s)] &=  \bE[\mM^{-1}(\ket{s}\bra{s})] \nonumber\\
    &= \mM^{-1}(\bE[(\ket{s}\bra{s})]) = \rho_\theta. 
    \end{align}
    
    We then define the locally unbiased estimators at any local point $\theta = \theta^0$ by 
    \begin{equation}
    \label{eq:theta-estimator}
        \htheta_i(s;\theta^0) = \theta_i^0 + \trace\big( (X_i |_{\theta = \theta^0}) \hrho(s)\big),
    \end{equation}
    where $X_i$ are Hermitian operators which we will call \emph{deviation observables} that satisfies the locally unbiasedness conditions 
    \begin{equation}
    \label{eq:l-u-condition-X}
        \trace(\rho_\theta X_i)|_{\theta = \theta^0} = 0 ,\;\trace((\partial_i\rho_\theta) X_j)|_{\theta = \theta^0} = \delta_{ij} ,\;\forall i,j. 
    \end{equation}
    Here $\{X_i\}_{i=1}^m$ are named deviation observables because they are observables used to estimate the deviation of $\theta$ from $\theta^0$. $X_i$ can vary for different $\theta$ but the dependence is omitted for simplicity. We will refer to estimators of form \eqref{eq:theta-estimator} as \emph{local shadow estimators}, a type of locally unbiased estimator defined through deviation observables $\{X_i\}_i$.

    To make sure \eqref{eq:l-u-condition-X} is satisfied, we choose 
    \begin{equation}
    \label{eq:def-X}
        X_i = \sum_j (J^{-1})_{ij} L_j,
    \end{equation}
    where $L_j$ are SLDs. They satisfy $\trace(\rho_\theta L_j) = 0$ and $J_{ij} = \frac{1}{2}\trace(\rho_\theta\{L_i,L_j\})$ by definition. We can verify \eqref{eq:l-u-condition-1} and \eqref{eq:l-u-condition-2} at $\theta = \theta^0$ through
    \begin{align}
        &\bE[\htheta_i(s;\theta^0)]|_{\theta = \theta^0} = \theta_i^0 + \trace(X_i \rho_\theta)|_{\theta = \theta^0} = \theta^0_i, \\
        &(\partial_j \bE[\htheta_i(s;\theta^0)])|_{\theta = \theta^0}
        = \trace(X_i|_{\theta = \theta^0} (\partial_j \rho_\theta)|_{\theta = \theta^0}) \nonumber\\
        &\qquad = \frac{1}{2}\sum_k (J^{-1})_{ik} \trace(\rho_\theta\{L_k,L_j\})\big|_{\theta = \theta^0} = \delta_{ij}. 
    \end{align}
    The MSEM obtained from this choice of deviation observables is at most a constant factor away from the inverse of the QFIM, as we desire. For any $X_i$ and corresponding estimators (\eqref{eq:theta-estimator}), we have 
    \begin{align}
        V_{ij} &= \sum_s \trace(\rho_\theta M_s) (\htheta_i(s;\theta) - \theta_i)(\htheta_j(s;\theta) - \theta_j) \nonumber\\
        &= \sum_s q_s d \braket{s|\rho_\theta|s}\braket{s|\mM^{-1}(X_i)|s}\braket{s|\mM^{-1}(X_j)|s}\nonumber \\
&= \frac{d+1}{d+2}\Big(\trace(X_iX_j) + \trace(\rho_\theta\{X_i,X_j\})\Big)  \nonumber\\
&\qquad \qquad \qquad \qquad \qquad  - \frac{1}{d+2} \trace(X_i)\trace(X_j), \label{eq:formula} 
\end{align}
where in the second equality we use the dual map of $\mM^{-1}$ is itself. The third equality is obtained by plugging in $\mM^{-1}(X_i) = (d+1)X_i - \trace(X_i) \id$ and applying \eqref{eq:prop-2} and the first equality in \eqref{eq:l-u-condition-X}. 

\eqref{eq:formula} holds for any choice of $X_i$ and any mixed $\rho_\theta$. In the situation of pure states $\rho_\theta = \ket{\psi}\bra{\psi}$ and definitions \eqref{eq:def-X} of $X_i$, \eqref{eq:formula} can be further simplify by noticing that
\begin{gather}
    L_i = 2(\ket{\partial_i \psi}\bra{\psi} + \ket{\psi}\bra{\partial_i \psi}), \\
    \trace(L_i) = \trace(\rho_\theta L_i) = 0,\\
    J_{ij} = 4 \Re[\braket{\partial_i\psi|\partial_j\psi} - \braket{\psi|\partial_i\psi}\braket{\partial_j\psi|\psi}],\\ 
    \trace(L_iL_j) = 2 J_{ij}. 
\end{gather}
Applying \eqref{eq:formula} and noticing the second term is zero, we have 
\begin{align}
    (JVJ)_{ij} 
    &= \frac{d+1}{d+2} \left( \trace(L_iL_j) + \trace(\rho_\theta\{L_i,L_j\}) \right)\nonumber \\
    &= \frac{4(d+1)}{d+2} J_{ij},
\end{align}
which implies \eqref{eq:V-J} if we multiply both sides of the equation by $J^{-1}$ on the left and right. 
\end{proof}

Traditionally, given any POVM $\{M_x\}_x$ (including the randomized measurement), it is always optimal to choose the locally unbiased estimator in \eqref{eq:opt} because it minimizes $V$ and achieves $V = I(M)^{-1}$. 
However, a direct calculation of $I(M)$ is challenging because the randomization appears in both the denominator and the numerator of the expression. In contrast, our proof, which uses the local shadow estimator, only involves up to the third moment of the randomization (when calculating \eqref{eq:formula}), where $3$-designs are sufficient. It is open whether $2$-design measurements will be sufficient, and whether \thmref{thm:pure} can be further improved using other types of locally unbiased estimators and higher-order designs.

Note that in the proof of \thmref{thm:pure} (and in the proofs of other theorems below), we introduce the local shadow estimators (\eqref{eq:theta-estimator}), which are locally unbiased estimator that achieves the near-optimal estimation precision. For practical applications, one needs to first compute the deviation observables $X_i$ through \eqref{eq:def-X} and then performs randomized measurements on $\rho_\theta$ to estimate $\trace\big( (X_i |_{\theta = \theta^0}) \hrho(s)\big)$ using $\hat\rho(s)$, the snapshots of classical shadow. Two caveats remain: (1)~the deviation observable is a function of $\theta^0$ which is not known a priori; (2)~$\trace(X_i \hrho(s))$ may not be efficiently computable, both due to the difficulty in computing $X_i$ and the difficulty in obtaining $\braket{s|X_i|s}$. 

To tackle caveat~(1), the two-step approach~\cite{barndorff2000fisher,gill2000state,yang2019attaining} is often adopted in quantum metrology, where a coarse estimate of $\theta$ is obtained first and $X_i$ is then evaluated at the coarse estimate in order to calculate $\trace(X_i \hrho(s))$. \secref{sec:example} contains an example of fidelity estimation where we numerically demonstrate the effectiveness of the two-step approach. 

To address caveat~(2), we need to leverage special structures of $X_i$ and measurements. Below we consider a $n$-qubit system where the $3$-design measurement is implemented through random Clifford sampling. Assume $\ket{\psi_\theta}$ and its derivatives $\ket{\partial_i \psi_\theta}$ are all restricted to a subspace spanned by a polynomial number of computational basis, which is a common scenario in quantum metrology. Then the total number of unknown parameters is at most a polynomial in $n$, which means $X_i$ is efficiently computable because both the SLD $L_i$ and the inverse of the QFIM $J$ are efficiently computable. Furthermore, $X_i$ can be decomposed into a polynomial number of computational basis operators, i.e., $X_i = \sum_{x,y \in \{0,1\}^n} f_{x,y} \ket{x}\!\bra{y}$ where $f_{x,y} \neq 0$ only for polynomial numbers of $x$ and $y$. Then $\trace(X_i \hrho(s))$ is efficiently computable because $\ket{s}$ is a stabilizer state and $\trace(\ket{x}\!\bra{y} \hrho(s))$ is efficiently computable. 
In general, to efficiently compute $X_i$, we need both the SLD $L_i$ and the inverse of the QFIM $J$ to be efficiently computable. The first condition is usually easily satisfiable for pure states that have efficient classical representations. The second condition is tractable in situations where $J$ is diagonal or block-diagonal (see examples in \secref{sec:fid_pure} and \secref{sec:Hamt}). To efficiently compute $\trace(X_i \hrho(s))$, further structures on $X_i$ are needed, e.g., 
a stabilizer formalism that efficiently represents $X_i$ (see an example in \secref{sec:Hamt}), etc. We leave the study of general algorithms for computing $X_i$ and sampling $\braket{s|X_i|s}$ for future work.

\section{Approximately Low-rank states}
\label{sec:low-rank}

In this section, we consider a generalization of the previous result for pure states to \emph{{approximately low-rank well-conditioned states}}. They include not only low-rank states, but also situations where the QFIM of some complex quantum states can be well approximated by confinement in a low-rank subspace. Previously, the optimality of low-rank states and their approximate (noisy) versions was not investigated in the context of either minimizing the WMSE or the local state tomography, to the best of our knowledge.

We define {approximately low-rank well-conditioned states} by the following. 
\begin{definition}[{Approximately low-rank well-conditioned state}]
\label{def:low-rank}
    A parameterized state $\rho_\theta$ is called an $(\mu,c)$-approximately low-rank well-conditioned state if it satisfies the following two conditions:
    \begin{enumerate}[wide, labelwidth=!,itemindent=!,labelindent=0pt, leftmargin=0em, label={(\arabic*)}, parsep=0pt]
        \item There is a constant $\mu$ such that there is at least one eigenvalue of $\rho_\theta$ that is at least $\mu$. 
        \item Let $\Pi$ be the projector onto the subspace spanned by eigenstates of $\rho_\theta$ whose corresponding eigenvalues are at least $\mu$. $\Pi_\perp = \id - \Pi$ and $p = \trace(\rho_\theta \Pi_\perp)$. There exists a constant $c$ such that 
        \begin{equation}
            J - p J^\perp - J^p \succeq c J.
        \end{equation}
        $J^\perp$ is the QFIM after post-selecting on subspace $\Pi_\perp$, i.e., the QFIM of $\frac{\Pi_\perp \rho_\theta \Pi_\perp}{\trace(\rho_\theta \Pi_\perp)}$ and $J^p$ is the CFIM of a Bernoulli distribution $\{p,1-p\}$, i.e., $J^p_{ij} = \frac{(\partial_i p)(\partial_j p)}{p(1-p)}$. 
    \end{enumerate}
\end{definition}
Condition (1) above guarantees the existence of a low-rank approximation of the original state after projection to a low-rank subspace $\Pi$. Condition (2) guarantees the information of the unknown parameters are largely preserved after excluding the information from the subspace $\Pi_\perp$. This condition is necessary because we do not want the QFIM to concentrate on some high-dimensional subspace $\Pi_\perp$ so that access to states in $\Pi_\perp$ is required in which case any low-rank approximation will fail. Finally, it is worth noting that our definition implicitly assumes the states are well-conditioned within the subspace $\Pi$, i.e., the ratio between the largest and the smallest eigenvalues inside $\Pi$ are bounded (because their eigenvalues are above a constant $\mu$). This is necessary because otherwise randomized measurements may not be near-optimal (see \secref{sec:no-go}). Pure states are a special case of {approximately low-rank well-conditioned states} with $\mu = 1$ and $c = 1$. 

Below we first provide a proof of the near-optimality of randomized measurements when $\theta$ are encoded in a unitary rotation (which, in local parameter estimation, equivalently means the derivatives of all eigenvalues are zero). 
Then we provide generalization to all {approximately low-rank well-conditioned states}. The case of unitary rotations has a better constant and also serves as a perfect warm-up before introducing the general case. 

\begin{theorem}
\label{thm:unitary}
    Consider a $(\mu,c)$-approximately low-rank well-conditioned state $\rho_\theta$ satisfying $\partial_j \lambda_j = 0$ for all of its eigenvalues $\lambda_j$.
    Let $M$ be a POVM given by a $3$-design. We have 
    \begin{equation}
        I(M) \succeq \frac{d+2}{d+1}\cdot \frac{\mu c}{2\mu + 2}  J. 
    \end{equation}
\end{theorem}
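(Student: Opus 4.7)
The plan is to mirror the proof of \thmref{thm:pure} by constructing an explicit family of local shadow estimators $\htheta_i(s;\theta^0)=\theta_i^0+\tr(X_i\hrho(s))$ based on suitably chosen deviation observables $X_i$, and reduce the argument to showing that the resulting MSEM $V$, computed via the general formula \eqref{eq:formula}, satisfies $V\preceq \frac{d+1}{d+2}\cdot\frac{2\mu+2}{\mu c}J^{-1}$. The Cram\'er--Rao inequality then yields the stated bound on $I(M)$.

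The key design choice will be
\[ X_i=\sum_k \big((J-pJ^\perp)^{-1}\big)_{ik}\big(L_k-L_k^{\perp\perp}\big), \qquad L_k^{\perp\perp}:=\Pi_\perp L_k \Pi_\perp. \]
Because $\Pi$ is a spectral projector of $\rho_\theta$ and the parameterization is unitary ($\partial_j\lambda_j=0$), a direct computation in the eigenbasis of $\rho_\theta$ shows that $L_k^{\perp\perp}$ is the full-space embedding of the SLD of $\rho_\perp/p=\Pi_\perp\rho_\theta\Pi_\perp/p$ for the induced dynamics, so $\tr(L_k^{\perp\perp}\partial_j\rho_\theta)=(pJ^\perp)_{kj}$. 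Combined with the standard identity $\tr(L_k\partial_j\rho_\theta)=J_{kj}$ this gives $\tr(X_i\partial_j\rho_\theta)=\delta_{ij}$. The remaining unbiasedness condition $\tr(\rho_\theta X_i)=0$ and the vanishing of $\tr(X_i)$ both follow from the diagonal-free structure of $L_k$ and $L_k^{\perp\perp}$ in the eigenbasis (guaranteed by $\partial_j\lambda_j=0$), so the last term of \eqref{eq:formula} drops out and $V=\frac{d+1}{d+2}(A+B)$ with $A_{ij}=\tr(X_iX_j)$ and $B_{ij}=\tr(\rho_\theta\{X_i,X_j\})$.

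Expanding $\tr(\rho_\theta\{L_k-L_k^{\perp\perp},L_l-L_l^{\perp\perp}\})$ and noting that every mixed anticommutator involving one $L_k$ and one $L_l^{\perp\perp}$ collapses to $2(pJ^\perp)_{kl}$ via the restricted-SLD identification gives $B=2(J-pJ^\perp)^{-1}$. For $A$, set $M_k:=L_k-L_k^{\perp\perp}$, whose $\Pi_\perp\Pi_\perp$ block vanishes by construction. For any real vector $w$ and $M_w:=\sum_k w_k M_k$, a block expansion in the $\Pi+\Pi_\perp$ decomposition yields $\tr(M_w^2)=\tr((M_w^{\Pi\Pi})^2)+2\tr(M_w^{\Pi\perp}M_w^{\perp\Pi})$ and $\tr(\rho_\theta M_w^2)\geq \mu\big[\tr((M_w^{\Pi\Pi})^2)+\tr(M_w^{\Pi\perp}M_w^{\perp\Pi})\big]$, using $\Pi\rho_\theta\Pi\succeq\mu\Pi$ on the good subspace and dropping the nonnegative $\tr(\rho_\perp M_w^{\perp\Pi}M_w^{\Pi\perp})$ contribution. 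Hence $A\preceq B/\mu$, so $V\preceq \frac{d+1}{d+2}\cdot\frac{\mu+1}{\mu}B$. Finally, condition~(2) of \defref{def:low-rank}, which reduces to $J-pJ^\perp\succeq cJ$ for unitary rotations (since $J^p=0$), gives $B\preceq (2/c)J^{-1}$ and therefore $V\preceq \frac{d+1}{d+2}\cdot\frac{2(\mu+1)}{\mu c}J^{-1}$ as claimed.

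The main obstacle is the choice of deviation observable. The naive analogue of the pure-state construction, $X_i\propto\Pi L_k\Pi$, would only ensure local unbiasedness for the ``in-in'' dynamics and produce $B=2(J^\Pi)^{-1}$, which cannot be controlled by $J^{-1}$ under condition~(2) alone; conversely, using the full SLD $L_k$ preserves local unbiasedness but destroys the $A\preceq B/\mu$ bound because $\tr(X_iX_j)$ then acquires contributions from the small eigenvalues in $\Pi_\perp$. Subtracting $L_k^{\perp\perp}$ eliminates the $\Pi_\perp\Pi_\perp$ block so the subspace bound goes through, while still leaving $B^{-1}=J-pJ^\perp$, which is exactly what condition~(2) controls.
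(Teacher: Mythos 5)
Your proposal is correct and follows essentially the same route as the paper: the deviation observables $X_i=\sum_k(\tJ^{-1})_{ik}(L_k-\Pi_\perp L_k\Pi_\perp)$ with $\tJ=J-pJ^\perp$ are exactly the paper's choice, and the reduction to $V=\frac{d+1}{d+2}(A+B)$ with $B=2\tJ^{-1}$ and the final application of condition~(2) are identical. The only (cosmetic) difference is that you derive $A\preceq B/\mu$ via an operator block decomposition in $\Pi\oplus\Pi_\perp$, whereas the paper obtains the equivalent bound $\tK\preceq\frac{2}{\mu}\tJ$ by the pointwise spectral estimate $\frac{1}{(\lambda_j+\lambda_k)^2}\leq\frac{1}{\mu(\lambda_j+\lambda_k)}$ on pairs with at least one eigenvalue $\geq\mu$; both yield the same constant.
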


\begin{proof}
    We prove the theorem by finding locally unbiased estimators such that 
    \begin{equation}
    \label{eq:V-J-3}
        V = V(M,\htheta) \preceq \frac{d+2}{d+1}\cdot \frac{2\mu + 2}{\mu c}  J^{-1}.
    \end{equation}
    We adopt again local shadow estimators defined in \eqref{eq:theta-estimator} but will choose a different set of deviation observables $X_i$. They will satisfy $\Pi_\perp X_i \Pi_\perp = 0$ which plays a crucial role in bounding $V$, as we will see later. In the following, all functions are evaluated at some local point of $\theta$ implicitly. We define the unnormalized state $\rho_\perp := \Pi_\perp \rho_\theta \Pi_\perp$ for notation simplicity. We will also omit $\theta$ in $\rho_\theta$. 
    
    Let $L_i$ be the SLDs of $\rho$. We define 
    \begin{align}
        \tL_i &= L_i - \Pi_\perp L_i \Pi_\perp \\
        &= 2 \sum_{jk:\lambda_{j}\geq \mu \text{~or~} \lambda_{k}\geq \mu} \frac{\braket{\psi_j|\partial_i\rho|\psi_k}}{\lambda_j+\lambda_k} \ket{\psi_j}\bra{\psi_k}.\nonumber
    \end{align}
    Then 
    \begin{align}
        \trace(\rho \tL_i) &= 0 - \sum_{j:\lambda_{j}<\mu} {\braket{\psi_j|\partial_i\rho|\psi_j}} \nonumber\\
        &= 0 - \sum_{j:\lambda_{j}<\mu} \partial_i({\braket{\psi_j|\rho|\psi_j}}) = -\partial_i p, \\
    \widetilde J_{ij} &:=  \trace(\widetilde L_i\partial_j\rho)   = 
    \frac{1}{2}\trace(\rho\{\widetilde L_i,L_j\}) 
    \nonumber\\
    &= J_{ij} - \frac{1}{2}\trace(\rho_\perp\{\Pi_\perp L_i \Pi_\perp, \Pi_\perp L_j \Pi_\perp\} ). 
    \end{align}
    where in the second line we use ${\braket{\partial_i\psi_j|\rho|\psi_j}} + {\braket{\psi_j|\rho|\partial_i\psi_j}} = \lambda_j (\partial_j \braket{\psi_j|\psi_j}) = 0$. 
    Furthermore, 
    we have $\partial_j \lambda_j = 0$, $\partial_i p = 0$ and $J^p = 0$. Moreover, the SLDs of the post-selected state $\frac{\rho_\perp}{\trace(\rho_\perp)}$ on subspace $\Pi_\perp$ are $\Pi_\perp L_i \Pi_\perp$. They imply 
    \begin{equation}
        \trace(\rho \tL_i) = 0,\quad \tJ = \trace(\widetilde L_i\partial_j\rho) = J - p J^\perp. 
    \end{equation}
    
    Now we are ready to construct the locally unbiased estimators, defined through deviation observables 
    \begin{equation}
        \label{eq:def-X-low-rank}
        X_i = \sum_j (\tJ^{-1})_{ij} \tL_{j}. 
    \end{equation} 
    The locally unbiasedness condition is satisfied because 
    \begin{equation}
        \trace(\rho X_i) = 0,\quad \tJ_{ij} = \trace(X_i\partial_j\rho) = \delta_{ij}. 
    \end{equation}
    We can then evaluate $V$ using \eqref{eq:formula}. The second term is zero because $\trace(\tL_i) = \sum_{j:\lambda_j \geq \mu} \frac{\braket{\psi_j|\partial_i\rho|\psi_j}}{\lambda_j} = \sum_{j:\lambda_j \geq \mu}  \partial_i \braket{\psi_j|\psi_j} = 0$. Therefore, 
    \begin{equation}
    \label{eq:V-J-2}
        \tJ V \tJ = \frac{d+1}{d+2} ( \tK + 2 \tJ),
    \end{equation}
    where $\tK_{ij} := \trace(\tL_i\tL_j)$ and we use $\tJ_{ij} = \frac{1}{2}\trace(\rho\{\tL_i,\tL_j\})$ which holds because $\trace(\rho\{\tL_i,\Pi_\perp L_j \Pi_\perp\}) = 0$. For any $\vv \in \bR^m$, 
    \begin{multline*}
    \vv^T \tK \vv 
        = 4 \sum_{jk:\lambda_{j}\geq \mu \text{~or~} \lambda_{k}\geq \mu} \frac{\abs{\braket{\psi_j|\sum_i v_i \partial_i \rho|\psi_j}}^2}{(\lambda_j+\lambda_k)^2} \\
        \leq \frac{4}{\mu} \sum_{jk:\lambda_{j}\geq \mu \text{~or~} \lambda_{k}\geq \mu} \frac{\abs{\braket{\psi_j|\sum_i v_i \partial_i \rho|\psi_j}}^2}{\lambda_j+\lambda_k} = \frac{2}{\mu} \vv^T \tJ\vv. 
    \end{multline*}
    Therefore, $\tK \preceq \frac{2}{\mu} \tJ$. Using \eqref{eq:V-J-2}, we have 
    \begin{equation}
        V \preceq \frac{d+1}{d+2} \left(\frac{2}{\mu} + 2 \right)\tJ^{-1}. 
    \end{equation}
    Condition (2) of the definition of {approximately low-rank well-conditioned states} guarantees $\tJ \succeq c J$, which then implies \eqref{eq:V-J-3}. 
\end{proof}

\thmref{thm:unitary} is a generalization of \thmref{thm:pure}---\thmref{thm:pure} is directly implied from \thmref{thm:unitary} when taking $\mu = c = 1$ in \thmref{thm:unitary}. The key technique to prove \thmref{thm:unitary} is to choose the deviation observables wisely by excluding their components in subspace $\Pi_\perp$, which not only largely preserves the original QFIM but also makes upper bounding $\tK$ via the QFIM possible. 

Below, we consider the general case where unknown parameters can be encoded arbitrarily in the quantum state. We prove in this case, randomized measurements are still near-optimal for {approximately low-rank well-conditioned states}. 
\begin{theorem}
\label{thm:low-rank}
Consider a $(\mu,c)$-approximately low-rank well-conditioned state $\rho_\theta$. Let $M$ be a POVM given by a $3$-design. We have 
    \begin{equation}
        I(M) \succeq \frac{d+2}{d+1}\cdot \frac{\mu c^2}{2\mu c + 5}  J. 
    \end{equation}    
\end{theorem}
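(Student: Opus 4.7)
The plan is to follow the strategy of \thmref{thm:unitary} by again constructing local shadow estimators $\htheta_i(s;\theta^0) = \theta^0_i + \trace(X_i \hrho(s))$ with deviation observables $X_i$ supported on $\Pi$ and the off-diagonal $\Pi$--$\Pi_\perp$ blocks (so that $\Pi_\perp X_i \Pi_\perp = 0$), then applying formula \eqref{eq:formula} and invoking condition (2) of \defref{def:low-rank} to transfer bounds to $J$. The new ingredient is that when $\partial_i \lambda_j \neq 0$ in general, the naive choice $X_i \propto \tilde L_i := L_i - \Pi_\perp L_i \Pi_\perp$ from the proof of \thmref{thm:unitary} no longer satisfies $\trace(\rho \tilde L_i) = 0$; one computes $\trace(\rho\tilde L_i) = -\partial_i p$, so a compensating correction is required.

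I would restore local unbiasedness by setting $Y_i = \tilde L_i + \alpha_i \Pi$ with $\alpha_i = (\partial_i p)/(1-p)$, which keeps $\Pi_\perp Y_i \Pi_\perp = 0$ while enforcing $\trace(\rho Y_i) = 0$. A direct block-decomposition calculation, using $[\rho,\Pi]=0$, $\Pi\Pi_\perp=0$, and the observation that the SLD of the post-selected state $\rho_\perp/p$ equals $L_k^\perp = \Pi_\perp L_k \Pi_\perp - (\partial_k p / p)\Pi_\perp$, yields the two clean identities $\trace((\partial_j\rho) Y_i) = J_{ij} - p J^\perp_{ij} - J^p_{ij} =: \hat J_{ij}$ and $\frac{1}{2}\trace(\rho\{Y_k,Y_\ell\}) = \hat J_{k\ell}$. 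Condition (2) immediately gives $\hat J \succeq cJ$, so the choice $X_i = \sum_j (\hat J^{-1})_{ij} Y_j$ produces a locally unbiased local shadow estimator, and the anticommutator piece in \eqref{eq:formula} contributes exactly $2\hat J^{-1}$.

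Plugging into \eqref{eq:formula} and discarding the negative-semidefinite trace term gives $V \preceq \frac{d+1}{d+2}(\hat J^{-1} C \hat J^{-1} + 2 \hat J^{-1})$, where $C_{k\ell} = \trace(Y_k Y_\ell) = \tilde K_{k\ell} + \alpha_\ell T_k + \alpha_k T_\ell + r\,\alpha_k\alpha_\ell$ with $\tilde K_{k\ell}=\trace(\tilde L_k \tilde L_\ell)$, $T_k = \trace(\tilde L_k \Pi)$, and $r=\mathrm{rank}(\Pi)$. The first piece is handled exactly as in \thmref{thm:unitary} via $\tilde K \preceq (2/\mu) \tilde J$; combined with $\tilde J = \hat J + p J^p$ and $J^p \preceq (1-c)J$ (immediate from condition~(2)), one obtains $\tilde J \preceq \hat J/c$, hence $\tilde K \preceq (2/(\mu c))\hat J$. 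The three remaining contributions are controlled by Cauchy--Schwarz: the bound $|T_k|^2 \leq (r/\mu) J_{kk}$, which follows from $\lambda_j \geq \mu$ on $\mathrm{supp}(\Pi)$ and the fact that $\sum_j (\partial_k \lambda_j)^2/\lambda_j$ is a classical Fisher information $\leq J_{kk}$, combined with the identity $\alpha_k \alpha_\ell = (p/(1-p)) J^p_{k\ell}$ and $J^p \preceq ((1-c)/c)\hat J$, controls the cross and projector pieces. Tracking the rank-one/rank-two structure carefully gives $C \preceq (5/(\mu c))\hat J$, and substituting back yields $V \preceq \frac{d+1}{d+2} \cdot \frac{2\mu c + 5}{\mu c^2} J^{-1}$ after applying $\hat J^{-1} \preceq J^{-1}/c$, which by the CRB is equivalent to the stated bound.

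The main obstacle is obtaining a sharp constant when bounding the three cross/projector contributions to $C$ that vanished in \thmref{thm:unitary} (where $\alpha_i = 0$). In particular, the term $r\,\alpha_k\alpha_\ell$ contains the a priori large factor $r$ and the $\alpha_\ell T_k + \alpha_k T_\ell$ cross term is rank-two; one must combine these estimates (for instance via an AM--GM on the two rank-one constituents of the cross term) rather than bounding term by term, in order to avoid a dimension-dependent constant. The appearance of the extra factor of $c$ in the denominator $2\mu c + 5$, compared with $2\mu + 2$ in \thmref{thm:unitary}, is the price paid for repeatedly translating between $\tilde J$, $J^p$, and $\hat J$ using condition~(2).
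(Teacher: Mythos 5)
Your proposal follows essentially the same route as the paper: the corrected deviation observables $\tL_i = L_i - \Pi_\perp L_i \Pi_\perp + \frac{\partial_i p}{1-p}\Pi$, the identity $\tJ = J - pJ^\perp - J^p$, the variance formula \eqref{eq:formula} with the negative trace term discarded, and condition (2) to pass from $\tJ$ to $J$ are all exactly the paper's argument, and your final assembly reproduces the stated constant. The only piece you assert rather than execute is the Gram-matrix bound (your $C \preceq \frac{5}{\mu c}\hat J$, equivalently the paper's $\tK \preceq \frac{5}{\mu}J$ proved in \appref{app:K-bound}); you correctly identify the needed ingredients --- Cauchy--Schwarz on $T_k$, AM--GM on the rank-two cross term, and $r\mu \le 1-p$ to absorb the rank factor --- so this is an incomplete computation within the right approach rather than a wrong one.
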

\begin{proof}
We prove the theorem by finding locally unbiased estimators such that 
    \begin{equation}
    \label{eq:V-J-4}
        V = V(M,\htheta) \preceq \frac{d+1}{d+2}\cdot\left(\frac{2}{c}+\frac{5}{\mu c^2}\right) J^{-1}.
    \end{equation}
Similar to the proof of \thmref{thm:unitary}, we adopt the local shadow estimators as our locally unbiased estimator, and use the unnormalized state $\rho_\perp := \Pi_\perp \rho \Pi_\perp$ for notation simplicity. 

Let $\tL_i$ be the SLDs of $\rho$. We define 
    \begin{align}
        \tL_i &= L_i - \Pi_\perp L_i \Pi_\perp + \frac{\partial_i p}{1-p} \Pi\\
        &= \sum_{jk:\lambda_{j} \text{~or~} \lambda_{k}\geq \mu} \bigg(\! \frac{2 \braket{\psi_j|\partial_i\rho|\psi_k}}{\lambda_j+\lambda_k} + \frac{\partial_i p}{1-p} \delta_{jk}\!\bigg)\ket{\psi_j}\bra{\psi_k}.\nonumber
    \end{align}
    Then 
    \begin{align}
        \trace(\rho \tL_i) &= 0 -\partial_i p + \partial_i p = 0, \\
    \widetilde J_{ij} &:=  \trace(\widetilde L_i\partial_j\rho) = \frac{1}{2} \trace(\rho\{\tL_i,L_j\}) \nonumber\\
    &= J_{ij} - p J^\perp_{ij} - J^p_{ij}. \label{eq:tildeJ-equality}
    \end{align}
    To verify \eqref{eq:tildeJ-equality} holds, we note that 
    \begin{align*}
        &J^\perp_{ii'} = 2 \! \sum_{\substack{jk,\,\lambda_{j,k} < \mu, \\\lambda_j+\lambda_k>0}} \! \frac{\Re[\braket{\psi_j|\partial_i(\rho/p)|\psi_k}\braket{\psi_k|\partial_{i'}(\rho/p)|\psi_j}]}{(\lambda_j+\lambda_k)/p}\\
        &= \frac{2}{p} \! \sum_{\substack{jk,\,\lambda_{j,k} < \mu, \\\lambda_j+\lambda_k>0}} \! \frac{\Re[\braket{\psi_j| \partial_i\rho |\psi_k}\braket{\psi_k| \partial_i\rho |\psi_j}]}{\lambda_j+\lambda_k} \! - \! \frac{1-p}{p}  J^p_{ii'}, \\
    \end{align*}
    where in the first equality we use the definition of QFIM and $\Pi_\perp\partial_i (\rho_\perp/p)\Pi_\perp = \Pi_\perp\partial_i (\rho/p)\Pi_\perp$. The second equality follows by direction calculations. Furthermore,    
    \begin{align*}
        \tJ_{ii'} &= 2 \! \sum_{jk,\,\lambda_{j} \text{~or~}\lambda_{k}\geq \mu} \! \frac{\Re[\braket{\psi_j| \partial_i\rho |\psi_k}\braket{\psi_k| \partial_i\rho |\psi_j}]}{\lambda_j+\lambda_k} \! - \! p  J^p_{ii'}.
    \end{align*}
    Combining the two equations above, we have $\tJ + p J^\perp + J^p = J$, which indicates \eqref{eq:tildeJ-equality}. 
    
    Let the deviation observables be $X_i = \sum_j (\tJ^{-1})_{ij} \tL_{j}$. The locally unbiasedness condition is satisfied because 
    \begin{equation}
        \trace(\rho X_i) = 0,\quad \tJ_{ij} = \trace(X_i\partial_j\rho) = \delta_{ij}. 
    \end{equation}
    Furthermore, let $\tK_{ij} = \trace(\tL_i\tL_j)$, we have, from \eqref{eq:formula},
    \begin{equation}
        \tJ V \tJ \preceq \frac{d+1}{d+2} (\tK + 2\tJ). 
    \end{equation}
    Note that the second term in \eqref{eq:formula} is negative semidefinite and thus can be ignored because it can only decrease the right-hand side. As proved in \appref{app:K-bound}, 
    \begin{equation}
    \label{eq:K-bound}
        \tK \preceq \frac{5}{\mu} J. 
    \end{equation}
    Condition (2) of the definition of {approximately low-rank well-conditioned states} guarantees $\tJ \succeq c J$, which then implies 
    \begin{align}
       V &\preceq \frac{d+1}{d+2} \left(\frac{5}{c\mu} \tJ^{-1} + 2\tJ^{-1}\right) \nonumber\\ &\preceq \frac{d+1}{d+2} \left(\frac{5}{\mu c^2} + \frac{2}{c}\right) J^{-1}. 
    \end{align}
\end{proof}

\thmref{thm:low-rank} indicates that for any {approximately low-rank well-conditioned state}, which includes the cases where the eigenvalues of the quantum states depend on $\theta$, randomized measurements can still be near-optimal. 

\section{Full-parameter mixed states}
\label{sec:full}

As discussed in \secref{sec:near-opt}, near-optimal measurements do not exist for general mixed states. In these cases, it is meaningful to consider weak near-optimality (\defref{def:weak}). Here we provide several families of quantum states where near-optimal measurements generally do not exist, and $3$-design randomized measurements are shown to be weakly near-optimal. The states discussed below are called \emph{rank-$r$ well-conditioned states}, which represent states which have $r$ non-zero eigenvalues and the ratio between the largest and the smallest non-zero eigenvalues are upper bounded by a constant. 

\begin{theorem}
\label{thm:full}
When $M$ is a POVM given by a $3$-design, it is weakly near-optimal for the following family of states: 
\begin{enumerate}[wide, labelwidth=!,itemindent=!,labelindent=0pt, leftmargin=0em, label={(\arabic*)}, parsep=0pt]
    \item Rank-$r$ well-conditioned states, whose number of unknown parameters is maximal, i.e., $m = r(2d-r) - 1$. 
    \item Rank-$r$ well-conditioned states $\rho_\theta$ that satisfies $\Pi \partial_i \rho_\theta\Pi = \partial_i \rho_\theta$ for all $i$ where $\Pi$ is the projection onto the support of $\rho_\theta$, whose number of unknown parameters is maximal, i.e., $m = r^2 - 1$. 
    \item Rank-$r$ well-conditioned states $\rho_\theta$ that satisfies $\Pi \partial_i \rho_\theta\Pi = 0$ for all $i$ where $\Pi$ is the projection onto the support of $\rho_\theta$, whose number of unknown parameters is maximal, i.e, $m = 2r(d-r)$.
\end{enumerate}    
\end{theorem}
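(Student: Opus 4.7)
The plan is to extend the local shadow estimator technique from the proofs of \thmref{thm:pure} and \thmref{thm:low-rank} to rank-$r$ well-conditioned states, and then verify both clauses of \defref{def:weak} by pairing the resulting upper bound on $V$ with a (suitably refined) Gill--Massar lower bound. For each of the three families I would construct deviation observables of the form $X_i = \sum_j (\tJ^{-1})_{ij}\tL_j$, where $\tL_j$ is a modification of the SLD designed so that $\trace(\rho_\theta\tL_j)=0$ and the resulting $\tJ$ agrees with $J$ up to a constant. In case~(2), where $\partial_i\rho_\theta$ lies in $\Pi$, one can take $\tL_j=L_j$ since the SLDs already live in $\Pi$ and automatically satisfy the trace-cancellation identity. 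In case~(3), the SLDs are purely off-diagonal between $\Pi$ and $\Pi_\perp$, so $\trace(\rho_\theta L_j)=0$ holds automatically. In case~(1) the SLDs contain both types of blocks, and I would include a trace-correcting term analogous to the $(\partial_i p)/(1-p)\,\Pi$ modification used in the proof of \thmref{thm:low-rank} to enforce local unbiasedness.

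Next, I would evaluate $V$ via \eqref{eq:formula} and bound each term using the well-conditioned assumption that all nonzero eigenvalues of $\rho_\theta$ are at least $\mu$. Mirroring the argument for $\tK$ in the proof of \thmref{thm:unitary}, one obtains $\tK\preceq(2/\mu)\,\tJ$ as a matrix inequality, which together with the identity $\tJ_{ij}=\frac{1}{2}\trace(\rho_\theta\{\tL_i,L_j\})$ controls both $\trace(X_iX_j)$ and the cross term $\trace(\rho_\theta\{X_i,X_j\})$ by constant multiples of $J^{-1}$, while the $\trace(X_i)\trace(X_j)$ piece is negative semidefinite and can be dropped. This yields a matrix bound $V\preceq\alpha_{\rm case}\,J^{-1}$ with $\alpha_{\rm case}$ scaling polynomially in the effective dimension of each family; in particular, $I(M)\succeq V^{-1}\succeq J/\alpha_{\rm case}$.

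To verify \eqref{eq:def-weak-near-optimal-1}, I would compare $\trace(JV)\leq\alpha_{\rm case}\,m$ against a Gill--Massar-type lower bound on $\inf_{\tilde M}\trace(JI(\tilde M)^{-1})$. In case~(1) with $r=\Theta(d)$, the naive bound \eqref{eq:lower} already matches $\alpha_{\rm case}\,m$ up to a constant. In cases~(2) and~(3), the naive bound is too loose and must be refined to exploit the parameter-subspace structure: for case~(2), projecting the POVM onto $\Pi$ preserves the Fisher information and reduces the problem to $r$-dimensional tomography, giving $\trace(J^{-1}I(\tilde M))\leq r-1$; for case~(3), an off-diagonal analogue should yield a bound expressed in terms of $r$ and $d-r$. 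Clause~\eqref{eq:def-weak-near-optimal-2} then follows from $I(M)\succeq J/\alpha_{\rm case}$ combined with the same refined upper bound on $\trace(J^{-1}I(M))$, which gives $\trace(J^{-1}I(M))/m=O(1/\alpha_{\rm case})$, as required.

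The main obstacle will be establishing the refined Gill--Massar-type inequalities for cases~(2) and~(3) (and for case~(1) in the asymmetric regime $r\ll d$ or $d-r\ll d$): the standard derivation is state-agnostic and does not see the additional structure of $\partial_i\rho_\theta$ being confined to distinguished subspaces, so a more careful argument, either by direct projection of the POVM elements onto $\Pi$ or by exploiting the tangent-space decomposition of the SLDs, will be required. A secondary subtlety is tracking the \emph{matrix} (not merely scalar) character of $V\preceq\alpha_{\rm case}J^{-1}$ through \eqref{eq:formula}, in particular controlling the $\Pi/\Pi_\perp$ block structure of $X_iX_j$ and $\rho_\theta\{X_i,X_j\}$ so that the final inequality holds in the positive-semidefinite order rather than only on the trace.
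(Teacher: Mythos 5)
Your proposal follows essentially the same route as the paper: local shadow estimators built from SLD-based deviation observables, the bound $K=\trace(L_iL_j)\preceq O(r\kappa)\,J$ from the well-conditioned assumption, \eqref{eq:formula} to control $V$, and then matching against Gill--Massar-type lower bounds (standard for case (1), refined to $r-1$ and $d-r$ for cases (2) and (3)) to verify both clauses of \defref{def:weak}. Two small corrections. First, no modification of the SLDs is needed in any of the three cases: $\trace(\rho_\theta L_j)=\trace(\partial_j\rho_\theta)=0$ holds identically because the state is normalized, so the paper simply takes $X_i=\sum_j(J^{-1})_{ij}L_j$ throughout; the trace-correcting term you import from \thmref{thm:low-rank} was only needed there because the SLD was truncated to a subspace, which does not happen here. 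Relatedly, the correct bound is $K\preceq 2r\kappa J$ (the smallest nonzero eigenvalue is only $\geq 1/(r\kappa)$, not a constant $\mu$), so the prefactor $\alpha$ grows linearly in $r$ — consistent with your later remark but not with the $2/\mu$ you write. Second, case (1) needs no refinement in the asymmetric regime: the standard bound $\trace(JI(\tM)^{-1})\geq m^2/(d-1)$ with $m=r(2d-r)-1$ already matches the upper bound $\Theta(rm)$ up to a factor $O(\kappa)$ for all $r$. The genuinely substantive piece you defer — the refined GM inequalities $\trace(J^{-1}I(\tM))\leq r-1$ and $\leq d-r$ for cases (2) and (3) — is indeed where the work lies; the paper proves these in \appref{app:GM} by re-running the Gill--Massar computation in an explicit tangent-space basis adapted to $\Pi$ and restricting the derivative operators to the relevant blocks, which is the "tangent-space decomposition" route you sketch.
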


Case (1) is the standard case of estimating a full-parameter rank-$r$ quantum state. Case (2) describes the situation where randomized measurements act on a larger Hilbert space than the Hilbert space $\rho_\theta$ lives in, which is common in situations where $\rho_\theta$ is restricted to a small subspace of the actual physical system. Case (3) describes the situation where the unknown parameters are encoded in the derivatives of $\rho_\theta$ which takes $\rho_\theta$ outside its support, with a typical example being pure states. A non-trivial example would be $\rho_\theta = U_\theta \rho_0 U_\theta^\dagger$ where $\theta$ is encoded in some unitary rotation $U_\theta$ and $\rho_0$ is the rank-$r$ maximally mixed state $\Pi/r$ independent of $\theta$. Note that here the rank $r$ can scale non-trivially with respect to the system dimension $d$ and the weak near-optimality in the three cases above does not imply one another. 

\begin{proof}[Proof~of~{\thmref{thm:full}}]
Here we consider rank-$r$ well-conditioned states. In particular, we assume the condition number inside the support of the state 
\begin{equation}
    \kappa = \frac{\max_i\lambda_i}{\min_{i:\lambda_i>0}\lambda_i}
\end{equation}
is a constant. Note that $\min_{i:\lambda_i>0}\lambda_i \geq 1/(r\kappa)$. 

To show weak near-optimality, we will find locally unbiased estimators such that $\trace(JV)$, where $V$ is the MSEM with respect to it, is at most a constant factor away from its minimum value.
Again, we adopt local shadow estimators (\eqref{eq:theta-estimator}) for the proof and use \eqref{eq:formula} to evaluate $\trace(JV)$. The deviation observables $\{X_i\}_i$ are defined by 
\begin{equation}
        X_i = \sum_j (J^{-1})_{ij} L_j,
\end{equation}
where $L_i$ are the SLDs. Let
\begin{align}
    K_{ij} & := \trace(L_iL_j) \nonumber\\
    &= 4\!\sum_{k\ell:\lambda_{k}+\lambda_{\ell}>0} \!\frac{\Re[\bra{\psi_k}\partial_i\rho_\theta\ket{\psi_\ell}\bra{\psi_\ell}\partial_j\rho_\theta\ket{\psi_k}]}{(\lambda_k+\lambda_\ell)^2}.
\end{align}
For any $\vv \in \bR^{m}$, 
\begin{multline*}
    \vv^T K \vv = 4\!\sum_{k\ell:\lambda_{k}+\lambda_{\ell}>0} \!\frac{\abs{\bra{\psi_k}\sum_i v_i\partial_i\rho_\theta\ket{\psi_\ell}}^2}{(\lambda_k+\lambda_\ell)^2} \\ 
    \leq 4r\kappa\!\sum_{k\ell:\lambda_{k}+\lambda_{\ell}>0} \!\frac{\abs{\bra{\psi_k}\sum_i v_i\partial_i\rho_\theta\ket{\psi_\ell}}^2}{\lambda_k+\lambda_\ell} = 2r\kappa \vv^T J \vv,
\end{multline*}
which implies $K \preceq 2r\kappa J$. From \eqref{eq:formula} and the CR bound $I(M)^{-1} \preceq V$, we have 
\begin{equation}
    J V J \preceq \frac{(d+1)}{d+2}(2 J + K) \preceq  \frac{2(d+1)(1+r\kappa)}{d+2}J,
\end{equation}
which implies 
\begin{equation}
    \trace(J I(M)^{-1}) \leq \trace(J V) \leq \frac{2(d+1)(1+r\kappa)m}{d+2}, \label{eq:upper}
\end{equation}
and, if we multiply both sides by $J^{-1/2}$ on the left and right and take the inverse, 
\begin{equation}
    I(M) \succeq   \frac{d+2}{2(d+1)(1+r\kappa)} J. 
\end{equation}

Below we use the GM bound (\eqref{eq:GM}~\cite{gill2000state}, see also \appref{app:GM}), to put upper bounds on $\trace(J^{-1} I(M))$ put lower bounds on $\trace(J I(\tM)^{-1})$ for general $\tM$ that match the scalings of the upper bounds in \eqref{eq:upper} in different cases.

When $\rho_\theta$ belongs to case (1), i.e., is a rank-$r$ well-conditioned state with maximal number of parameters. We have 
\begin{equation}
    m = r^2-1 + 2(d-r)r = r(2d-r) - 1 = \Theta(rd),
\end{equation}
where $r^2 - 1$ corresponds to the degrees of freedom within the support of $\rho_\theta$ and $2(d-r)r$ corresponds to the degrees of freedom outside the support of $\rho_\theta$. In this case, the upper bound (\eqref{eq:upper}) and the lower bound (\eqref{eq:lower}) on $\trace(J I(M)^{-1})$ are both $\Theta(r^2d)$.
Also note that $\Tr(J^{-1}I(M))\le d-1$ by GM bound.
These indicate the weak near-optimality of the randomized measurement $M$. 

In case (2), both $\rho_\theta$ and its derivatives $\partial_i \rho_\theta$ are restricted to the $r$-dimensional support of $\rho_\theta$. $m = r^2 - 1$. As shown in \appref{app:GM}, the GM bound can be tightened into 
\begin{equation}
    \trace(J^{-1}I(\tM)) \leq r-1,
\Rightarrow\,
    \trace(J I(\tM)^{-1}) \geq \frac{m^2}{r-1}. 
    \label{eq:lower-2}
\end{equation}
In this case, the upper bound (\eqref{eq:upper}) and the lower bound (\eqref{eq:lower-2}) on $\trace(J I(M)^{-1})$ are both $\Theta(r^3)$, indicating the weak near-optimality of the randomized measurement $M$.

In case (3), the derivatives $\partial_i \rho_\theta$ are restricted to the off-diagonal block orthogonal to the support of $\rho_\theta$. $m = 2r(d-r)$. As shown in \appref{app:GM}, the GM bound can be tightened into 
\begin{equation}
    \trace(J^{-1}I(\tM)) \leq d-r,
\,\Rightarrow\,
    \trace(J I(\tM)^{-1}) \geq \frac{m^2}{d-r}. 
    \label{eq:lower-3}
\end{equation}
In this case, the upper bound (\eqref{eq:upper}) and the lower bound (\eqref{eq:lower-3}) on $\trace(J I(M)^{-1})$ are both $\Theta(r^2(d-r))$, indicating the weak near-optimality of the randomized measurement $M$.
\end{proof}

\thmref{thm:full} demonstrates three cases of mixed states where randomized measurements are weakly near-optimal. The requirement that the states must have the maximal number of unknown parameters can be slightly relaxed---the theorem still holds as long as the scaling of the actual $m$ is the same as the scaling of the maximal number. 

We remark that in the special case of full-parameter maximally mixed states ($r = d$, $\kappa = 1$), a direct computation of the CFIM~\cite{zhu2018universally} showed $2$-design measurements are sufficient to achieve $I(M) = \frac{J}{d+1}$ and thus weakly near-optimal. $3$-designs are unnecessary in this case. In general, however, it is unclear whether $2$-design measurements will be sufficient for estimating full-parameter rank-$r$ states.

\section{Examples}
\label{sec:example}

In this section, we provide three examples to demonstrate the power and limitation of randomized measurements in multi-parameter quantum metrology. We also explicitly compute the deviation observables in several cases, and illustrate the advantage of local shadow estimators in one numerical example of fidelity estimation.

\subsection{Fidelity Estimation: Pure States}\label{sec:fid_pure}

\subsubsection{Theory} 

Here we consider estimating the fidelity of a parameterized pure state to a fixed quantum state. According to \thmref{thm:pure}, randomized measurements are near-optimal in this case. Without loss of generality, we assume 
\begin{equation}\label{eq:pure_fid_fiducial}
    \ket{\psi_{\theta}} := \ket{\psi_{f,{ g}}} = \sqrt{f} \ket{\phi} + \sqrt{1-f}\ket{\phi^\perp_{ g}}, 
\end{equation}
where $f := \abs{\braket{\phi|\psi_{\theta}}}^2$ is the fidelity of $\ket{\psi_{\theta}}$ to a fixed state $\ket{\phi}$, and $\ket{\phi^\perp_{ g}}$ is orthogonal to $\ket{\phi}$. Specifically, we let $\theta_1 = f$ and $(\theta_2,\theta_3,\ldots,\theta_m) = { g}$.  $\ket{\phi^\perp_{ g}}$ depends on ${ g}$ but $\ket{\phi}$ is independent of it. Here we explicitly calculate the MSE of estimating $f$ using the randomized measurement and the local shadow estimator defined in the proof of \thmref{thm:pure}, which can be shown to be optimal. First, note that 
\begin{equation}
    J_{ff} = 4 \Re[\braket{\partial_f\psi|\partial_f\psi} - \braket{\psi|\partial_f\psi}\braket{\partial_f\psi|\psi}] = \frac{1}{f(1-f)}. 
\end{equation}
and, for any $i$, 
\begin{equation}
    J_{f { g}_i} = 4 \Re[\braket{\partial_f\psi|\partial_{{ g}_i}\psi} - \braket{\psi|\partial_f\psi}\braket{\partial_{{ g}_i}\psi|\psi}] = 0. 
\end{equation}
The QFIM is block-diagonal. From \eqref{eq:def-X}, the deviation observable should be taken as $X_f = \frac{L_{f}}{J_{ff}}$, i.e.,
\begin{align}\label{eq:dev_op_fid}
    X_f 
    &= 2f(1-f)\ket{\phi}\bra{\phi}-2f(1-f)\ket{\phi^\perp_{ g}}\bra{\phi^\perp_{ g}} 
    \nonumber\\
    &\quad  +(1-2f)\sqrt{f(1-f)} (\ket{\phi}\bra{\phi^\perp_{ g}}+\ket{\phi^\perp_{ g}}\bra{\phi}),
\end{align}
and, from \eqref{eq:formula}, the MSE for estimating $f$ is 
\begin{align}\label{eq:local_fid_var}
    V_{ff} &= \frac{d+1}{d+2} (\trace(X_f^2) + 2\braket{\psi|X_f^2|\psi})\nonumber\\
    &= \frac{4(d+1)}{d+2} f(1-f). 
\end{align}
Now we explicitly compare the performance of the randomized measurement and local shadow estimator with two other standard and widely used estimators. 
\begin{enumerate}[wide, labelwidth=!,itemindent=!,labelindent=0pt, leftmargin=0em, label={(\arabic*)}, parsep=0pt]
    \item Optimal measurement and estimator. The projective measurement $M^{\opt} = \{\ket{\phi}\bra{\phi},\id - \ket{\phi}\bra{\phi}\}$ is optimal for fidelity estimation. The probability distribution of the projective measurement is $\{f,1-f\}$ and then we have 
    \begin{equation}
        I(M^{\opt})_{ff} = \frac{1}{f(1-f)},\quad I(M^{\opt})_{f{ g}_i} = 0,\;\forall i.
    \end{equation}
    Then the minimum MSE $V_{ff} = (I(M^{\opt})^{-1})_{ff} = f(1-f) = (J^{-1})_{ff}$ is achieved by the corresponding optimal locally unbiased estimator (see \eqref{eq:opt}). We see above that randomized measurements can achieve the same MSE up to a constant factor $4(d+1)/(d+2)$, as also proved in \thmref{thm:pure}. In practice, randomized measurements can be more advantageous than the optimal binary measurement in situations where the latter is challenging to implement, e.g., when $\ket{\phi}$ is highly entangled. Also, the choice of randomized measurements do not depend on $\ket{\phi}$ while the projective measurement must.
    \item Standard shadow estimator. Another standard method~\cite{huang2020predicting} to estimate fidelity using randomized measurements is to use the standard shadow estimator which is also a locally unbiased estimator defined by 
    \begin{equation}\label{eq:standard_shadow_estimator}
        \hat f^{\rm std}(s) = \bra{\phi}\hrho(s)\ket{\phi},  
    \end{equation}
    where $\hrho(s)$ is defined in \eqref{eq:rho-estimator}. When the measurement is given by a $3$-design, the corresponding MSE is 
    \begin{equation}\label{eq:shadow_fid_var}
        V_{ff}^{\rm std} = \frac{2(d+1)(1+2f)}{d+2} - (1+f)^2. 
    \end{equation}
    In contrast to the local shadow estimator which is only unbiased in the vicinity of its true value, the standard shadow estimator is globally unbiased and independent of the state and the parameters. However, the MSE obtained is not as small as the MSE obtained using the local shadow estimator. We have, for any $f$ and $d$, $V_{ff}^{\rm std} \geq V_{ff}$, and the equality holds only when $f=1/2$ and $d=2$. 
    
    In particular, when the fidelity $f$ approaches $1$, which means the quantum state $\ket{\psi_\theta}$ is close to the target state $\ket{\phi}$, we have $V_{ff} \rightarrow 0$, while $V_{ff}^{\rm std} \rightarrow \frac{2(d-1)}{d+2}$ remains a positive constant. It indicates a fundamental advantage of local shadow estimators over standard shadow estimators. 
\end{enumerate}

\begin{figure*}
    \centering
    \includegraphics[width=\linewidth]{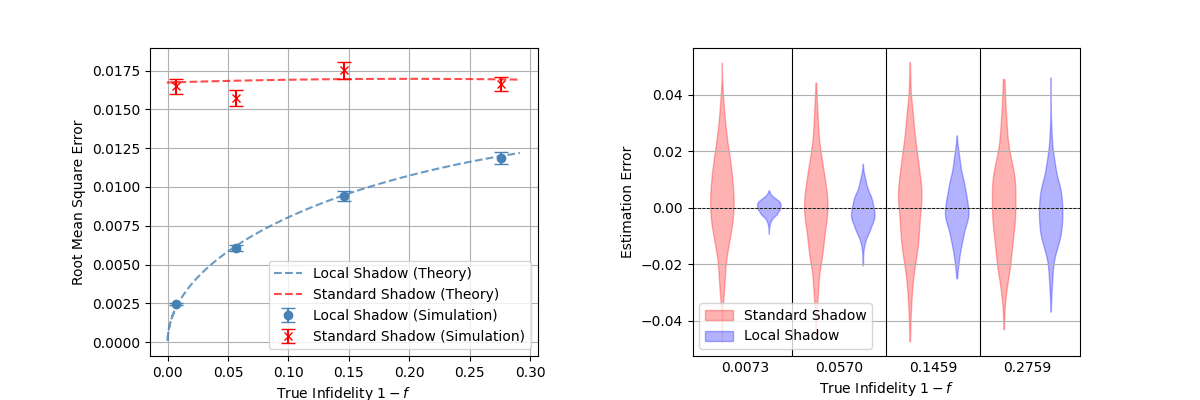}
    \caption{Simulation of a $3$-qubit pure state fidelity estimation task comparing standard and local shadow estimators. The target state $\ket{\phi}$ and four unknown states 
    with different infidelities all come from a $3$-parameter family of pure states described in Sec.~\ref{sec:numerics}. 
    For each unknown state, we build a dataset by uniformly sampling 100,000 random Clifford and measuring each for 10,000 shots. We then subsample $500$ batches from the dataset, each consists of $N=5000$ random Clifford and a single shot of measurement. 
    (Left.) Comparison of theoretical (square roots of \eqref{eq:local_fid_var} and \eqref{eq:shadow_fid_var}, divided by $\sqrt{N}$) and empirical (averaged over $500$ batches of subsamples) root mean squared error (RMSE) of estimators. Error bars depict one standard error estimated via $200$ round of bootstrap sampling~\cite{efron1994introduction}.
    (Right.) Violin plots (smoothed histograms~\cite{hintze1998violin}) of estimation errors from $500$ batches of subsamples for the four unknown states labeled by the corresponding infidelity.  
    The variance of our local shadow estimator decreases as the true infidelity decreases, whereas the variance of the standard shadow estimator remains nearly constant.
    }
    \label{fig:numerics}
\end{figure*}

\subsubsection{Numerical simulation}\label{sec:numerics}

To demonstrate the practical feasibility and advantages of our methods, we numerically implement a pipeline of pure state fidelity estimation using the local shadow estimator, and compare it with the standard shadow estimator~\cite{huang2020predicting} (\eqref{eq:standard_shadow_estimator}). Our simulation is conducted using Qiskit~\cite{qiskit}. The data and codes are available online (see the Data Availability section).

Our theoretical framework assumes knowing a fiducial state that is sufficiently close to the true unknown state, so that we can define the local shadow estimator (and deviation observable) with respect to that fiducial state. To apply this theory in practice, we adopt the following post-processing algorithm (after $N$ copies of states are measured using individual $3$-design measurements). One first obtains a coarse estimate of $(f^0,g^0)$ describing the unknown state (in the form of \eqref{eq:pure_fid_fiducial}) using, e.g., standard classical shadow~\cite{huang2020predicting}, and set it as the initial fiducial state. 
One can then use the initial fiducial state to compute the deviation observable (\eqref{eq:dev_op_fid}) and evaluate the local shadow estimator through
\begin{equation}
    f^1 =  f^0 + \frac{1}{N} \sum_s \trace\big( X_{f^0} \hrho(s)\big), 
\end{equation}
and update the estimator $\hat f$ from $f^0$ to $f^1$. Note that updating $\hat g$ is unnecessary because their influence on $\hat f$ is only second-order from the locally unbiasedness condition and can be ignored when the initial guess $(f^0,g^0)$ is sufficiently close to their true values. 
The above updating procedure is repeated and terminates at step $k$ when the stepwise change $\abs{f^k - f^{k-1}}$ converges below a small cutoff number, e.g., $10^{-6}$.
Remarkably, the entire algorithm operates by reusing the same set of measurement data, which means there is no sampling overhead of our methods compared to standard shadow estimator. Our protocol is summarized in Algorithm~\ref{alg:main}.

{\small
\begin{algorithm}[H]
\label{alg:main}
\DontPrintSemicolon
\SetAlgoLined
\SetKwInOut{Input}{Input}
\SetKwInOut{Output}{Output}
\Input{Target state $\ket{\phi}$ and unknown state $\ket{\psi}$}
\Output{Estimate of fidelity $f = \abs{\braket{\phi|\psi}}^2$}
\caption{Local Shadow Fidelity Estimation}
Repeat $N$ randomized measurements on $\ket\psi$'s\;
Obtain $N$ snapshots of classical shadow $\{\hat\rho(s)\}_s$\; 
Coarsely estimate $\hat f$ and $\hat g$ from snapshots $\{\hat\rho(s)\}_s$\;
\Repeat{$\hat\Delta < \mathrm{some~cutoff~value}$}{
    Compute $X_{\hat f}$ using $\hat f,\hat g$ via \eqref{eq:dev_op_fid}\;
    $\hat\Delta\gets\frac1N\sum_{s}\tr(X_{\hat f}\hat\rho(s))$\;
    $\hat f \gets \hat f + \hat\Delta$\;
}
\KwRet{$\hat f$}
\end{algorithm}
}
\vspace{0.1in}

As a concrete example, let us consider the following parameterized family of $n$-qubit pure states,
\begin{equation}
    \ket{\psi_{\varphi}} = \sqrt{\varphi_0}\ket{\Phi}+\sum_{i=1}^n\sqrt{\varphi_i}\ket{\Phi_i},
\end{equation}
where $\varphi_0=1-\sum_{i=1}^n \varphi_i$. $\ket\Phi=\frac{1}{\sqrt2}(\ket{0}^{\otimes n}+\ket{1}^{\otimes n})$ is the $n$-qubit GHZ state and $\ket{\Phi_i}=(\sigma_x)_i\ket{\Phi}$ is the GHZ with bit-flip on the $i$-th qubit. 
We choose $n=3$ and choose the target state $\ket{\phi}$ from fidelity estimation to be the one with parameters $\varphi_1=\varphi_2=\varphi_3=0.075$.
(Note this target state is not a stabilizer state and its projector cannot be implemented via Clifford gates.)
We estimate the fidelity of four unknown states to $\ket{\phi}$ where $\varphi_1=\varphi_2=\varphi_3 = 0.10,0.15,0.20,0.25$, yielding the true infidelities $1-f = 0.0073, 0.0570,0.1459,0.2759$, respectively.

We implement Algorithm~\ref{alg:main} to estimate fidelity from $N=5000$ randomized measurements and compare it to the standard shadow estimator (\eqref{eq:standard_shadow_estimator}). Specifically, we obtain the initial coarse estimates $\{\hat \varphi_i\}_{i=1}^3$ by computing the overlap between $\ket{\psi_\varphi}$ and $\ket{\Phi_i}$ using the standard shadow estimator
\begin{equation}
    \hat \varphi_i = \frac{1}{N}\sum_s \bra{\Phi_i}\hrho(s)\ket{\Phi_i}. 
\end{equation}
We then compute the initial $\hat f = f^0 = \abs{\braket{\phi|\psi_{\hat{\varphi}}}}^2$ and start to repetitively compute the local shadow estimators and update $\hat f$ until the stepwise difference is below $10^{-6}$. We compare the performance of Algorithm~\ref{alg:main} to the standard shadow estimator $\hat f^{\rm std} = \frac{1}{N}\sum_s \bra{\phi}\hrho(s)\ket{\phi}$. 
Both methods use exactly the same amount of experimental data and the difference is Algorithm~\ref{alg:main} adopts an additional post-processing procedure.

Our simulation results are reported in Fig.~\ref{fig:numerics}.
To demonstrate and compare the performances of both estimators, we create a large dataset containing many random Clifford and measurement shots, and then subsampling 500 batches of data to construct estimators and estimate their variance. Note that subsampling is a well-established method to assess the estimator variance while avoid the computational burden of full re-sampling~\cite{efron1994introduction}. See the caption of Fig.~\ref{fig:numerics} for more details. 
We also plot our theoretic predictions of the root mean squared error (RMSE, i.e., square root of the MSE) given by $\frac{1}{\sqrt{N}} \times $ the square roots of the single-shot MSE in \eqref{eq:local_fid_var} and \eqref{eq:shadow_fid_var}. The factor of $\frac{1}{\sqrt{N}}$ arises because the snapshots $\hat\rho(s)$ are i.i.d. For all values of fidelity simulated, the theoretical predictions agree with the empirical RMSEs within the error bars. Notably, though the MSE formula in \eqref{eq:local_fid_var} was derived in the context of local estimation, it also accurately captures the MSE for our Algorithm~\ref{alg:main} where no prior knowledge of $f$ is provided. The advantage of local shadow estimators over standard shadow estimators, as predicted by \eqref{eq:local_fid_var} and \eqref{eq:shadow_fid_var} is then established numerically. 

Regarding the run time of Algorithm~\ref{alg:main}, we observe that in most cases no more than $5$ repetitions are sufficient for the stepwise change of $\hat f$ to converge below $10^{-6}$.

\subsection{Fidelity Estimation: Mixed States}
\label{sec:no-go}

In contrast to the previous section, here we will show a no-go example where randomized measurements cannot be near-optimal---fidelity estimation for mixed states. Specifically, consider the following single-parameter family of $d$-dimensional quantum states
\begin{equation}\label{eq:1_par_mix_state}
    \rho_f := f\ketbra{\phi}{\phi} + (1-f)~\frac{\id-\ketbra{\phi}{\phi}}{d-1},
\end{equation}
for some fixed pure state $\ket{\phi}$. $\rho_f$ can be understood as $\ket{\phi}$ going through a depolarizing channel, and our goal is to estimate the fidelity $f := \bra{\phi}\rho\ket{\phi}$. Below we consider the regime  where $1 - f\le d^{-2}$, which means the mixed state $\rho_f$ is very close to the target state $\ket{\phi}$.
The QFI for $\rho_f$ is easily seen to be
\begin{equation}
    J_{ff} = \frac{1}{f(1-f)}, 
\end{equation}
We now show that randomized measurements are not near-optimal for this task. In particular, this holds for any $d$ (even when $d = 2$). Since weak near-optimality is equivalent to near-optimality in single-parameter metrology tasks, randomized measurements are also not weakly near-optimal here. 

Let us first clarify our definition of randomized measurements here. 
The results we present in \secref{sec:pure}--\secref{sec:full} are sufficient conditions for randomized measurements to be near-optimal or weakly near-optimal. For all these results, it suffices to take the POVM as any $3$-design.
To derive no-go results, however, we need to specify the exact choice of the POVMs. Below we prove the no-go results when the POVM exactly forms a (complex projective) Haar measure, and when it is a specific $3$-design.

\begin{enumerate}[wide, labelwidth=!,itemindent=!,labelindent=0pt, leftmargin=0em, label={(\arabic*)}, parsep=0pt]
\item Haar measure. Let the POVM element be $M^{\text{Haar}}=\{d \,\mathrm{d}\mu_s\ketbra{s}{s}\}_s$, where $\{\mathrm{d}\mu_s\ketbra{s}{s}\}_s$ is the (complex projective) Haar measure for ${ d}$-dimensional pure states~\cite{watrous2018theory}. Note that the dimension $d$ should be distinguished from the differential symbol $\mathrm{d}$. The probability density $p_s := \trace\big(\frac{\mathrm{d}M_s^{\text{Haar}}}{\mathrm{d}\mu_s}\rho_f\big)$ is 
\begin{align}
    p_s 
    &= { d}\left(f\abs{\braket{s|\phi}}^2 + (1-f)\,\frac{1-\abs{\braket{s|\phi}}^2}{{ d}-1}\right).
\end{align}
Define $\tau_s:=\abs{\braket{s|\phi}}^2$.
The CFI can be calculated to be
\begin{align}
    I(M^{\text{Haar}})_{ff} 
    &= \int \mathrm{d}\mu_s \frac{(\partial_f p_s)^2}{p_s}\nonumber\\
    &= \frac{{ d}}{({ d}-1)^2}\int \mathrm{d}\mu_s  \frac{(1-\tau_s{ d})^2}{f\tau_s + (1-f)\frac{1-\tau_s}{{ d}-1}}. 
\end{align}
Dividing the integral over $s$ into two parts depending on whether $\tau_s<\sqrt{1-f}$ holds or not, we have the following upper bound,
\begin{align}
    &I(M^{\text{Haar}})_{ff} \le \frac{{ d}}{({ d}-1)^2}\left(\int_{\tau_s>\sqrt{1-f}}\mathrm{d}\mu_s \frac{(1-\tau_s{ d})^2}{\sqrt{1-f}}\right.
    \nonumber\\&\qquad\qquad\qquad\left.+\int_{\tau_s\le\sqrt{1-f}}\mathrm{d}\mu_s  \frac{(1-\tau_s{ d})^2}{(1-f)\frac{1}{d-1}}\right)
    \nonumber\\&\quad\le\frac{{ d}}{{ d}-1}\left(\frac{{ d}-1}{\sqrt{1-f}}+\frac{\Pr[\tau_s\le\sqrt{1-f}]}{{1-f}}\right).
\end{align}
In the first inequality, we use $f\tau_s + (1-f)\frac{1-\tau_s}{d-1} \geq (1-f)\frac{1}{d-1}$ and, when $\tau_s \geq \sqrt{1-f}$ and $\sqrt{1-f} \leq d^{-1}$, $f\tau_s + (1-f)\frac{1-\tau_s}{d-1} \geq \sqrt{1-f}$. In the second inequality, we use the fact that $(1-\tau_s d)^2\le(d-1)^2$ and that, when $\tau_s \leq \sqrt{1-f} \leq d^{-1}$, $(1-\tau_s d)^2\le 1$.
In \appref{app:spherical}, we show that $\Pr[\tau_s\le\sqrt{1-f}]\le({ d}-1)\sqrt{1-f}$. Thus, we have
\begin{gather}\label{eq:fid_mixed}
    I(M^{\text{Haar}})_{ff} \le \frac{2d}{\sqrt{1-f}},
    \\
    \Rightarrow\, 
    \frac{I(M^{\text{Haar}})_{ff}}{J_{ff}} \leq 2d\sqrt{1-f}f.
\end{gather}
When taking $1 - f \rightarrow 0$, we have $\frac{I(M^{\text{Haar}})_{ff}}{J_{ff}} \rightarrow 0$. This completes the proof that $M^{\text{Haar}}$ is not near-optimal. 

\item $3$-design. 
For completeness, we now give another example that a specific POVM forming a $3$-design is not near-optimal for mixed state fidelity estimation.
Take \eqref{eq:1_par_mix_state} to be an single-qubit state.
\begin{equation}
    \rho_f = f\ketbra{\phi}{\phi} + (1-f)\ketbra{\phi^\perp}{\phi^\perp}.
\end{equation}
where $\ketbra{\phi^\perp}{\phi^\perp}:=\id-\ketbra{\phi}{\phi}$. 
Consider single-qubit random Pauli measurements 
\begin{equation}
    M^{\mathrm{P}}=\left\{\frac13\ketbra{x\pm}{x\!\pm\!},\frac13\ketbra{y\pm}{y\!\pm\!},\frac13\ketbra{z\pm}{z\!\pm\!}\right\}.
\end{equation}
Here $\ket{x\pm}$ denotes the $\pm 1$ eigenstate for Pauli X, similar for the others. $M^{\mathrm{P}}$ forms a (complex projective) $3$-design satisfying \eqref{eq:$3$-design}.
Now choose 
\begin{align}
    \ketbra{\phi}{\phi}:=\frac{1}{2}\left(\id + \frac{1}{\sqrt3}(\sigma_x+\sigma_y+\sigma_z)\right). 
\end{align}
In the Bloch sphere, $\ket{\phi}$ is located equidistant from $\ket{x+}$, $\ket{y+}$, and $\ket{z+}$. With some calculation, one can obtain the measurement outcome probability and the CFI as
\begin{gather}
    p_{a\pm} := \trace(\rho_f M_{a\pm}^{\mathrm{P}}) = \frac{1}{6}\left(1\pm\frac{2f-1}{\sqrt3}\right),\;\forall a\in\{x,y,z\},\nonumber\\
    I(M^\mathrm{P})_{ff} = \frac{2}{1+2f(1-f)},
\end{gather}
When we take $1-f \rightarrow 0$, we have $\frac{I(M^{\text{P}})_{ff}}{J_{ff}} \rightarrow 0$, thus $M^{\text{P}}$ is not near-optimal despite being a $3$-design.
As a final comment, if we rotate $M^{\text{P}}$ such that $\ketbra{\phi}{\phi}$ is proportional to one of the POVM elements, then the measurement yields a CFI of $\frac13J_{ff}$ and is thus near-optimal. 
However, such choices of measurement are tailored to the state, and thus should not be viewed as a generic randomized measurement for our purpose. 
\end{enumerate}

\subsection{Hamiltonian Estimation}
\label{sec:Hamt}

Finally, we provide one example of quantum \emph{channel estimation}, which goes beyond the previously discussed \emph{state estimation} problem. 

We will see that \thmref{thm:pure} and \thmref{thm:unitary} can be applied in a Hamiltonian estimation problem in the noiseless and the noisy case, respectively. 
Specifically, we consider the estimation of all Pauli operators in a Hamiltonian and show two near-optimal estimation protocols minimizing the WMSE with $W = \id$, one utilizing maximally entangled states and the other one utilizing random stabilizer states as input states. $3$-design randomized measurements are near-optimal in both the noiseless case (from \thmref{thm:pure}) and the noisy case (from \thmref{thm:unitary}). 

\subsubsection{Noiseless case}\label{sec:ham_noiseless}

We consider a Hamiltonian estimation problem in an $n$-qubit system, where 
\begin{equation}
    U_\theta = \exp\bigg({-i \sum_{k=1}^{d^2-1} \theta_k P_k}\bigg),
    \label{eq:hamt-est}
\end{equation}
where $d = 2^n$ and $\{P_k\}_{k=1}^{d^2-1}$ are the set of multi-qubit Pauli operators $\{\id,\sigma_x,\sigma_y,\sigma_z\}^{\otimes n}$ excluding $\id$. Consider parameter estimation at $\theta = 0$ 
for pure input states\footnote{All functions we calculate in this section will be implicitly evaluated at $\theta = 0$, unless specified otherwise.}. Let the input state of the unitary evolution be $\ket{\psi}$, we have the QFIM $J(\psi)$ of the output state 
\begin{equation}
\ket{\psi_\theta} := U_\theta \ket{\psi}    
\end{equation}
equal to 
\begin{align}
J(\psi)_{jk}\! &= 4\Re[\braket{\partial_j\psi_\theta|\partial_k\psi_\theta} - \braket{\psi_\theta|\partial_j\psi_\theta}\braket{\partial_k\psi_\theta|\psi_\theta}] \nonumber\\
&= 4 ( \!\bra{\psi}\!\frac{1}{2}\{P_j,P_k\}\!\ket{\psi} - \bra{\psi}\!P_j\!\ket{\psi}\bra{\psi}\!P_k\!\ket{\psi} \!), \label{eq:qfi-hamt}
\end{align}
where we use $J(\star)$\footnote{This notation should be distinguished from $J[\star]$ which is the QFIM of the parameterized state $\star$ that we use later.} to denote the QFIM when the input state is $\star$. 

\begin{itemize}[wide, labelwidth=!,itemindent=!,labelindent=0pt, leftmargin=0em, parsep=0pt]
\item \textbf{Figure of merit}
\end{itemize}

Different choices of input states $\ket{\psi}$ lead to different performances of Hamiltonian estimation. To evaluate their performances, we will use the trace of the inverse of the CFIM as a function of the input state $\psi$ and the measurement $M$, i.e., 
\begin{equation}
    \trace(I(\psi,M)^{-1})
\end{equation}
as the figure of merit. The function $\trace(I(\psi,M)^{-1})$ can be interpreted as the minimum achievable WMSE (according to the CRB) with the cost matrix $W = \id$\footnote{
Here it is not suitable to choose the cost matrix $W = J(\psi)$ as in the definition of weak near-optimality, because we aim to compare different input states (for channel estimation) instead of different POVMs (for state estimation) and the cost matrix should not depend on the input state. $W = \id$ is the most natural choice as it assigns equal importance to all parameters.}, when the input state is $\ket{\psi}$ and the measurement is $M$. 
The above can be generalized to the case where we use an input state ensemble to estimate the Hamiltonian, instead of a single input state. Specifically, we can assume, with probability $p^{(i)}$, we pick $\ket{\psi^{(i)}}$ as an input state and $M^{(i)}$ as the corresponding POVM. Then we use $I(\{p^{(i)},\psi^{(i)},M^{(i)}\})$ 
to denote the corresponding CFIM. In particular, we have 
\begin{align}
\label{eq:cfim-cov}
    &\quad \;I(\{p^{(i)},\psi^{(i)},M^{(i)}\}) \\
    &= I\left(\sum_{i}p_i \ket{i}\bra{i}_{\rm R}\otimes \psi^{(i)},\sum_{i}p_i \ket{i}\bra{i}_{\rm R} \otimes M^{(i)}\right)\\
    &= \sum_i p^{(i)} I(\psi^{(i)},M^{(i)}),
\end{align}
where we introduce the fictitious reference system R to facilitate the calculation. 

Below, we consider two types of input state ensembles for the Hamiltonian estimation problem. We will see they are both near-optimal and are equivalent up to a constant factor. Here, we say an input state ensemble (along with a corresponding measurement) is \emph{near-optimal}, if it achieves the minimum WMSE optimized over all input state ensembles and measurements, i.e.,  
\begin{equation}
\label{eq:def-w}
    w:= \inf_{\{p^{(i)},\psi^{(i)},M^{(i)}\}}\trace(I(\{p^{(i)},\psi^{(i)},M^{(i)}\})^{-1})
\end{equation}
up to a constant factor. Note that it is sufficient to consider only pure input states here because of the convexity of the CFIM. In \eqref{eq:def-w}, we also allow $\psi^{(i)}$ to be an entangled state (and $M$ to be a correlated measurement) across the probe system and an arbitrarily large ancillary system as explained below. 

\begin{itemize}[wide, labelwidth=!,itemindent=!,labelindent=0pt, leftmargin=0em, parsep=0pt]
\item \textbf{Maximally entangled states}
\end{itemize}

In the first scenario, we make use of an ancillary system $\mathrm{A}$, in which case the input state can be an entangled state between the original probe system $\mathrm{P}$ and the ancillary system. The Hamiltonian evolution acts only on the probe system, and the measurement can act on the joint system. Specifically, we choose the maximally entangled state as the input state, i.e., 
\begin{equation}
\label{eq:mm-state}
    \ket{\psi^{\mathrm{ME}}_{\mathrm{P}\mathrm{A}}} = \frac{1}{\sqrt{d}} \sum_{k=0}^{d-1} \ket{k}_{\mathrm P}\ket{k}_{\mathrm A},
\end{equation}
where $\{\ket{k}\}_{k=0}^{d-1}$ is the computational basis. The QFIM can be calculated through \eqref{eq:qfi-hamt} where $P_j$ are interpreted as $(P_j)_{\mathrm P} \otimes \id_{\mathrm A}$, and we have 
\begin{equation}
    J(\psi^{\mathrm{ME}}) = 4\id,\;\text{and}\; \trace(J(\psi^{\mathrm{ME}})^{-1}) = \frac{d^2-1}{4}. 
\end{equation}
The deviation observable for each parameter is
\begin{equation}
\label{eq:X-ME}
    X_j(\psi^{\mathrm{ME}};\theta) = \frac{i}{2} \big[\ket{\psi^{\mathrm{ME}}_\theta}\!\bra{\psi^{\mathrm{ME}}_\theta}, P_j\big], 
\end{equation}
which induces the local shadow estimator that achieve the QFIM up to a constant factor via $3$-design measurements. When $\theta$ is sufficiently close to zero, in order to estimate $\theta_j$, it is sufficient to use $X_j(\psi^{\mathrm{ME}};(0,\cdots,\theta_j,\cdots, 0)) \approx X_j(\psi^{\mathrm{ME}};0) + \theta_j \frac{d+1}{2d}   \big[\left(P_j \ket{\psi^{\mathrm{ME}}}\!\bra{\psi^{\mathrm{ME}}} - \ket{\psi^{\mathrm{ME}}}\!\bra{\psi^{\mathrm{ME}}} P_j\right), P_j\big] $ as the deviation observable to estimate $\theta_j$, where we set $\theta_{k\neq j} = 0$ because their influence on the local shadow estimator is second-order and we also ignore the  $O(\theta_j^2)$ term. 
The corresponding local shadow estimator $\trace(X_j \hat\rho(s))$ is then efficiently computable when the $3$-design measurement bases $\ket{s}$ are random stabilizer states using the property that $\ket{\psi^{\mathrm{ME}}}$ is also a stabilizer state. 

$\ket{\psi^{\mathrm{ME}}}$ minimizes $\trace(J(\psi)^{-1})$, because for any $\psi$, 
\begin{equation}
    \trace(J(\psi)^{-1}) \geq (d^2-1)^2 \trace(J(\psi))^{-1} \geq \frac{d^2-1}{4}. 
\end{equation}
The above equalities hold when $\psi = \psi^{\rm ME}$. The first inequality above is Cauchy--Schwarz and the second is due to $J(\psi) \preceq Q(\psi)$ and 
\begin{equation}
    \label{eq:opt-wmse}
    \trace(J(\psi)) \leq \trace(Q(\psi)) = 4(d^2-1),
\end{equation} 
for any input state $\psi$, where $Q_{jk} := 4 \!\bra{\psi}\!\frac{1}{2}\{P_j,P_k\}\!\ket{\psi}$. This implies 
\begin{equation}
\label{eq:w-lower}
    w \geq \frac{d^2-1}{4},
\end{equation}
because for any $\epsilon > 0$, there exists an input state ensemble $\{p^{(i)},\psi^{(i)},M^{(i)}\}$ such that
\begin{align}
    w + \epsilon 
    &\geq \trace(I(\{p^{(i)},\psi^{(i)},M^{(i)}\})^{-1}) \nonumber \\
    &= \trace\bigg(\Big(\sum_i p^{(i)} I(\psi^{(i)},M^{(i)})\Big)^{-1}\bigg) \nonumber \\ 
    &\geq (d^2-1)^2 \Big(\sum_i p^{(i)} \trace(I(\psi^{(i)},M^{(i)}))\Big)^{-1} \nonumber \\
    &\geq (d^2-1)^2  \Big(\sum_i p^{(i)} \trace(J(\psi^{(i)}))\Big)^{-1} \nonumber \\
    &\geq \frac{d^2-1}{4} = \trace(J(\psi^{\rm ME})^{-1}), 
\end{align}
where we use \eqref{eq:cfim-cov} in the second line, the Cauchy-Schwarz in the third line, the CRB $J(\psi^{(i)})^{-1} \preceq I(\psi^{(i)},M^{(i)})^{-1}$ in the fourth line and \eqref{eq:opt-wmse} in the last line. 
Thus, $\psi^{\mathrm{ME}}$ is near-optimal for the Hamiltonian estimation problem. For any measurement $M$ given by a $3$-design, 
\begin{equation}
    \trace(I(\psi^{\mathrm{ME}},M)^{-1}) 
    \leq 4 \trace(J(\psi^{\rm ME})^{-1}) \leq 4w,\label{eq:w1}
\end{equation}
where we use \thmref{thm:pure} to show the first inequality.

\begin{itemize}[wide, labelwidth=!,itemindent=!,labelindent=0pt, leftmargin=0em, parsep=0pt]
\item \textbf{Random stabilizer states}
\end{itemize}

In the second scenario, we consider ancilla-free scenarios, which is more favorable in practice because the ancillary system is not always available. An input state ensemble is necessary here because with a single input state, at most $d-1$ number of parameters can be estimated. Here we choose 
\begin{equation}
\label{eq:clifford-input}
    p^{\mathrm{C},(i)} = \frac{1}{\abs{\mU}},\quad \ket{\psi^{\mathrm{C},(i)}} = U^{(i)}\ket{0^{\otimes n}},
\end{equation}
where $U^{(i)}$ is an element from the random Clifford unitary ensemble $\mU$. Let the joint state on the fictitious reference system R and the probe system P be
\begin{equation}
    \rho^{\mathrm{C}}_{\rm{RP}} = \frac{1}{\abs{\mU}} \sum_{i} \ket{i}\bra{i}_{\rm R}\otimes \psi^{\mathrm{C},(i)}. 
\end{equation}
Then 
\begin{align}
    &\quad\; J(\rho^{\mathrm{C}}_{\rm{RP}}) = \sum_i p^{\mathrm{C},(i)} J(\psi^{\mathrm{C},(i)})_{jk}
    \\
    \label{eq:use_two_design}
    &= 4\left(\delta_{jk} - \frac{\delta_{jk}}{d+1}\right) = \frac{4d\delta_{jk}}{d+1},
\end{align}
where we use 
\begin{gather}
    \bE_i[\bra{\psi^{\mathrm{C},(i)}}\!\frac{1}{2}\{P_j,P_k\}\!\ket{\psi^{\mathrm{C},(i)}}] = \frac{1}{d}\trace\Big(\frac{1}{2}\{P_j,P_k\}\Big),\nonumber\\\bE_i[\bra{\psi^{\mathrm{C},(i)}}\! P_j\!\ket{\psi^{\mathrm{C},(i)}}\!\bra{\psi^{\mathrm{C},(i)}}\! P_k \!\ket{\psi^{\mathrm{C},(i)}}] = \frac{\trace(P_jP_k)}{d(d+1)}. \nonumber
\end{gather}
These are properties of random Clifford unitaries as a 1-design and a $2$-design. For any measurement $M$ given by a $3$-design, 
\begin{align}
    &\quad \trace(I(\{p^{\mathrm{C},(i)},\psi^{\mathrm{C},(i)},M\})^{-1}) \nonumber \\
    &\leq 4\trace\Big(\Big(\sum_i p^{\mathrm{C},(i)}  J(\psi^{\mathrm{C},(i)})\Big)^{-1}\Big)  \nonumber\\
    &= \frac{(d+1)}{d}(d^2-1) \leq \frac{4(d+1)}{d}w,\label{eq:w2}
\end{align}
where we use \eqref{eq:cfim-cov} and \thmref{thm:pure} to show the first inequality and \eqref{eq:w-lower} and \eqref{eq:use_two_design} in the second step. 

For input states chosen randomly from a state ensemble, we define the deviation observable for each parameter as 
\begin{align}
    X_j(\psi^{\mathrm{C},(i)};\theta) &= \sum_{k} (J(\rho^{\mathrm{C}}_{\rm{RP}})^{-1})_{jk} L_k(\psi^{\mathrm{C},(i)}) \\
    &= \frac{i(d+1)}{2d} \big[\ket{\psi^{\mathrm{C},(i)}_\theta}\!\bra{\psi^{\mathrm{C},(i)}_\theta}, P_j\big].\label{eq:X-C}
\end{align}
They induce the following local shadow estimator (at $\theta^0$) as a generalization of \eqref{eq:theta-estimator} to the case of random input states,
\begin{equation}
    \htheta_j(s;i,\theta^0) = \theta_i^0 + \trace\big( (X_j(\psi^{\mathrm{C},(i)};\theta) |_{\theta = \theta^0}) \hrho(s)\big). 
\end{equation}
It is a function of both $i$ and $\theta^0$ where the index $i$ of the choice of the input state that is known to the experimenter. The overall variance $V$ in this case is the average of the variance for different input states, which is given by
\begin{multline}
    (J_{\rm{RP}} V J_{\rm{RP}})_{jk} = \frac{d+1}{d+2}\bE_i\Big[  \trace(L^{\mathrm{C},(i)}_jL^{\mathrm{C},(i)}_k) \\  + \trace(\psi^{\mathrm{C},(i)}_\theta\{L^{\mathrm{C},(i)}_j,L^{\mathrm{C},(i)}_k\}) \Big]
    \\ = \frac{d+1}{4(d+2)}\bE_i[J(\psi^{\mathrm{C},(i)})] =  \frac{d+1}{4(d+2)} J_{\rm{RP}},
\end{multline}
where $J_{\rm{RP}}$ is the shorthand of the averaged QFIM $J(\rho^{\mathrm{C}}_{\rm{RP}})$. 
The above shows how the local shadow estimators achieve the near-optimal performance and the derivation follows exactly from the proof of \thmref{thm:pure} after taking random inputs into consideration.

To summarize, the discussion above (in particular, \eqref{eq:w1} and \eqref{eq:w2}) implies the following corollary of \thmref{thm:pure}: 
\begin{corollary}
    The maximally entangled state, or the random stabilizer state ensemble, together with any $3$-design measurement, forms a near-optimal protocol for the Hamiltonian estimation problem in \eqref{eq:hamt-est} at $\theta = 0$. 
\end{corollary}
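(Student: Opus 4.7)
The plan is to combine a universal lower bound on $w$ (the optimal WMSE over all input ensembles and measurements) with two separate upper bounds, one for each proposed protocol, and show they differ by at most a constant factor independent of $d$. Most of the ingredients have been assembled in the paragraphs preceding the statement; the job is to package them cleanly.

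First, I would establish the lower bound $w \geq (d^2-1)/4$. The key observation is that for any pure input $\ket{\psi}$, the QFIM satisfies $J(\psi) \preceq Q(\psi)$ where $Q(\psi)_{jk} = 4\bra{\psi}\tfrac12\{P_j,P_k\}\ket{\psi}$, and $\trace(Q(\psi)) = 4(d^2-1)$ because $\trace(P_j^2)=d$ for all nontrivial Paulis. For any input ensemble $\{p^{(i)},\psi^{(i)},M^{(i)}\}$, using the block-diagonal embedding in \eqref{eq:cfim-cov} so that the averaged CFIM equals $\sum_i p^{(i)} I(\psi^{(i)},M^{(i)})$, one applies Cauchy--Schwarz $\trace(A^{-1}) \geq (\dim A)^2/\trace(A)$, then the CRB $I(\psi^{(i)},M^{(i)}) \preceq $ ... (more precisely, $\trace(I) \leq \trace(J)$ since inverses reverse the ordering of bounds on traces via a Cauchy--Schwarz application), and finally the uniform bound $\trace(J(\psi^{(i)})) \leq 4(d^2-1)$. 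Chaining these yields $\trace((\sum_i p^{(i)} I(\psi^{(i)},M^{(i)}))^{-1}) \geq (d^2-1)/4$ for every ensemble, hence \eqref{eq:w-lower}.

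Next, I would bound the two protocols from above. For the maximally entangled input $\ket{\psi^{\mathrm{ME}}}$, a direct calculation from \eqref{eq:qfi-hamt} gives $J(\psi^{\mathrm{ME}})=4\id$ using $\bra{\psi^{\mathrm{ME}}}P_j\otimes\id_{\mathrm{A}}\ket{\psi^{\mathrm{ME}}}=0$ and $\tfrac12\bra{\psi^{\mathrm{ME}}}\{P_j,P_k\}\otimes\id_{\mathrm{A}}\ket{\psi^{\mathrm{ME}}}=\delta_{jk}$; since $U_\theta\ket{\psi^{\mathrm{ME}}}$ is still pure on the joint system, \thmref{thm:pure} applied there gives $I(\psi^{\mathrm{ME}},M)^{-1} \preceq 4 J(\psi^{\mathrm{ME}})^{-1}$, hence $\trace(I^{-1}) \leq (d^2-1) \leq 4w$. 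For the random stabilizer ensemble, I would use the $2$-design property of Clifford orbits to average \eqref{eq:qfi-hamt}, obtaining $\bE_i[J(\psi^{\mathrm{C},(i)})]=\tfrac{4d}{d+1}\id$. Applying \thmref{thm:pure} to each individual pure state $U_\theta\ket{\psi^{\mathrm{C},(i)}}$, averaging the operator inequality $I(\psi^{\mathrm{C},(i)},M) \succeq \tfrac14 J(\psi^{\mathrm{C},(i)})$, and inverting give $(\sum_i p^{\mathrm{C},(i)} I(\psi^{\mathrm{C},(i)},M))^{-1} \preceq 4 \bE_i[J(\psi^{\mathrm{C},(i)})]^{-1}$, so that $\trace(I^{-1}) \leq (d+1)(d^2-1)/d \leq (4(d+1)/d)\,w$.

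The main obstacle is purely notational: one must be careful that the operator inequality $I(\psi,M) \succeq c J(\psi)$ from \thmref{thm:pure} is preserved under convex combinations of $\psi$'s, which it is because the sum of positive semidefinite matrices is positive semidefinite and the matrix inverse is operator-antitone on positive definite matrices. Once this bookkeeping is done, the matching upper bounds (of order $d^2$) and the lower bound $w=\Theta(d^2)$ together establish that both protocols achieve $w$ up to a dimension-independent constant factor, proving near-optimality.
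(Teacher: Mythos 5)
Your proposal is correct and follows essentially the same route as the paper: the lower bound $w\geq (d^2-1)/4$ via Cauchy--Schwarz, the QCRB $I\preceq J$, and $\trace(J(\psi))\leq\trace(Q(\psi))=4(d^2-1)$; and the upper bounds by applying \thmref{thm:pure} to the maximally entangled state ($J(\psi^{\mathrm{ME}})=4\id$) and to the Clifford-orbit ensemble ($\bE_i[J(\psi^{\mathrm{C},(i)})]=\tfrac{4d}{d+1}\id$ from the $2$-design property), with the operator-antitonicity of the inverse handling the convex combination exactly as in \eqref{eq:w1} and \eqref{eq:w2}. The only quibble is a cosmetic one: the cleanest justification of $Q(\psi)_{jj}=4$ is $P_j^2=\id$ (not $\trace(P_j^2)=d$), and the parenthetical about ``inverses reversing the ordering of bounds on traces'' should simply read that $I\preceq J$ implies $\trace(I)\leq\trace(J)$.
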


\subsubsection{Noisy case}\label{sec:ham_noisy}

Using \thmref{thm:unitary}, the above discussions can be generalized to the situations where quantum noise affects the Hamiltonian estimation. We assume a Pauli noise channel, in which case the parameterized quantum channel acting on the probe system $\mathrm{P}$ becomes 
\begin{equation}
\mE_\theta(\cdot) = (1-q) U_\theta(\cdot)U_\theta^\dagger + \sum_{k = 1}^{d^2 - 1} q_k P_k U_\theta(\cdot)U_\theta^\dagger P_k, 
\end{equation}
where $q_k$ are the probabilities of Pauli error $P_k$ which sum up to the total noise rate $q$, and $q < 1/2$. We also define $q_0:=1-q$ and $P_0 = \id$ for future use. Again, we consider parameter estimation at $\theta = 0$. We will show that the maximally entangled state and the random stabilizer state ensemble can still achieve the optimal WMSE in the noiseless case (\eqref{eq:def-w}) up to a constant factor that depends only on $q$.

\begin{itemize}[wide, labelwidth=!,itemindent=!,labelindent=0pt, leftmargin=0em, parsep=0pt]
\item \textbf{Maximally entangled states}
\end{itemize}

In the first ancilla-assisted scenario, we have the maximally entangled state (\eqref{eq:mm-state}) as the input state. We assume no noise on the ancillary system. Then $(P_k \otimes \id)\ket{\psi^{\rm ME}}$ are orthogonal to each other for different $k$. (Below we omit $\otimes \id$ for simplicity.) Let $\Pi$ be the rank-one projector onto $\ket{\psi_\theta^{\rm ME}} := U_\theta\ket{\psi^{\rm ME}}$, which is the eigenstate of the output state $\mE_\theta(\psi^{\rm ME})$ with the largest eigenvalue, and the probability of post-selecting on subspace $\Pi_\perp$ is $\trace(\Pi_\perp \mE_\theta(\psi)) = q$. The QFIM after post-selecting on $\Pi_\perp$ satisfies 
\begin{align}
    &\quad\, J^\perp(\psi^{\rm ME}) := J\left[\sum_{k=1}^{d^2-1} \frac{q_k}{q} P_k U_\theta \psi^{\rm ME} U_\theta^\dagger P_k\right] \nonumber\\
    &\preceq \sum_{k=1}^{d^2-1} \frac{q_k}{q} J[P_k U_\theta \psi^{\rm ME} U_\theta^\dagger P_k] = J[\psi_\theta^{\rm ME}] = 4\id,\label{eq:noisy-bell-lower}
\end{align}
where we use the convexity of the QFIM in the first inequality, $J[P_k \psi_\theta^{\rm ME}  P_k] = J[\psi_\theta^{\rm ME}]$ in the second equality. 
Here, we use $J[\star]$ to denote the QFIM of state $\star$, and thus $J(\star)=J[\mathcal E_\theta(\star)]$ by definition.
(Previously, the dependence of $J$ on the parameterized states was implicit, as these states were always straightforwardly $\rho_\theta$ or $\psi_\theta$. However, in the current context involving multiple distinct parameterized states, it is necessary to explicitly indicate this dependence.) Meanwhile, the QFIM of the output state $\mE_\theta(\psi^{\rm ME})$ is 
\begin{align}
    J(\psi^{\rm ME})_{ij} &:= J\left[\sum_{k=0}^{d^2-1} q_k P_k U_\theta \psi^{\rm ME} U_\theta^\dagger P_k\right]_{ij} \nonumber\\
    &= 2\delta_{ij} \sum_{\substack{0\leq k \leq d^2-1,\\q_k + q_{k+i}>0}} \frac{\abs{q_{k+i} - q_k}^2}{q_k + q_{k+i}}.
\label{eq:noisy-bell}
\end{align} 
Note that the $+$ sign in $q_{k+i}$ should be interpreted in the following way: $q_{k+i}$ denotes the noise rate of the Pauli operator $P_{k+i} \propto \eta_{k,i}P_{k}P_{i}$. 
The derivation of \eqref{eq:noisy-bell} can be found in \appref{app:bell}. Then 
\begin{equation}
    J(\psi^{\rm ME})_{ij} \geq 4 \delta_{ij}\frac{\abs{q_{i} - q_0}^2}{q_0 + q_{i}} \geq 4 \delta_{ij}(1-2q)^2,\label{eq:noisy-bell-upper}
\end{equation}
where we use $q_i \leq q$ in the last step. Combining \eqref{eq:noisy-bell-lower} and \eqref{eq:noisy-bell-upper}, we have 
\begin{equation}
    J(\psi^{\rm ME}) - q J^{\perp}(\psi^{\rm ME}) \succeq \frac{(1-2q)^2-q}{(1-2q)^2} J(\psi^{\rm ME}).
\end{equation}
This means the noisy output state $\mE_\theta(\psi^{\rm ME})$ is a $(1-q,\frac{(1-2q)^2-q}{(1-2q)^2})$-approximately low-rank well-conditioned state, according to \defref{def:low-rank}. Applying \thmref{thm:unitary}, we know when $M$ is given by a $3$-design,  
\begin{align}
    I(\psi^{\rm ME},M) & \succeq \frac{d+2}{d+1}\frac{(1-q)^2(1-4q)}{2(2-q)(1-2q)^2} J(\psi^{\rm ME}) \nonumber\\
    &\succeq \frac{d+2}{d+1}\frac{(1-q)^2(1-4q)}{2(2-q)} (4\id), \label{eq:lower-bell}
\end{align}
where $J(\psi^{\rm ME})$ is the QFIM with the maximally entangled state as the input state for the noisy Hamiltonian estimation problem and $4\id$ is the QFIM of the maximally entangled input state in the noiseless case. 

We can also calculate the deviation observable for each parameter (\eqref{eq:def-X-low-rank}). Using 
\begin{gather}
    \tL_j(\psi_{\mathrm{ME}}) = \frac{2i(q_0-q_j)}{q_0+q_j} \big[\ket{\psi^{\mathrm{ME}}_\theta}\!\bra{\psi^{\mathrm{ME}}_\theta}, P_j \otimes \id_{\mathrm A}\big],\\
    \tJ_{jk}(\psi_{\mathrm{ME}}) =  \frac{4(q_0-q_j)^2}{(q_0+q_j)^2}\delta_{jk}. 
\end{gather}
we have 
\begin{equation}
\label{eq:X-ME-1}
    X_j(\psi_{\mathrm{ME}};\theta) = \frac{i(q_0+q_j)}{2(q_0-q_j)} \big[\ket{\psi^{\mathrm{ME}}_\theta}\!\bra{\psi^{\mathrm{ME}}_\theta}, P_j\big], 
\end{equation}
which is proportional to the deviation observable in the noiseless case (\eqref{eq:X-ME}). The simplification stems from the proof of \thmref{thm:unitary} where we show the sufficiency to achieve near-optimal estimation by excluding the $\Pi_\perp$ component in the SLD operators.  

\begin{itemize}[wide, labelwidth=!,itemindent=!,labelindent=0pt, leftmargin=0em, parsep=0pt]
\item \textbf{Random stabilizer states}
\end{itemize}

In the second ancilla-free scenario, we use the random stabilizer state ensemble (\eqref{eq:clifford-input}). We first consider the special case where the input state is $\ket{\psi^{\mathrm{C},(0)}} := \ket{0^{\otimes n}}$. Let $\ket{\psi^{\mathrm{C},(0)}_\theta} := U_\theta\ket{0^{\otimes n}}$. Let $\Pi$ be the rank-one projector onto $U_\theta\ket{0^{\otimes n}}$, and $\Pi^\perp = \id - \Pi$. The probability of post-selecting on subspace $\Pi_\perp$ is 
\begin{equation}
    \trace\big(\Pi_\perp \mE_\theta\big(\psi^{\mathrm{C},(0)}\big)\big) = \sum_{k:\tk \neq 0^n} q_k =: \tilde{q} \leq q,
\end{equation}
where we define $\tk$ to be an $n$-bit string where $\tk_i$ is equal to 1 if $P_k$ on the $i$-th qubit is $\sigma_x$ or $\sigma_y$, and $\tk_i$ is equal to 0 if $P_k$ on the $i$-th qubit is $\sigma_z$ or $\id$. We also have 
\begin{align}
    J^\perp(\psi^{\mathrm{C},(0)}) &:= J\left[\sum_{k:\tk \neq 0^n} \frac{q_k}{\tilde{q}} P_k U_\theta \psi^{\mathrm{C},(0)}  U_\theta^\dagger P_k\right] \nonumber  \\ &\preceq J[\psi^{\mathrm{C},(0)}_\theta]. 
\end{align}
We show in \appref{app:random-stabilizer} that 
\begin{equation}
\label{eq:random-stab}
    J(\psi^{\mathrm{C},(0)}) - \tilde q J^\perp(\psi^{\mathrm{C},(0)}) \succeq \frac{(1-2q)^2-q}{(1-2q)^2} J(\psi^{\mathrm{C},(0)}),
\end{equation}
which indicates that the noisy output state $\mE_\theta(\psi^{\mathrm{C},(0)})$ is a $(1-q,\frac{(1-2q)^2-q}{(1-2q)^2})$-approximately low-rank well-conditioned state. Note that when we use an input state 
$\ket{\psi^{\mathrm{C},(i)}} := U^{(i)}\ket{0^{\otimes n}}$ where $U^{(i)}$ is a Clifford unitary, we have 
\begin{equation}
    \mE_\theta(\psi^{\mathrm{C},(i)}) = U^{(i)} \mE'_\theta (\psi^{\mathrm{C},(0)}) U^{(i)\dagger},
\end{equation}
where $\mE'_\theta$ is equal to $\mE'_\theta$ with all Pauli operators $P_k$, both in the noise channel and the Hamiltonian evolution, replaced by $U^{(i)\dagger} P_k U^{(i)}$. \eqref{eq:random-stab} still holds under such a replacement, and we conclude that $\mE_\theta(\psi^{\mathrm{C},(i)})$ are $(1-q,\frac{(1-2q)^2-q}{(1-2q)^2})$-{approximately low-rank well-conditioned states} for all $i$'s. 
Applying \thmref{thm:unitary}, we know when $M$ is given by a $3$-design,  
\begin{equation}
    I(\psi^{\mathrm{C},(i)},M) \succeq \frac{d+2}{d+1}\frac{(1-q)^2(1-4q)}{2(2-q)(1-2q)^2}  J(\psi^{\mathrm{C},(i)}). 
\end{equation}
Then
\begin{align}
   &\quad I(\{p^{\mathrm{C},(i)},\psi^{\mathrm{C},(i)},\tM\}) = \sum_i p^{\mathrm{C},(i)} I(\psi^{\mathrm{C},(i)},M) 
\nonumber   \\ &\succeq \frac{d+2}{d+1}\frac{(1-q)^2(1-4q)}{2(2-q)(1-2q)^2} \sum_i p^{\mathrm{C},(i)} J(\psi^{\mathrm{C},(i)}) \nonumber \\
   &\succeq \frac{d+2}{d+1}\frac{(1-q)^2(1-4q)}{2(2-q)} \left(\frac{4d}{d+1}\id\right), \label{eq:stab-lower}
\end{align}
where the last line is proved in \appref{app:random-stabilizer}. The deviation observables in this case may not have simple expressions (see discussion in \appref{app:random-stabilizer}). 

To summarize, the discussion above (in particular,\eqref{eq:lower-bell} and \eqref{eq:stab-lower}) shows the near-optimality of the maximally entangled state and the random stabilizer state ensemble when $q$ is sufficiently small. We have the following corollary of \thmref{thm:unitary}. 
\begin{corollary}
    The maximally entangled state, or the random stabilizer state ensemble, together with any $3$-design measurement, forms a near-optimal protocol for the Hamiltonian estimation problem in \eqref{eq:hamt-est} at $\theta = 0$, even under the influence of sufficiently small Pauli noise. 
\end{corollary}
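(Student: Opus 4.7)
The plan is to verify that for both input choices, the noisy output state (or ensemble of such states) fits the hypothesis of \thmref{thm:unitary}, then chain the resulting lower bound on $I$ with the noiseless lower bound $w \geq (d^2-1)/4$ from \eqref{eq:w-lower} to conclude that the WMSE achieved is within a $q$-dependent constant factor of $w$. Concretely, I would show (i) $\mathcal{E}_\theta(\psi)$ is a $(1-q, c(q))$-approximately low-rank well-conditioned state in the sense of \defref{def:low-rank}, with derivatives of its eigenvalues vanishing at $\theta=0$ so \thmref{thm:unitary} applies verbatim; (ii) $J(\psi)$ for the noisy case is bounded below by a $q$-dependent constant times the noiseless QFIM; and (iii) averaging over the stabilizer ensemble preserves these bounds.

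For the maximally entangled input $\ket{\psi^{\mathrm{ME}}}$, I would rely on the fact that $\{(P_k\otimes\id)\ket{\psi^{\mathrm{ME}}}\}_k$ is an orthonormal basis, so that $\mathcal{E}_\theta(\psi^{\mathrm{ME}})$ is diagonal in this basis with dominant eigenvalue $1-q$. The low-rank projector $\Pi$ is rank one on $\ket{\psi^{\mathrm{ME}}_\theta}$, giving $\mu = 1-q$. Because the eigenvalues $q_k$ are $\theta$-independent, $J^p = 0$, and condition~(2) reduces to controlling $J^\perp(\psi^{\mathrm{ME}})$ from above and $J(\psi^{\mathrm{ME}})$ from below. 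Convexity of the QFIM together with the Pauli symmetry of the noiseless orbit yields $J^\perp \preceq 4\id$, while a direct computation of the QFIM of a classical mixture over Pauli-rotated unitary orbits (as in \eqref{eq:noisy-bell}) combined with $q_k\leq q$ gives $J(\psi^{\mathrm{ME}}) \succeq 4(1-2q)^2\id$. Taking $c(q) = ((1-2q)^2 - q)/(1-2q)^2$, which is positive for sufficiently small $q$, verifies the definition. Applying \thmref{thm:unitary} and taking traces of inverses yields $\Tr(I(\psi^{\mathrm{ME}},M)^{-1}) \leq \kappa(q)\cdot w$ with $\kappa(q) = O(1)$.

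For the random stabilizer ensemble $\{p^{\mathrm{C},(i)}, \psi^{\mathrm{C},(i)}\}$, I would first perform the same analysis for $\ket{0^{\otimes n}}$. The key structural input is that only the Pauli errors $P_k$ with $\tilde k \neq 0^n$ take the output outside the support of $\Pi$, giving a post-selection probability $\tilde q \leq q$ and enabling the bound \eqref{eq:random-stab} by a more careful decomposition of the mixture. Since every $\ket{\psi^{\mathrm{C},(i)}}$ is related to $\ket{0^{\otimes n}}$ by a Clifford unitary which merely relabels the Pauli noise terms, the same $(1-q, c(q))$ certification extends uniformly across the ensemble. \thmref{thm:unitary} then applies to each input; averaging with $p^{\mathrm{C},(i)}$ uses the 2-design identity \eqref{eq:use_two_design} adapted to the noisy QFIM (deferred to \appref{app:random-stabilizer}) to obtain $\sum_i p^{\mathrm{C},(i)} J(\psi^{\mathrm{C},(i)}) \succeq (4d/(d+1))\id$, and the final comparison to $w$ follows exactly as in the noiseless case.

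The main obstacle is condition~(2) of \defref{def:low-rank} for the stabilizer ensemble. Unlike the maximally entangled case, the restriction $\Pi_\perp \mathcal{E}_\theta(\psi^{\mathrm{C},(0)})\Pi_\perp$ is not simply a mixture of Pauli-rotated copies of $\ket{\psi^{\mathrm{C},(0)}_\theta}$ but a coarse-grained mixture indexed by the $\tilde k$-stripped Paulis, and some noise terms with $\tilde k = 0^n$ (diagonal $\sigma_z,\id$ strings) stay inside $\Pi$; one must carefully show these ``benign'' errors do not inflate $J^\perp$ relative to $J$ in a $d$-dependent way. A secondary technical obstacle is tracking the nonlinear dependence of the noisy QFIM on the noise rates through the characteristic $(q_0-q_k)^2/(q_0+q_k)$ factors when averaging over the Clifford ensemble; the desired $d$-independent constant survives only because these factors are uniformly bounded below by $(1-2q)^2$ for small $q$.
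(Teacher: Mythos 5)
Your proposal follows essentially the same route as the paper: certify that the noisy output is a $(1-q,\frac{(1-2q)^2-q}{(1-2q)^2})$-approximately low-rank well-conditioned state with $\theta$-independent eigenvalues, apply \thmref{thm:unitary}, bound the noisy QFIM below by $(1-2q)^2$ times its noiseless value, and average over the stabilizer ensemble via the $2$-design identity, with the hard step being exactly the condition-(2) verification for $\psi^{\mathrm{C},(0)}$ that the paper relegates to \appref{app:random-stabilizer}. The argument is correct and the obstacles you flag are precisely the ones the paper resolves in its appendices.
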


\section{Conclusions and outlook}

In this work, we showed randomized measurements are near-optimal for pure states and {approximately low-rank well-conditioned states}, and are weakly near-optimal for three families of rank-$r$ well-conditioned states. We also provide examples where randomized measurements are explicitly shown to be useful or proved to be suboptimal. 

Our work not only establishes randomized measurements are an important tool in multi-parameter quantum metrology but also highlights several open problems and new research directions. First, although we classified several important families of quantum states for which randomized measurements have the near-optimal performance, it still remains largely open in general which types of mixed states can be estimated with randomized measurements (weakly) near-optimally. Second, we mostly analyze quantum measurements given by $3$-designs and it is desirable to understand the performance of other types of randomized measurements that are even more experimentally friendly, e.g., tensor product of single-qudit $3$-designs (i.e., Pauli measurements in the qubit case), 
or approximate $3$-designs (which can be implemented in log-depth circuits~\cite{schuster2024random}). Third, it will be interesting to study generalization of our results to quantum channel estimation, instead of quantum state estimation, with optimization of input states also taken into consideration. Finally, our discussion is mostly restricted to local parameter estimation, and it is interesting to consider generalization to global parameter estimation, using e.g., Bayesian approach~\cite{gorecki2020pi} and hypothesis testing approach~\cite{meyer2023quantum}.

\section*{Data Availability}
Codes and data for this project are available at the following Github repository: \url{https://github.com/csenrui/random-measurement-metrology}.

\begin{acknowledgments}
We would like to thank Hsin-Yuan Huang for helpful discussions.
S.Z. acknowledges support from Perimeter Institute for Theoretical Physics, a research institute supported in part by the Government of Canada through the Department of Innovation, Science and Economic Development Canada and by the Province of Ontario through the Ministry of Colleges and Universities.
S.C. acknowledges support from ARO (W911NF-23-1-0077), AFOSR MURI (FA9550-21-1-0209), NTT Research, and the Institute for Quantum Information and Matter, an NSF Physics Frontiers Center (NSF Grant PHY-2317110).
S.C. thanks Perimeter Institute for their hospitality where part of this work was completed.
\end{acknowledgments}


\newpage 

\bibliography{refs}

\onecolumngrid
\newpage
\appendix


\setcounter{theorem}{0}
\setcounter{proposition}{0}
\setcounter{lemma}{0}
\setcounter{figure}{0}
\renewcommand{\thefigure}{S\arabic{figure}}
\renewcommand{\thelemma}{S\arabic{lemma}}
\renewcommand{\thetheorem}{S\arabic{theorem}}
\renewcommand{\thecorollary}{S\arabic{corollary}}
\renewcommand{\theproposition}{S\arabic{proposition}}
\renewcommand{\theHfigure}{Supplement.\arabic{figure}}
\renewcommand{\theHlemma}{Supplement.\arabic{lemma}}
\renewcommand{\theHtheorem}{Supplement.\arabic{theorem}}
\renewcommand{\theHcorollary}{Supplement.\arabic{corollary}}

\section{Related works on measurements for multi-parameter metrology}
\label{app:review}

Here we present a detailed comparison of some traditional measurement protocols and our $3$-design measurement protocol, as an extension of \tabref{tab:comparison} in the main text. (Note that we have listed only the most relevant protocols below; there are other useful measurements not included here.)

\begin{table}[h]
    \centering
    \begin{tabular}{c  c  c  c  c c}
            \toprule
              Measurements & Family of States & \makecell[c]{Cost Matrix\\ (Y/N)} & \makecell[c]{Collective \\(Y/N)} & \makecell[c]{~~~Computational~~~\\ Complexity for\\  Identifying POVMs} & \makecell[c]{~~~Implementation~~~\\ Complexity} \\\midrule
              $3$-design & \makecell[c]{Pure, Approximately \\ low-rank, Full-parameter \\
              mixed states (\tabref{tab:summary})} & N  & N  & None  & \makecell[c]{Efficient in \\ multi-qubit systems} \\ 
             \midrule             HCRB~\cite{holevo2011probabilistic,kahn2009local,yamagata2013quantum,yang2019attaining} & Mixed states & Y & \makecell[c]{Y\\ (Infinite copies)} & Unknown & Almost infeasible \\ \midrule
             HCRB~\cite{matsumoto2002new} & Pure states & Y & N & \makecell[c]{${\rm Poly}(d)$, \\ Semidefinite prog-\\
             ramming~\cite{albarelli2019evaluating}} & Unknown \\\midrule
             \makecell[c]{Direct optimization \\ \cite{nagaoka2005new}} &  Mixed states & Y & N & \makecell[c]{NP-hard, \\ Conic prog-\\
             ramming~\cite{hayashi2023tight}} & Unknown \\ \midrule 
             \makecell[c]{Nagaoka--Hayashi \\ \cite{nagaoka2005new,conlon2021efficient}} & \makecell[c]{Qubit states,\\Other examples} & Y & N & - & - \\\midrule
             \makecell[c]{Gill--Massar \\ \citep[Sec.VII]{gill2000state}} & Qubit states & N & N & - & - \\\midrule
             \makecell[c]{Fisher-symmetric \\ \cite{li2016fisher,vargas2024near}} & Pure states & N & N & $\leq {\rm Poly}(d)$ & Unknown \\\midrule             
             Haar~\cite{hayashi1998asymptotic} & \makecell[c]{Pure states}& N & N & None  & Almost infeasible \\\midrule
             $2$-design~\cite{zhu2018universally} & \makecell[c]{Pure states}& N & \makecell[c]{Y \\ (Two copies)} & None  & Platform-dependent \\\midrule
             $2$-design~\cite{zhu2018universally} & \makecell[c]{Full-parameter max-\\ imally mixed states}& N & N & None & \makecell[c]{Efficient in \\ multi-qubit systems} \\
             \bottomrule
        \end{tabular}   
    \caption{Comparison of different types of measurements for multi-parameter metrology. We compare them using multiple features including: (1)~Families of parameterized states, for which the measurements are  optimal (under certain metric). This describes the generality of the measurement protocol. (2)~Dependence on cost matrices, which determines whether the WMSE is minimized based on a specific cost matrix or the CFIM $I(M)$ is optimized to be as close to $J$ as possible. In the former case, the cost matrix must be specified before the measurement is selected and there is no guarantee of optimality under other cost matrices. (3)~Whether collective measurements are required, i.e., whether the POVM needs to be performed on more than one copies of the quantum state. Collective measurements are usually considered to be difficult to implement than individual measurements. (4)~Computational complexity to solve for optimal POVMs. Note that when measurements are state-independent, i.e., universally optimal, e.g., $3$-design, Haar and $2$-design, there is no need to solve for them and thus we mark the computational complexity here as ``None''. The computational complexity required for identifying the optimal POVM for the mixed-state HCRB is expected to be superpolynomial in $d$. Even though the value of HCRB is computable in ${\rm Poly}(d)$~\cite{albarelli2019evaluating}, the known optimal POVM needs to be performed on at least ${\rm Poly}(d)$ copies of quantum states and requires a classical computer of size at least superpolynomial in $d$ to store~\cite{kahn2009local} (For the Nagaoka--Hayashi and Gill--Massar bounds, only special types of states were known to have optimal measurements in explicit forms; and thus we do not specify the computational or implementation complexity in this table.) (5) Implementation complexity. The complexity to implement POVMs was rarely investigated in literature of HCRB (and its variants). The complexity to implement an arbitrary unitary operation is a $d$-dimensional system is $\Omega(d)$~\cite{nielsen2002quantum}. Thus, the implementation complexity of POVMs will be considered exponential in qubit number, unless with known simplifications. The implementation of two-copy measurements, though may be feasible in some physical platforms (e.g.,~\cite{hou2018deterministic}), can suffer from issues in synchronization, cross talks, and decoherence in general and thus is platform-dependent. 
    }
\end{table}

\section{Justification of the definition of weak near-optimality}
\label{app:weak-optimal}

In this appendix, we justify the definition of weak near-optimality (\defref{def:weak}). 

First, we note that it is impossible to define weak near-optimality without choosing a specific cost matrix as in \eqref{eq:def-weak-near-optimal-1}. If for some measurement $M$, there exists a constant $c > 0$ such that for \emph{all} cost matrices $W \succeq 0$, 
\begin{equation}
    \forall \tM, \;\; \trace(W\cdot I(M)^{-1}) \leq c \trace(W\cdot  I(\tM)^{-1}),
\end{equation}
then 
\begin{equation}
\exists\, c>0,\forall \tM, \;\; I(M) \succeq c I(\tM), 
\end{equation}
It then implies 
\begin{equation}
\label{eq:I>J}
    I(M) \succeq c J,
\end{equation}
because if \eqref{eq:I>J} is violated, there must be some specific parameter, which may be a linear combination of $\{\theta_i\}_i$, such that its CFI is smaller than $c$ times its QFI, violating the saturability of single-parameter CR bound (\eqref{eq:single}). 
Therefore, $M$ must be near-optimal. 

Second, we prove some meaningful properties of the weak near-optimality conditions. 
\begin{enumerate}[wide, labelwidth=!,itemindent=!,labelindent=0pt, leftmargin=0em, label={(\arabic*)}, parsep=0pt]
    \item It is independent of parameterization. To be specific, when we view $\rho_\theta$ as $\rho_\varphi$ which is a function of $\varphi$ where $\varphi$ is a new set of parameters defined by $\varphi_i = \sum_{j} A_{ij} \theta_j$ for some invertible 
matrix $A$. We have $A^T I(M) |_{\rho_\varphi} A = I(M)|_{\rho_\theta} $ and $A^T J |_{\rho_\varphi} A = J|_{\rho_\theta}$ and $\trace(J|_{\rho_\varphi} \cdot I(M)^{-1}|_{\rho_\varphi}) = \trace(J|_{\rho_\theta} \cdot I(M)^{-1}|_{\rho_\theta})$, showing \eqref{eq:def-weak-near-optimal-1}'s invariance under reparameterization. \eqref{eq:def-weak-near-optimal-2} is also invariant under reparameterization, noting that $\trace(J^{-1}|_{\rho_\varphi} \cdot I(M)|_{\rho_\varphi}) = \trace(J^{-1}|_{\rho_\theta} \cdot I(M)|_{\rho_\theta})$. 
\item 
It is a necessary condition of near-optimality. If a measurement is near-optimal, then $I(M) \succeq c J \succeq c I(\tM)$ for some $c$ any $\tM$ and \eqref{eq:def-weak-near-optimal-1} is satisfied. \eqref{eq:def-weak-near-optimal-2} also holds because 
\begin{equation}
    I(M) \succeq c \frac{\trace(J^{-1}J)}{\trace(J^{-1}J)} J \succeq c \frac{\trace(J^{-1}I(M))}{m} J. 
\end{equation}
\item It is a proper generalization of the near-optimality condition, i.e., \emph{for any state where near-optimal measurements exist, weakly near-optimal measurements are also near-optimal.} Suppose $\tM$ is near-optimal and $I(\tM) \succeq c J$, and $M$ is weakly near-optimal. Then $\trace(J I(M)^{-1}) \leq c_1 \trace(J I(\tM)^{-1}) \leq \frac{c_1}{c} m$ from \eqref{eq:def-weak-near-optimal-1}, and from \eqref{eq:def-weak-near-optimal-2},
\begin{equation}
    I(M) \succeq c_2 \frac{\trace(J^{-1}I(M))}{m} J \succeq c_2 \frac{m}{m \trace(J I(M)^{-1})} J \succeq \frac{c c_2}{c_1} J. 
\end{equation} 
Therefore, $M$ must be near-optimal. 
\item 
In single-parameter estimation (i.e., $m=1$), the weak near-optimality condition and the near-optimality condition are equivalent. In particular, when $m=1$, \eqref{eq:def-weak-near-optimal-1} is equivalent to the near-optimality condition and \eqref{eq:def-weak-near-optimal-2} is trivially true. Note that \secref{sec:no-go} contains an example of single-parameter estimation on mixed states where randomized measurements are proven to be not near-optimal, and they are also not weakly near-optimal according to this property.
\item It is a proper generalization of the Fisher-symmetry condition. The Fisher-symmetry condition describes the optimality of measurements for local state tomography and is defined only for full-parameter quantum states (i.e., when $m$ is maximal). Specifically, a POVM is called Fisher-symmetric~\cite{li2016fisher,zhu2018universally} if $I(M) \propto J$ and the GM bound is saturated at $\trace(J^{-1} I(M)) = d-1$. Or equivalently, a POVM is Fisher-symmetric if and only if 
\begin{equation}
    I(M) = \frac{d-1}{m} J,
\end{equation} 
which makes sure $M$ provides uniform and maximal information on all parameters.

Below we show 
the Fisher-symmetry condition is \emph{almost}\footnote{Note that in contrast to the Fisher-symmetry condition which defines the optimality of measurement in an exact fashion, the weak near-optimality is defined in an approximate fashion (i.e., we tolerate any constant-factor suboptimality). Therefore, we cannot hope for an exact equivalence between the Fisher-symmetry condition and the weak near-optimality condition. If we fix $c_1=c_2 =1$ in the definition of weak near-optimality, the equivalence will be exact.} equivalent to the weak near-optimality condition. 

We first show \emph{any Fisher-symmetric measurement is weakly near-optimal}. For full-parameter pure states, a measurement is Fisher-symmetric if and only if $I(M) = \frac{1}{2}J$~\cite{li2016fisher} and it is clearly near-optimal and thus weakly near-optimal. 
For full-parameter full-rank mixed states, a measurement is Fisher-symmetric if and only if $I(M) = \frac{1}{d+1}J$. It implies that $\trace(J I(\tM)^{-1})$ is minimized at $m^2/(d-1) = m(d+1)$ when $\tM = M$ from Cauchy--Schwarz (\eqref{eq:lower}). The first equality is satisfied because $I(M) \propto J$ and the second equality holds because $\trace(J^{-1} I(M)) = d-1$ and $m = d^2 - 1$. Thus, $\trace(J I(M)^{-1}) \leq  \trace(J I(\tM)^{-1})$ for any POVM $\tM$. Furthermore, $I(M) = \frac{\trace(J^{-1} I(M))}{m} J$ because $I(M) \propto J$. Thus $M$ is weakly near-optimal. 

Next, we show \emph{whenever ``nearly'' Fisher-symmetric measurements exist, weakly near-optimal measurements are also nearly Fisher-symmetric.} Here a measurement is called  nearly Fisher-symmetric if
\begin{equation}
\label{eq:nearly-f-s}
    I(M) \succeq c \frac{d - 1}{m} J
\end{equation} 
for some positive constant $c$ (as a relaxation of the exact case $I(M) = \frac{d - 1}{m} J$). 
For full-parameter full-rank mixed states, $m = d^2 - 1$. Let $\tM$ be a nearly Fisher-symmetric measurement, \eqref{eq:def-weak-near-optimal-1} implies $\trace(J I(M)^{-1}) \leq c_1 \trace(J I(\tM)^{-1}) \leq \frac{c_1}{c} (d+1)^2(d-1)$. Then 
\begin{equation}
    \trace(I(M) J^{-1}) \geq \frac{m^2}{\trace(J I(M)^{-1})} \geq \frac{c (d^2-1)^2}{c_1 (d+1)(d^2-1)} = \frac{c(d-1)}{c_1}, 
\end{equation}
and then from \eqref{eq:def-weak-near-optimal-2}, 
\begin{equation}
    I(M) \succeq c_2 \frac{\trace(I(M) J^{-1})}{m} J \succeq c_2 \frac{c(d-1)}{c_1 (d^2 -1)} J = \frac{c c_2(d-1)}{c_1 m} J. 
\end{equation}
The same analysis also works for full-parameter pure states. 

Finally, we remark that the discussion above also shows if we require $c_1 = c_2 = 1$ in the definition of weak-optimality (\defref{def:weak}), the Fisher-symmetry condition would be exactly equivalent to the weak near-optimality condition, whenever Fisher-symmetric measurements exist.

\end{enumerate}

\section{Proof of \texorpdfstring{\eqref{eq:K-bound}}{Eq.(59)}}
\label{app:K-bound}

Here we provide a proof of \eqref{eq:K-bound} that appear in \secref{sec:low-rank}.

Let $\vv$ be an arbitrary vector in $\bR^m$, we have 

\begin{align}
    \vv^T \tK \vv &= \sum_{jk:\lambda_j \text{\,or\,} \lambda_k \geq \mu} \left( \frac{2 \bra{\psi_j} \sum_i v_i \partial_i \rho \ket{\psi_k}}{\lambda_j+\lambda_k}  +  \frac{\sum_i v_i \partial_i p}{1-p} \delta_{jk} \right) \left( \frac{2 \bra{\psi_k} \sum_{i'} v_{i'} \partial_i \rho \ket{\psi_j}}{\lambda_j+\lambda_k}  +  \frac{\sum_{i'} v_{i'} \partial_{i'} p}{1-p} \delta_{kj} \right)\\
    &= \sum_{jk:\lambda_j \text{\,or\,} \lambda_k \geq \mu} 4\frac{ \abs{\bra{\psi_j} \sum_i v_i \partial_i \rho \ket{\psi_k}}^2}{(\lambda_j+\lambda_k)^2} + \frac{\sum_i v_i \partial_i p}{1-p} \sum_{j:\lambda_j \geq \mu} 2 \frac{ \bra{\psi_j} \sum_i v_i \partial_i \rho \ket{\psi_j}}{\lambda_j}  + d_\Pi \left(\frac{\sum_i v_i \partial_i p}{1-p}\right)^2 \\
    &= \sum_{jk:\lambda_j \text{\,or\,} \lambda_k \geq \mu} 4\frac{ \abs{\bra{\psi_j} \sum_i v_i \partial_i \rho \ket{\psi_k}}^2}{(\lambda_j+\lambda_k)^2} + 2\frac{\sum_i v_i \partial_i p}{1-p} \left( \sum_{i,j:\lambda_j \geq \mu} v_i \frac{ \partial_i \lambda_j}{\lambda_j} \right) + d_\Pi \left(\frac{\sum_i v_i \partial_i p}{1-p}\right)^2 \\
    &\leq \sum_{jk:\lambda_j \text{\,or\,} \lambda_k \geq \mu} 4\frac{ \abs{\bra{\psi_j} \sum_i v_i \partial_i \rho \ket{\psi_k}}^2}{(\lambda_j+\lambda_k)^2} + \left( \sum_{i,j:\lambda_j \geq \mu} v_i \frac{ \partial_i \lambda_j}{\lambda_j} \right)^2 + (d_\Pi+1) \left(\frac{\sum_i v_i \partial_i p}{1-p}\right)^2\\
    &\leq \sum_{jk:\lambda_j \text{\,or\,} \lambda_k \geq \mu} 4\frac{ \abs{\bra{\psi_j} \sum_i v_i \partial_i \rho \ket{\psi_k}}^2}{(\lambda_j+\lambda_k)^2} + \sum_{jk:\lambda_{j,k}\geq \mu} \frac{\big(\sum_i v_i\partial_i\lambda_j\big)^2}{\lambda_j\lambda_k} + (d_\Pi+1) \left(\frac{\sum_i v_i \partial_i p}{1-p}\right)^2 \\
    &\leq \frac{1}{\mu}\sum_{jk:\lambda_j \text{\,or\,} \lambda_k \geq \mu} 4\frac{ \abs{\bra{\psi_j} \sum_i v_i \partial_i \rho \ket{\psi_k}}^2}{\lambda_j+\lambda_k} + \frac{1}{\mu}\sum_{jk:\lambda_{j,k}\geq \mu} \frac{\big(\sum_i v_i\partial_i\lambda_j\big)^2}{\lambda_j} + (d_\Pi+1) \left(\frac{\sum_i v_i \partial_i p}{1-p}\right)^2\\
    &\leq \frac{2}{\mu} \vv^T  J \vv + \frac{1}{\mu} \vv^T  J \vv + \frac{p(d_\Pi + 1)}{1-p} \vv^T  J^p \vv \\
    &\leq \frac{3}{\mu} \vv^T  J \vv + \frac{d_\Pi + 1}{d_\Pi \mu} \vv^T  J^p \vv \leq \frac{5}{\mu} \vv^T  J \vv. 
\end{align}
where $d_\Pi = \trace(\Pi)$ is the dimension of the subspace $\Pi$. In the third and seventh line we use $\braket{\psi_j|\partial_i\rho|\psi_j} = \partial_i \lambda_j + \lambda_j (\partial_i \braket{\psi_j|\psi_j}) = \partial_i \lambda_j$, in the fourth and fifth line we use $2ab \leq a^2 + b^2$ for any $a,b \in \bR$ and in the eighth line we use $d_\Pi \mu \leq 1-p$, $(d_\Pi + 1)/d_\Pi \leq 2$, and $J^p \preceq J$. \eqref{eq:K-bound} is then proved. 

\section{Generalization of the GM bound}
\label{app:GM}

Here we review the proof of the GM bound~\cite{gill2000state} and extend it to all cases discussed in \thmref{thm:full}. 

To prove the GM bound (\eqref{eq:GM}), it is first important to notice that the GM quantity $\trace(J^{-1}I(M))$ is independent of reparameterization. That means, it is sufficient to prove the GM bound using one specific parameterization of our choice. Let $\rho_\theta = \sum_{k=1}^r \lambda_k \ket{\psi_k}\bra{\psi_k}$ where $\lambda_k > 0$ for $k \in \{1,2,\ldots,r\} =: [r]$. $\lambda_\ell = 0$ for $\ell \in \{r+1,r+2,\ldots,d\}$. (Note that the original proof~\cite{gill2000state} contains the cases of $r = 1,d$ and here we generalize the discussion to arbitrary $r$). We can, without loss of generality, assume the number of unknown parameters is maximal, because increase the number of parameters can only increase the GM quantity $\trace(J^{-1} I(M))$~\cite{gill2000state}. We choose the following complete set of traceless Hermitian operator basis, 
\begin{gather}
\rho_{,k\ell+} = \ket{\psi_k}\bra{\psi_\ell} + \ket{\psi_\ell}\bra{\psi_k},
\quad 
\rho_{,k\ell-} = i\ket{\psi_k}\bra{\psi_\ell} - i\ket{\psi_\ell}\bra{\psi_k},\quad \forall k<\ell,\;k\in[r],\ell\in[d],
\\
\rho_{,b} = \sum_{k=1}^r c_{bk} \ket{\psi_k}\bra{\psi_k},\quad \forall b = 1,\ldots ,r-1,
\end{gather}
where  
\begin{equation}
\sum_{k=1}^r c_{bk} = 0,\quad \sum_{k=1}^r \frac{1}{\lambda_k} c_{b'k}c_{bk} = \delta_{b'b},\quad \forall b = 1,\ldots ,r-1.  
\end{equation}
Further let $c_{rk} = \lambda_k$ for all $k=1,\cdots,r$. We have $C\Lambda^{-1}C^T = \id$ from above, where $C$ is a $r \times r$ real-valued matrix whose entries are given by $c_{bk}$ for $b,k \in [r]$ and $\Lambda^{-1}$ is a diagonal matrix whose diagonal entries are $\lambda_k^{-1}$. It then implies $\Lambda^{-1/2} C^T  C \Lambda^{-1/2} = \id$, i.e.,  
\begin{equation}
\label{eq:prop-c}
\sum_{b=1}^{r} \frac{c_{bk}c_{bk'}}{\sqrt{\lambda_k\lambda_{k'}}} = \sqrt{\lambda_k\lambda_{k'}} + \sum_{b=1}^{r-1} \frac{c_{bk}c_{bk'}}{\sqrt{\lambda_k\lambda_{k'}}} = \delta_{kk'}. 
\end{equation}
Consider a parameterization of $\rho_\theta$ such that $\partial_i \rho_\theta = \rho_{,i}$. The QFIM is 
\begin{equation}
J_{k\ell\pm,k'\ell'\pm'} = \frac{4}{\lambda_k+\lambda_\ell} \delta_{kk'}\delta_{\ell\ell'}\delta_{\pm\pm'},
\quad 
J_{k\ell\pm,b} = 0,\quad 
J_{b,b'} = \delta_{bb'},
\end{equation}
where $(k,\ell) \in [r] \times [d]$ and $b \in [r-1]$.
Since the CFIM $I(M)$ cannot decrease if we replace any measurement operator $M_1$ with its refinement $\{M_2,M_3\}$ where $M_1 = M_2 + M_3$, it is sufficient to consider a POVM with rank-one operators ($\ket{\phi_x}$ is unnormalized),  
\begin{equation}
M_x = \ket{\phi_x}\bra{\phi_x},\quad \ket{\phi_x} = \sum_{k=1}^d a_{x k}\ket{\psi_k}. 
\end{equation}
Then we have 
\begin{align}
\trace(J^{-1} I(M)) &= \sum_{x} \frac{1}{\braket{\phi_x|\rho_\theta|\phi_x}} \left( \sum_{\substack{k<\ell\\k,\ell\in[r]\times[d]}}\sum_{\pm} \frac{\lambda_k+\lambda_\ell}{4} \braket{\phi_x | \rho_{,k\ell\pm} | \phi_x}^2 + \sum_{b=1}^{r-1} \braket{\phi_x|\rho_{,b}|\phi_x}^2 \right)\\
&= \sum_{x} \frac{1}{\braket{\phi_x|\rho_\theta|\phi_x}} \left(\sum_{\substack{k<\ell\\k,\ell\in[r]\times[d]}} (\lambda_k + \lambda_\ell) \abs{a_{x k}}^2\abs{a_{x \ell}}^2 + \sum_{b=1}^{r-1} \bigg(\sum_{k} \abs{a_{x k}}^2 c_{bk}\bigg)^2\right)\\
&= \sum_{x} \frac{1}{\braket{\phi_x|\rho_\theta|\phi_x}} \left(\sum_{k,\ell=1}^r \abs{a_{x k}}^2 \abs{a_{x \ell}}^2  \lambda_k (1 - \lambda_\ell) + \sum_{k=1}^r\sum_{\ell=r+1}^d \abs{a_{x k}}^2 \abs{a_{x \ell}}^2  (\lambda_k + \lambda_\ell) \right)  
\\
&= \sum_{x}  \left(\sum_{\ell=1}^d  \abs{a_{x \ell}}^2  (1 - \lambda_\ell) \right) = \sum_{x} \trace((\id - \rho_\theta) M_x) = d - 1,
\end{align}
where in the first line we simply plug in the definition of $I(M)$ where $I(M)_{ij} = \sum_{x} \frac{\braket{\phi_x|\partial_{i}\rho_\theta|\phi_x}\braket{\phi_x|\partial_{j}\rho_\theta|\phi_x}}{\braket{\phi_x|\rho_\theta|\phi_x}} $, in the third line we use \eqref{eq:prop-c} to eliminate $c_{bk}$, and in the last line we use $\lambda_\ell = 0$ when $\ell \geq r+1$ and $\braket{\phi_x|\rho_\theta|\phi_x} = \sum_{k=1}^r \lambda_k \abs{a_{xk}}^2$.  This proves \eqref{eq:GM} in the main text. 

Consider the second case in \thmref{thm:full}, where $\partial_i \rho_\theta$ are restricted to the support of $\rho_\theta$. In this case, we exclude $\rho_{,k\ell\pm}$ from the derivative operators where $k\in[r]$ and $\ell\in [d]\backslash[r]$. We have 
\begin{align}
\trace(J^{-1} I(M)) &= \sum_{x} \frac{1}{\braket{\phi_x|\rho_\theta|\phi_x}} \left( \sum_{\substack{k<\ell\\k,\ell\in[r]\times[r]}}\sum_{\pm} \frac{\lambda_k+\lambda_\ell}{4} \braket{\phi_x | \rho_{,k\ell\pm} | \phi_x}^2 + \sum_{b=1}^{r-1} \braket{\phi_x|\rho_{,b}|\phi_x}^2 \right)\\
&= \sum_{x} \frac{1}{\braket{\phi_x|\rho_\theta|\phi_x}} \left(\sum_{\substack{k<\ell\\k,\ell\in[r]\times[r]}} (\lambda_k + \lambda_\ell) \abs{a_{x k}}^2\abs{a_{x \ell}}^2 + \sum_{b=1}^{r-1} \bigg(\sum_{k} \abs{a_{x k}}^2 c_{bk}\bigg)^2\right)\\
&= \sum_{x} \frac{1}{\braket{\phi_x|\rho_\theta|\phi_x}} \left(\sum_{k,\ell=1}^r \abs{a_{x k}}^2 \abs{a_{x \ell}}^2  \lambda_k (1 - \lambda_\ell)\right)  
\\
&= \sum_{x}  \left(\sum_{\ell=1}^r  \abs{a_{x \ell}}^2  (1 - \lambda_\ell) \right) = \sum_{x} \trace(\Pi(\id - \rho_\theta)\Pi M_x) = r - 1,
\end{align}
where $\Pi$ is the projector onto the support of $\rho_\theta$. This proves \eqref{eq:lower-2} in the main text.

Consider the third case in \thmref{thm:full}, where $\partial_i \rho_\theta$ are restricted outside the support of $\rho_\theta$. In this case, we exclude $\rho_{,k\ell\pm}$ and $\rho_{,b}$ from the derivative operators where $k,\ell,b\in[r]$. We have 
\begin{align}
\trace(J^{-1} I(M)) &= \sum_{x} \frac{1}{\braket{\phi_x|\rho_\theta|\phi_x}} \left( \sum_{\substack{k,\ell\in[r]\times([d]\backslash[r])}}\sum_{\pm} \frac{\lambda_k+\lambda_\ell}{4} \braket{\phi_x | \rho_{,k\ell\pm} | \phi_x}^2 \right)\\
&= \sum_{x} \frac{1}{\braket{\phi_x|\rho_\theta|\phi_x}} \left(\sum_{k=1}^r\sum_{\ell=r+1}^d \abs{a_{x k}}^2 \abs{a_{x \ell}}^2  (\lambda_k + \lambda_\ell)\right)  
\\
&= \sum_{x}  \left(\sum_{\ell=r+1}^d  \abs{a_{x \ell}}^2  \right) = \sum_{x} \trace((\id - \Pi) M_x) = d - r. 
\end{align}
This proves \eqref{eq:lower-3} in the main text.

\section{Concentration inequality for complex projective Haar measure}\label{app:spherical}

In this section, we prove a concentration inequality of the (complex projective) Haar measure, which is used to derive \eqref{eq:fid_mixed}. 
\begin{lemma} 
Let $\ket{\phi}$ be a fixed $d$-dimensional state. Let $\mu_s$ be the (complex projective) Haar measure. For any $x\in[0,1]$,
    \begin{equation}
        \Pr_{s\sim\mu_s}[\abs{\braket{s|\phi}}^2\le x] \le (d-1)x. 
    \end{equation}
\end{lemma}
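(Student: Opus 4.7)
The plan is to reduce this to a direct computation of the cumulative distribution function of $\tau_s := |\langle s|\phi\rangle|^2$ under the Haar measure, followed by an elementary inequality.

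First, by the unitary invariance of the Haar measure, I can assume without loss of generality that $\ket{\phi} = \ket{0}$, so that $\tau_s = |s_0|^2$ where $s = (s_0, s_1, \ldots, s_{d-1})$ is a Haar-random unit vector in $\mathbb{C}^d$. A standard fact about the uniform distribution on the complex unit sphere (which can be derived from the joint density of $(|s_0|^2,\ldots,|s_{d-1}|^2)$ being uniform on the probability simplex, or directly by parametrizing $s = (\cos\alpha \cdot e^{i\beta}, \sin\alpha \cdot \ket{u})$ with $\ket{u}$ a Haar-random state on $\mathbb{C}^{d-1}$) is that $\tau_s$ follows the Beta distribution $B(1, d-1)$ with density
\begin{equation}
    f(t) = (d-1)(1-t)^{d-2}, \qquad t \in [0,1].
\end{equation}

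Integrating this density, I obtain the CDF:
\begin{equation}
    \Pr_{s\sim\mu_s}[\tau_s \le x] = \int_0^x (d-1)(1-t)^{d-2}\,\mathrm{d}t = 1 - (1-x)^{d-1}.
\end{equation}

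Finally, to obtain the desired bound, I apply Bernoulli's inequality: since $d-1 \ge 1$ and $x \in [0,1]$, one has $(1-x)^{d-1} \ge 1 - (d-1)x$. This can be justified either by noting that $g(x) := (1-x)^{d-1}$ is convex on $[0,1]$ (for $d \ge 2$) and lies above its tangent line $1 - (d-1)x$ at $x=0$, or by induction on $d$. Rearranging yields $1 - (1-x)^{d-1} \le (d-1)x$, which completes the proof. The argument is essentially a one-line calculation once the Beta distribution of $\tau_s$ is known; the only minor subtlety is justifying the density formula, which I would either cite from standard references on Haar measure (e.g., \cite{watrous2018theory}) or derive in one line using the unitary invariance argument above.
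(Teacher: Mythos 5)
Your proof is correct and follows essentially the same route as the paper: both reduce to $\ket{\phi}=\ket{0}$ by unitary invariance, compute the exact CDF $\Pr[\tau_s\le x]=1-(1-x)^{d-1}$, and conclude via Bernoulli's inequality $(1-x)^{d-1}\ge 1-(d-1)x$. The only cosmetic difference is that you invoke the Beta$(1,d-1)$ density of $\tau_s$ directly, whereas the paper derives the same closed form from the Gaussian representation of Haar-random states and the ratio of independent $\chi^2$ variables.
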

\begin{proof}

    Let $\{\ket{0},\ket{1},\cdots,\ket{d-1}\}$ be an orthonormal basis for the $d$-dimensional Hilbert space.
    By the unitary invariance of (complex projective) Haar measure, we can take $\ket{\phi}=\ket{0}$. 
    The state $\ket s$ sampled from the (complex projective) Haar measure can be expressed by (see e.g.~\cite[Sec.~7.2]{watrous2018theory}):
    \begin{equation}
        \ket s := \frac{\ket{\tilde s}}{\sqrt{\braket{\tilde s|\tilde s}}},\quad \ket{\tilde s}:={\sum_{k=0}^{d-1}(X_k+iY_k)\ket{k}}.
    \end{equation}
    where $X_k$, $Y_k$ are independent standard normal random variables. Thus,
    \begin{equation}
    \begin{aligned}
        \abs{\braket{s|0}}^2 = \frac{\abs{\braket{\tilde s|0}}^2}{\braket{\tilde s|\tilde s}} = \frac{X_0^2+Y_0^2}{\sum_{k=0}^{d-1}(X_k^2+Y_k^2)}.
    \end{aligned}
    \end{equation}
    The probability we want to calculate can be computed as 
    \begin{equation}
    \begin{aligned}
        \Pr[\abs{\braket{s|0}}^2\le x] 
        &= \Pr\left[ \sum_{k=1}^{d-1}(X_k^2+Y_k^2)\ge(x^{-1}-1)(X_0^2+Y_0^2) \right]
        \\&= \int_0^{+\infty}\mathrm{d}\alpha\, \frac12 e^{-\alpha/2}\frac{\Gamma(d-1,\frac12(x^{-1}-1)\alpha)}{\Gamma(d-1)}
        \\&= 1 - (1-x)^{d-1}
        \\&\le (d-1)x.
    \end{aligned}
    \end{equation}
    Here we use the fact that $X_0^2+Y_0^2$ and $\sum_{k=1}^{d-1}(X_k^2+Y_k^2)$ are independent $\chi^2$ distributions with $2$ and $2(d-1)$ degrees of freedom, respectively~\cite{weisstein2002chi}.
    The probability density function and the (upper) cumulative distribution function of $\chi^2$ distributions with $k$ degrees of freedom are given by,
    \begin{equation}
        p(\alpha) = \frac{\alpha^{\frac k2-1}e^{-\alpha}}{2^{k/2}\Gamma(k/2)},\quad \int_{\alpha}^{+\infty}\mathrm{d}\beta \, p(\beta) = \frac{\Gamma(k/2,\alpha/2)}{\Gamma(k/2)},
    \end{equation}
    where $\Gamma(\star,\star)$ is the upper incomplete Gamma function and $\Gamma(\star)$ is the Gamma function. 
    This completes the proof.
\end{proof}

\section{Proof of \texorpdfstring{\eqref{eq:noisy-bell}}{Eq.(108)}}
\label{app:bell}

Here we prove \eqref{eq:noisy-bell} in the main text, i.e., 
\begin{equation}
    J(\psi^{\rm ME})_{ij} = J\left[(\mE_\theta\otimes \id)(\psi^{\rm ME})\right]_{ij} = 2\delta_{ij} \sum_{k:q_k + q_{k+i}>0} \frac{\abs{q_{k+i} - q_k}^2}{q_k + q_{k+i}}.
\end{equation}
First, we note that the eigenstates of $\rho_\theta: = (\mE_\theta\otimes \id)(\psi^{\rm ME})$ are 
\begin{equation}
    \ket{\psi_k} := P_k \otimes \id \ket{\psi^{\rm ME}}, 
\end{equation}
with eigenvalues $q_k$ for $k=0,\ldots,d^2-1$. 
Then
\begin{equation}
    J(\psi^{\rm ME})_{ij} = 2 \sum_{k\ell:q_k+q_\ell>0} \frac{\Re[\braket{\psi_k|\partial_i\rho_\theta| \psi_\ell}\braket{\psi_\ell|\partial_j\rho_\theta|\psi_k}]}{q_k + q_\ell}. 
\end{equation}
Note that 
\begin{align}
    \bra{\psi_k}\partial_i\rho_\theta\ket{\psi_\ell} 
    &= - \bra{\psi_k} \left(\sum_{k'} i q_{k'} P_{k'} P_i \ket{\psi_0}\bra{\psi_0}P_{k'} - i q_{k'} P_{k'}  \ket{\psi_0}\bra{\psi_0} P_i P_{k'} \right) \ket{\psi_\ell} \nonumber\\
    &= \sum_{k'} - i q_{k'} \eta_{k'i} \delta_{k,k'+i}\delta_{k'\ell} + i q_{k'} \eta_{k'i}^* \delta_{kk'} \delta_{i+k',\ell} = - \delta_{k+i,\ell} i q_\ell \eta_{\ell i} + \delta_{k+i,\ell} i q_k \eta_{ki}^*. 
\end{align}
Here $\eta_{ij}$ is defined through $P_{i+j} = \eta_{ij} P_iP_j$ and $\eta_{k,i} \in \{\pm 1,\pm i\}$ makes sure $P_{k+i} \in \{\id,\sigma_x,\sigma_y,\sigma_z\}^{\otimes n}$. $\eta_{ij} = \pm 1$ when $[P_i,P_j] = 0$ and $\eta_{ij} = \pm i$ when $[P_i,P_j] \neq 0$. We also use $\braket{\psi_0|P_iP_j|\psi_0} = \frac{1}{d}\trace(P_iP_j) = \delta_{ij}$ above. Then 
\begin{align}
    \Re[\braket{\psi_k|\partial_i\rho_\theta| \psi_\ell}\braket{\psi_\ell|\partial_j\rho_\theta|\psi_k}]
    &= \delta_{ij} \delta_{k+i,\ell} \abs{ q_\ell \eta_{\ell i} - q_k \eta_{ki}^*}^2.
\end{align}
We have 
\begin{equation}
    J(\psi^{\rm ME})_{ij} = 2 \delta_{ij}\sum_{k\ell:q_k+q_\ell>0} \frac{\delta_{k+i,\ell} \abs{ q_\ell \eta_{\ell i} - q_k \eta_{ki}^*}^2}{q_k + q_\ell} =  2 \delta_{ij} \sum_{k:q_k+q_{k+i}>0} \frac{\abs{ q_{k+i} \eta_{k+i,i} - q_k \eta_{ki}^*}^2}{q_k + q_{k+i}}. 
\end{equation}
Finally, note that 
\begin{equation}
    P_{k+i} P_k = (P_{k+i}P_i)(P_i P_k) = (\eta_{k+i,i} P_k)(\eta_{ik} P_{k+i}) = \eta_{k+i,i}\eta_{ik} P_k P_{k+i}. 
\end{equation}
When $[P_i,P_k] = 0$, we have $[P_{k+i},P_k] = 0$ and $\eta_{ik} = \pm 1 = - \eta_{k+i,i}$. When $[P_i,P_k] \neq 0$, we have $[P_{k+i},P_k] \neq 0$ and $\eta_{ik} = \pm i = \eta_{k+i,i}$. Also note that $\eta_{ik} = \eta_{ki}^*$. We have 
\begin{equation}
    J(\psi^{\rm ME})_{ij} =  2 \delta_{ij}\sum_{k:q_k+q_{k+i}>0} \frac{\abs{ q_{k+i} - q_k}^2}{q_k + q_{k+i}}. 
\end{equation}

\section{Proofs of \texorpdfstring{\eqref{eq:random-stab}}{Eq.(114)} and \texorpdfstring{\eqref{eq:stab-lower}}{Eq.(117)}, Computing deviation observables}
\label{app:random-stabilizer}

Here we first prove \eqref{eq:random-stab} in the main text, i.e., 
\begin{equation}
    J(\psi^{\mathrm{C},(0)}) - \tilde q J^\perp(\psi^{\mathrm{C},(0)}) \succeq \frac{(1-2q)^2-q}{(1-2q)^2} J(\psi^{\mathrm{C},(0)})
\end{equation}

We first compute $J(\psi^{\mathrm{C},(0)})$, given by
\begin{align}
    J(\psi^{\mathrm{C},(0)})_{jj'} &= J\left[\mE_\theta(\psi^{\mathrm{C},(0)})\right] = J\left[\sum_{k=0}^{d^2-1} q_k P_k U_\theta \psi^{\mathrm{C},(0)} U_\theta^\dagger P_k\right]_{jj'} \nonumber\\
    &= \sum_{\substack{x,y\in\{0,1\}^{n}\\ \tq_{x} + \tq_{y}>0}} \frac{2\Re[\bra{x}\partial_j\rho_\theta\ket{y}\bra{y}\partial_{j'}\rho_\theta\ket{x}]}{\tq_{x} + \tq_{y}},
\label{eq:noisy-random}
\end{align}
where $\rho_\theta := \mE_\theta(\psi^{\mathrm{C},(0)})$ and $\ket{x} = \ket{x_1,x_2,\ldots,x_n}$ are the computational basis, as eigenstates of $\rho_\theta$ at $\theta = 0$. $\tq_x := \sum_{k:\tk = \tilde{x}} q_k$ are the corresponding eigenvalues. 
Note that
\begin{equation}
\bra{x}\partial_j\rho_\theta\ket{y} =\sum_k q_k \left( -i\bra{x}P_kP_j\ket{0^{\otimes n}}\bra{0^{\otimes n}}P_k\ket{y}+i\bra{x}P_k\ket{0^{\otimes n}}\bra{0^{\otimes n}}P_jP_k\ket{y} \right)    
\end{equation}
is non-zero only when $\tj = \tilde{x} + \tilde{y}$. This implies $J(\psi^{\mathrm{C},(0)})$ must be block-diagonal in the sense that $J(\psi^{\mathrm{C},(0)})_{jj'} = J(\psi^{\mathrm{C},(0)})_{jj'} \delta_{\tj\tj'}$. 
Below we will use a matrix $B$ to lower bound $J(\psi^{\mathrm{C},(0)})$ where $J(\psi^{\mathrm{C},(0)}) \succeq B$, and 
\begin{equation}
B_{jj'} := \delta_{\tj\tj'} \left( \frac{2\Re[\bra{0^{\otimes n}}\partial_j\rho_\theta\ket{\tj}\bra{\tj}\partial_{j'}\rho_\theta\ket{0^{\otimes n}}]}{\tq_{0} + \tq_{\tj}} + \frac{2\Re[\bra{\tj}\partial_j\rho_\theta\ket{0^{\otimes n}}\bra{0^{\otimes n}}\partial_{j'}\rho_\theta\ket{\tj}]}{\tq_{0} + \tq_{\tj}} \right),    
\label{eq:B}
\end{equation}
where we specifically pick out two terms $(x,y) = (0,\tj)$ and $(x,y) = (\tj,0)$ from $J(\psi^{\mathrm{C},(0)})$. On the other hand, we have 
\begin{equation}
    J^\perp(\psi^{\mathrm{C},(0)}) \preceq J[\psi^{\mathrm{C},(0)}_\theta],\quad 
    J[\psi^{\mathrm{C},(0)}_\theta]_{jj'} = 4 ( \bra{0^{\otimes n}}\!\frac{1}{2}\{P_j,P_{j'}\}\!\ket{0^{\otimes n}} - \bra{0^{\otimes n}}\!P_j\!\ket{0^{\otimes n}}\bra{0^{\otimes n}}\!P_{j'}\!\ket{0^{\otimes n}} ).
\end{equation}

Our goal is then to show 
\begin{equation}
\label{eq:stab-proof}
    B - \tilde q J[\psi^{\mathrm{C},(0)}_\theta] \succeq \frac{(1-2q)^2-q}{(1-2q)^2} B. 
\end{equation}
Both $B_{jj'}$ and $J[\psi^{\mathrm{C},(0)}_\theta]_{jj'}$ are zero when $\tj \neq \tj'$. Therefore, it is sufficient to show \eqref{eq:stab-proof} for blocks of matrices with the same $\tj$, i.e,
\begin{equation}
\label{eq:stab-proof-1}
    \projP_{\tj} B \projP_{\tj} - \tilde q \projP_{\tj} J[\psi^{\mathrm{C},(0)}_\theta] \projP_{\tj} \succeq \frac{(1-2q)^2-q}{(1-2q)^2} \projP_{\tj} B \projP_{\tj}. 
\end{equation}
for all $\tj \in \{0,1\}^n$, where the projector $\projP_{\tj} = \sum_{j':\tj'=\tj} \ket{j'}\bra{j'}$. 
Note that when $\projP_0 B \projP_0 = \projP_0 J[\psi^{\mathrm{C},(0)}_\theta] \projP_0 = 0$. Therefore, we assume $\tj \neq 0^n$, and prove \eqref{eq:stab-proof-1} below. We have  
\begin{align}
\bra{0^{\otimes n}}\partial_j\rho_\theta\ket{\tj} 
&=\sum_{k,\tk=\tj} q_k \left( -i\bra{0^{\otimes n}}P_kP_j\ket{0^{\otimes n}}\bra{0^{\otimes n}}P_k\ket{\tj}\right) + \sum_{k,\tk=0} q_k \left(i\bra{0^{\otimes n}}P_k\ket{0^{\otimes n}}\bra{0^{\otimes n}}P_jP_k\ket{\tj} \right)\nonumber\\
&= \sum_{k,\tk=\tj} q_k \left( -i (-1)^{\nu_k}i^{\nu_j} \right) + \sum_{k,\tk=0} q_k \left(i(-i)^{\nu_j}(-1)^{z_k\cdot\tj} \right) \nonumber \\
&= q_0 i(-i)^{\nu_j} \left(  1 + (-1)^{\nu_j+1} \sum_{k,\tk=\tj} \frac{q_k}{q_0}   (-1)^{\nu_k}  + \sum_{k:k\neq0,\tk=0} \frac{q_k}{q_0} (-1)^{z_k\cdot\tj}  \right) \nonumber\\
&= q_0 i \braket{0^{\otimes n}|P_j|\tj} \left(  1 + (-1)^{\nu_j+1} \sum_{k,\tk=\tj} \frac{q_k}{q_0}   (-1)^{\nu_k}  + \sum_{k:k\neq0,\tk=0} \frac{q_k}{q_0} (-1)^{z_k\cdot\tj}  \right)\nonumber\\
&=: q_0 i\braket{0^{\otimes n}|P_j|\tj} \left(  1 + (-1)^{\nu_j+1} g_1(\tj) + g_2(\tj) \right), \text{~~~where~}i\braket{0^{\otimes n}|P_j|\tj} = (-i)^{\nu_j}.
\label{eq:coeff}
\end{align}
Here $\nu_k$ denotes the number of Pauli Y's in $P_k$ and $z_k$ is an $n$-bit string whose elements are $1$ at locations where $P_k$ is Pauli Z and $0$ otherwise. $\abs{g_{1}(\tj)} + \abs{g_{2}(\tj)} \leq \frac{q}{q_0}$. Then 
\begin{align}
    B_{jj'} &= \delta_{\tj\tj'} \frac{4q_0^2 \Re[\braket{0^{\otimes n}|P_j|\tj}\braket{\tj|P_{j'}|0^{\otimes n}}] \left(  1 + (-1)^{\nu_j+1} g_1(\tj) + g_2(\tj) \right)  \left(  1 + (-1)^{\nu_{j'}+1} g_1(\tj) + g_2(\tj) \right)}{\tq_{0} + \tq_{\tj}}  \label{eq:exp-B} \\
    &= J[\psi^{\mathrm{C},(0)}_\theta]_{jj'} \delta_{\tj\tj'}  \frac{q_0^2\left(  1 + (-1)^{\nu_j+1} g_1(\tj) + g_2(\tj) \right)  \left(  1 + (-1)^{\nu_{j'}+1} g_1(\tj) + g_2(\tj) \right)}{\tq_{0} + \tq_{\tj}} 
\end{align}
When $\nu_j \neq \nu_{j'}$, the numbers of Pauli Y's in $P_j$ and $P_{j'}$ do not have the same parity. Plus that $\tilde j =\tilde {j'}$, we have $\{P_j,P_{j'}\} = 0$ and $B_{jj'} = J[\psi^{\mathrm{C},(0)}_\theta]_{jj'} = 0$. Therefore, we have 
\begin{align}
    B_{jj'} &= J[\psi^{\mathrm{C},(0)}_\theta]_{jj'} \delta_{\tj\tj'}\delta_{\nu_j\nu_{j'}} \frac{q_0^2\left(  1 + (-1)^{\nu_j+1} g_1(\tj) + g_2(\tj) \right)^2}{\tq_{0} + \tq_{\tj}} \nonumber\\ 
    &\geq 
    J[\psi^{\mathrm{C},(0)}_\theta]_{jj'} \delta_{\tj\tj'}\delta_{\nu_j\nu_{j'}} \frac{q_0^2 (1 - q/q_0)^2}{\tq_{0} + \tq_{\tj}} \geq J[\psi^{\mathrm{C},(0)}_\theta]_{jj'} \delta_{\tj\tj'}\delta_{\nu_j\nu_{j'}} (1-2q)^2.
\end{align}
Therefore,
\begin{equation}
    B \succeq J[\psi^{\mathrm{C},(0)}_\theta] (1-2q)^2. 
\end{equation}
This implies \eqref{eq:stab-proof}, noting that $\tq \leq q$. 

\medskip

The above implies
\begin{equation}
    J(\psi^{\mathrm{C},(0)}) \succeq J[\psi^{\mathrm{C},(0)}_\theta] (1-2q)^2,
\end{equation}
and this extends to all random stabilizer input states as explained in the main text. Then 
\begin{equation}
    \sum_i p^{\mathrm{C},(i)} J(\psi^{\mathrm{C},(i)}) \succeq \sum_i p^{\mathrm{C},(i)} J[\psi^{\mathrm{C},(i)}_\theta] (1-2q)^2 = \frac{4d}{d+1}(1-2q)^2 \id,
\end{equation}
which proves the last line in \eqref{eq:stab-lower}, where we use 
\begin{multline}
    \sum_i p^{\mathrm{C},(i)} J[\psi^{\mathrm{C},(i)}_\theta]_{jj'} = \\ 4 \sum_i p^{\mathrm{C},(i)} ( \bra{0^{\otimes n}}\!U^{(i)\dagger}\frac{1}{2}\{P_j,P_{j'}\}U^{(i)}\!\ket{0^{\otimes n}} - \bra{0^{\otimes n}}\!U^{(i)\dagger}P_jU^{(i)}\!\ket{0^{\otimes n}}\bra{0^{\otimes n}}\!U^{(i)\dagger}P_{j'}U^{(i)}\!\ket{0^{\otimes n}} )  = \frac{4d}{d+1} \delta_{jj'}. 
\end{multline}
Here we use the $2$-design property of the Clifford group. The same derivation has appeared in~\eqref{eq:use_two_design}.

\medskip
Finally, we explain how to compute the deviation observables. They are 
\begin{align}
\label{eq:dev-ob-2}
X_j(\psi^{\mathrm{C},(i)};\theta) 
= \sum_{k} (\tJ(\rho^{\mathrm{C}}_{\rm{RP}})^{-1})_{jk} \tL_k(\psi^{\mathrm{C},(i)}).
\end{align}
When $\psi^{\mathrm{C},(i)} = \ket{0^{\otimes n}}\bra{0^{\otimes n}}$, 
\begin{gather}
\label{eq:tJ-0}
\tJ(\psi^{\mathrm{C},(0)})_{jj'} = B_{jj'} \text{~~~(details in \eqref{eq:B}, \eqref{eq:coeff})},\\
\label{eq:tL-0}
\tL_j(\psi^{\mathrm{C},(0)}) = \frac{2}{\tq_{0} + \tq_{j}}  \left( {\bra{0^{\otimes n}}\partial_j\rho_\theta\ket{\tj}}\ket{0^{\otimes n}}\bra{\tj} + {\bra{\tj}\partial_{j}\rho_\theta\ket{0^{\otimes n}}}\ket{\tj}\bra{0^{\otimes n}} \right) \text{~~~(details in \eqref{eq:coeff})},
\end{gather}
Generally, when $\ket{\psi^{\mathrm{C},(i)}} = U^{(i)\dagger}\ket{0^{\otimes n}}$, $\tJ(\psi^{\mathrm{C},(i)})_{jj'}$ and $\tL_j(\psi^{\mathrm{C},(i)})$ can be calculated again using \eqref{eq:tJ-0} and \eqref{eq:tL-0} with the following transformations:
\begin{enumerate}[wide, labelwidth=!,itemindent=!,labelindent=0pt, leftmargin=0em, label={(\arabic*)}, parsep=0pt]
    \item $\tk$ is an $n$-bit string where the $l$-th number $\tk_l$ is equal to 1 if $U^{(i)\dagger} P_k U^{(i)}$ on the $l$-th qubit is $\sigma_x$ or $\sigma_y$, and $\tk_l$ is equal to 0 if $U^{(i)\dagger} P_k U^{(i)}$ on the $l$-th qubit is $\sigma_z$ or $\id$ and $\tq_x = \sum_{k:\tk = \tilde{x}} q_k$ accordingly. 
    \item Similarly in \eqref{eq:coeff}, $\nu_k$ denotes the number of Pauli Y's in $U^{(i)\dagger} P_k U^{(i)}$ and $z_k$ is an $n$-bit string whose elements are $1$ at locations where $U^{(i)\dagger} P_k U^{(i)}$ is Pauli Z and $0$ otherwise.
    \item $\ket{0^{\otimes n}}\bra{\tj}$ and $\ket{\tj}\bra{0^{\otimes n}}$ in \eqref{eq:tL-0} should be replaced with $U^{(i)}\ket{0^{\otimes n}}\bra{\tj}U^{(i)\dagger}$ and $U^{(i)}\ket{\tj}\bra{0^{\otimes n}}U^{(i)\dagger}$. 
\end{enumerate}

Then deviation observables (\eqref{eq:dev-ob-2}) can be computed using \eqref{eq:tJ-0}, \eqref{eq:tL-0} and $\tJ(\rho^{\mathrm{C}}_{\rm{RP}}) = \bE_i[\tJ(\psi^{\mathrm{C},(i)})]$. 

Let us consider a special case of global depolarizing noise, where $q_0 = 1-q$ and $q_k = \frac{q}{d^2 - 1}$ for all $k \neq 0$. When the input state is $\ket{0^{\otimes n}}$, we have for all $\tj \neq 0$, 
\begin{gather}
    g_1(\tj) = \sum_{k,\tk=\tj} \frac{q_k}{q_0}   (-1)^{\nu_k} = 0 ,\quad 
    g_2(\tj) = \sum_{k:k\neq0,\tk=0} \frac{q_k}{q_0} (-1)^{z_k\cdot\tj} = -\frac{q}{(d^2 -1)(1-q)},\\
    \bra{0^{\otimes n}}\partial_j\rho_\theta\ket{\tj}  = q_0 i (-i)^{\nu_j} \left(  1 + (-1)^{\nu_j+1} g_1(\tj)  + g_2(\tj)  \right) = i (-i)^{\nu_j} \left(1-q + \frac{q}{d^2 -1}\right),\\
\tJ(\psi^{\mathrm{C},(0)})_{jj'} = B_{jj'} = 4\delta_{\tj\tj'}\delta_{\nu_j+\nu_{j'}\text{ is even}}\frac{\left(1-q + \frac{q}{d^2 -1}\right)^2}{1-q + \frac{q(2d - 1)}{d^2 - 1}} (-1)^{\nu_j} (-1)^{\frac{\nu_j+\nu_{j'}}{2}},    \\
~~\Rightarrow~~  \tJ(\psi^{\mathrm{C},(0)})_{jj'} = \frac{\left(1-q + \frac{q}{d^2 -1}\right)^2}{1-q + \frac{q(2d - 1)}{d^2 - 1}} J[\psi^{\mathrm{C},(0)}_\theta]_{jj'}, \label{eq:G22}\\
\tL_j(\psi^{\mathrm{C},(0)}) = \frac{1-q + \frac{q}{d^2 -1}}{1 - q + \frac{q(2d-1)}{d^2-1}}  L_j[\psi^{\mathrm{C},(0)}_\theta]. \label{eq:G23}
\end{gather}
\eqref{eq:G22} and \eqref{eq:G23} show $\tJ$ and $\tL$ are proportional to their noiseless versions. The same holds when input state is any stabilizer state $U^{\mathrm{C},(i)}\ket{0^{\otimes n}}$. Therefore, for global depolarizing noise, we have 
\begin{align}
X_j(\psi^{\mathrm{C},(i)};\theta) 
= \frac{i(d+1)}{2d(1-q + \frac{q}{d^2-1})} \big[\ket{\psi^{\mathrm{C},(i)}_\theta}\!\bra{\psi^{\mathrm{C},(i)}_\theta}, P_j\big],
\end{align}
which in the noiseless case (i.e., $q = 0$) equal to \eqref{eq:X-C}. 
In general, it is difficult to obtain closed-form expressions of deviation observables for random stabilizer states under Pauli noise because $\tJ(\rho^{\mathrm{C}}_{\rm{RP}})$ may not be diagonal. 
\end{document}